\newcommand{\bra}[1]{\langle#1|}
\newcommand{\ket}[1]{|#1\rangle}
\newcommand{\braket}[2]{\langle#1|#2\rangle}
\newcommand{\ketbra}[2]{\ket{#1}\!\bra{#2}}
\newcommand{\abs}[1]{\lvert #1\rvert}
\newcommand{\norm}[1]{\| #1 \|}
\newcommand{\eps}{{\epsilon}}
\newcommand{\ADV} {\mathrm{Adv}}
\newcommand{\gateset}{\mathcal{S}}
\DeclareMathOperator{\MAJ}{{\operatorname{MAJ}_3}}
\DeclareMathOperator{\EQUAL}{{\operatorname{EQUAL}}}
\DeclareMathOperator{\EXACT}{{\operatorname{EXACT}}}
\def\algorithm {{\mathcal{ALG}}}
\def\adjoint{\dagger}
\def\H {A}
\def\o {o}
\def\Deltam {{\overline{\Delta}}}
\def\Pim {{\overline{\Pi}}}
\def\VV {\widehat{V}}
\newcommand{\ri} {r} 
\newcommand{\rj} {\tilde{r}}
\newcommand{\si} {s}
\newcommand{\sj} {\tilde{s}}
\newcommand{\Ui} {z} 
\newcommand{\Uj} {\tilde{z}}
\newcommand{\abi} {\sigma} 
\newcommand{\abj} {\tilde{\sigma}}
\def\ACJ {{C}} 
\def\cp {{{y}}} 
\def\cm {{\bar{y}}} 
\def\cpp {{{Y}}} 
\def\cmm {{\bar{Y}}} 
\def\dcm {{(\Deltam \cm)}} 
\def\dcmm {{\cmm_\Deltam}} 
\def\DeltaXDelta  {{X_\Delta}} 
\def\DeltamXDeltam  {{X_\Deltam}} 
\def\XDeltaSchur {{( X / \DeltaXDelta )}}
\def\resinv {{-1}} 
\DeclareMathOperator{\Span}{\operatorname{Span}}
\DeclareMathOperator{\Range}{\operatorname{Range}}
\DeclareMathOperator{\poly}{\operatorname{poly}}
\DeclareMathOperator{\wsizeop}{{\operatorname{wsize}}}		
\newcommand{\wsize}[1]{{\wsizeop({#1})}}
\newcommand{\wsizex}[2]{{\wsizeop({#1},{#2})}}
\newcommand{\wsizexS}[3]{{\wsizeop_{#3}({#1},{#2})}}
\newcommand{\hugelpar}[1]{\left(\vbox to #1{}\right.}
\newcommand{\hugerpar}[1]{\left.\vbox to #1{}\right)}
\newcounter{sprows}
\newcounter{spcols}
\newlength{\spheight}
\newlength{\spraise}
\newcommand{\spleft}[2][0pt]{\multirow{\value{sprows}}{*}{%
	\vbox to \spraise{\vss\hbox{$#2 \hugelpar{\spheight}\hskip -#1$}\vss}}}
\newcommand{\spright}[2][0pt]{\multirow{\value{sprows}}{*}{%
	\vbox to \spraise{\vss\hbox{\hskip -#1 $\hugerpar{\spheight} #2$}\vss}}}
\newcommand{\comment}[1]{\emph{\color{blue}Comment:\color{black} #1}} 
\newlength{\commentslength}
\newcommand{\comments}[1]{
\hspace{-2\parindent}
\addtolength{\commentslength}{-\commentslength}
\addtolength{\commentslength}{\linewidth}
\addtolength{\commentslength}{-\parindent}
\fcolorbox{blue}{white}{\smallskip\begin{minipage}[c]{\commentslength}
\emph{Comments:}\begin{itemize}#1\end{itemize}\end{minipage}}\bigskip
}
\renewcommand{\comment}[1]{}\renewcommand{\comments}[1]{}
\newcommand{\rem}[1]{}
\newtheorem{theorem}{Theorem}[section]
\newtheorem{lemma}[theorem]{Lemma}
\newtheorem{claim}[theorem]{Claim}
\newtheorem{definition}[theorem]{Definition}
\newtheorem{remark}[theorem]{Remark}
\newtheorem{example}[theorem]{Example}
\newfont{\subsubsecfnt}{ptmri8t at 10pt}
\renewcommand{\subparagraph}[1]{\smallskip{\subsubsecfnt #1.}}
\numberwithin{equation}{section} 
\newcommand{\eqnref}[1]{\hyperref[#1]{{(\ref*{#1})}}}
\newcommand{\thmref}[1]{\hyperref[#1]{{Theorem~\ref*{#1}}}}
\newcommand{\lemref}[1]{\hyperref[#1]{{Lemma~\ref*{#1}}}}
\newcommand{\corref}[1]{\hyperref[#1]{{Corollary~\ref*{#1}}}}
\newcommand{\defref}[1]{\hyperref[#1]{{Definition~\ref*{#1}}}}
\newcommand{\secref}[1]{\hyperref[#1]{{Section~\ref*{#1}}}}
\newcommand{\figref}[1]{\hyperref[#1]{{Figure~\ref*{#1}}}}
\newcommand{\tabref}[1]{\hyperref[#1]{{Table~\ref*{#1}}}}
\newcommand{\remref}[1]{\hyperref[#1]{{Remark~\ref*{#1}}}}
\newcommand{\appref}[1]{\hyperref[#1]{{Appendix~\ref*{#1}}}}
\newcommand{\claimref}[1]{\hyperref[#1]{{Claim~\ref*{#1}}}}
\newcommand{\exampleref}[1]{\hyperref[#1]{{Example~\ref*{#1}}}}
\begin{document}

\title{Span-program-based quantum algorithm for evaluating formulas}
\author{%
Ben W.~Reichardt%
  \thanks{School of Computer Science and Institute for Quantum Computing, University of Waterloo.  Part of the work conducted while at the Institute for Quantum Information, California Institute of Technology, supported by NSF Grants CCF-0524828 and PHY-0456720, and by ARO Grant W911NF-05-1-0294.  Email: {\tt breic@uwaterloo.ca}}
\and
Robert \v Spalek%
  \thanks{Google, Inc.
  Part of the work conducted while at the University of California, Berkeley, supported by NSF Grant CCF-0524837 and ARO Grant DAAD 19-03-1-0082.  Email: {\tt spalek@google.com}}
}
\date{}

\maketitle

\begin{abstract}
We give a quantum algorithm for evaluating formulas over an extended gate set, including all two- and three-bit binary gates (e.g., NAND, 3-majority).  The algorithm is optimal on read-once formulas for which each gate's inputs are balanced in a certain sense.

The main new tool is a correspondence between a classical linear-algebraic model of computation, ``span programs," and weighted bipartite graphs.  A span program's evaluation corresponds to an eigenvalue-zero eigenvector of the associated graph.  A quantum computer can therefore evaluate the span program by applying spectral estimation to the graph.  

For example, the classical complexity of evaluating the balanced ternary majority formula is unknown, and the natural generalization of randomized alpha-beta pruning is known to be suboptimal.  In contrast, our algorithm generalizes the optimal quantum AND-OR formula evaluation algorithm and is optimal for evaluating the balanced ternary majority formula.  
\end{abstract}

%
%

\section{Introduction}
\nobreak

A formula $\varphi$ on gate set $\gateset$ and of size $N$ is a tree with $N$ leaves, such that each internal node is a gate from $\gateset$ on its children.  
The read-once formula evaluation problem is to evaluate $\varphi(x)$ given oracle access to the input string $x = x_1 x_2 \ldots x_N$.  
An optimal, $O(\sqrt N)$-query quantum algorithm is known to evaluate ``approximately balanced" formulas over the gates $\gateset = \{ \text{AND, OR, NOT} \}$~\cite{AmbainisChildsReichardtSpalekZhang07andor}.  
We extend the gate set $\gateset$.  
We develop an optimal quantum algorithm for evaluating balanced, read-once formulas over a gate set $\gateset$ that includes arbitrary three-bit gates, as well as bounded fan-in EQUAL gates and bounded-size $\{ \text{AND, OR, NOT, PARITY} \}$ formulas considered as single gates.   The correct notion of ``balanced" for a formula including different kinds of gates turns out to be ``adversary-balanced," meaning that the inputs to a gate must have exactly equal adversary lower bounds.  
The definition of ``adversary-balanced" formulas also includes as a special case layered formulas in which all gates at a given depth from the root are of the same type.  

The idea of our algorithm is to consider a weighted graph $G(\varphi)$ obtained by replacing each gate of the formula $\varphi$ with a small gadget subgraph, and possibly also duplicating subformulas.  \figref{f:example} has several examples.
We relate the evaluation of $\varphi$ to the presence or absence of small-eigenvalue eigenvectors of the weighted adjacency matrix $A_{G(\varphi)}$ that are supported on the root vertex of $G(\varphi)$.  The quantum algorithm runs spectral estimation to either detect these eigenvectors or not, and therefore to evaluate $\varphi$.  

As a special case, for example, our algorithm implies: 
\begin{theorem} \label{t:balancedmajority}
A balanced ternary majority $(\MAJ)$ formula of depth $d$, on $N = 3^d$ inputs, can be evaluated by a quantum algorithm with bounded error using $O(2^d)$ oracle queries, which is optimal.  
\end{theorem}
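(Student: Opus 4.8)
The statement bundles an algorithmic upper bound with a matching adversary lower bound, and the plan is to prove each half by reducing the depth-$d$ formula $\varphi_d$ to the behaviour of a single $\MAJ$ gate under composition.

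For the upper bound I would realize $\varphi_d$ as a special case of the general span-program algorithm developed in this paper. First I would write down a span program computing a single $\MAJ$ gate --- equivalently, the small weighted bipartite graph gadget that replaces a $\MAJ$ node --- and tune its edge weights so that \emph{both} its positive and its negative witness size equal $2$. Since $\MAJ$ is symmetric and self-dual I expect one symmetric weight assignment to do this, and $2$ is the best possible value because it equals $\ADV(\MAJ)$ (established below). Symmetry of $\MAJ$ also guarantees that the three inputs of every gate of $\varphi_d$ have exactly equal adversary lower bounds, so $\varphi_d$ is adversary-balanced and the composition construction applies: inserting the gadget at each of the $(3^d-1)/2$ internal nodes and attaching an input vertex at each of the $N = 3^d$ leaves produces a weighted graph $G(\varphi_d)$ on $\poly(N)$ vertices whose witness size is the $d$-th power of the single-gate witness size, namely $2^d$. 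Invoking the correspondence between witness size and eigenvalue-zero eigenvectors of $A_{G(\varphi_d)}$ supported on the root, I would then run spectral estimation of the associated quantum walk: $O(2^d)$ applications of the walk operator, each using $O(1)$ oracle queries, suffice to detect whether such an eigenvector exists and hence to evaluate $\varphi_d$, for a total of $O(2^d)$ queries.

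For the lower bound I would use the adversary method, starting from the base case $\ADV(\MAJ) \ge 2$. Let $\Gamma$ be the $0/1$ matrix supported on the Hamming-distance-$1$ pairs $(x,y)$ with $\MAJ(x) \ne \MAJ(y)$; one checks that these pairs form a single $6$-cycle on the three weight-$1$ and three weight-$2$ strings of $\{0,1\}^3$ (the two constant strings are isolated), so $\norm{\Gamma} = 2$, whereas the restriction of $\Gamma$ to the pairs that differ in a fixed coordinate $i$ is a two-edge matching of norm $1$; hence the adversary ratio of $\Gamma$ is $2$. Composing this $\Gamma$ with itself $d$ times (equivalently, applying multiplicativity of the adversary bound under read-once formula composition) yields $\ADV(\varphi_d) \ge 2^d$, and Ambainis's adversary lower bound then gives $\Omega(2^d)$ queries for any bounded-error quantum algorithm computing $\varphi_d$, matching the upper bound and proving optimality.

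I expect essentially all of the difficulty to be in the upper bound, and within it in two places. The first is producing the $\MAJ$ gadget with \emph{both} witness sizes exactly equal to $2$: any larger value compounds across the $d$ levels into an exponent worse than $\log_3 2$, so optimality hinges on the exact constant. The second is pushing the spectral analysis through the $d$-fold recursion --- showing that the effective spectral gap seen at the root of $G(\varphi_d)$ shrinks only like $2^{-d}$, uniformly over inputs with $\varphi_d(x)=0$ and inputs with $\varphi_d(x)=1$ --- so that phase estimation to precision $O(2^{-d})$ cleanly separates the two cases. The lower bound, by contrast, is the one-gate computation $\ADV(\MAJ) = 2$ together with the composition theorem.
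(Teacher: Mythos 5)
Your proposal is correct and follows essentially the same route as the paper: Theorem~\ref{t:balancedmajority} is obtained there as a corollary of the general result (\thmref{t:result}) by combining the $\MAJ$ span program of \exampleref{t:maj3example} with witness size $2 = \ADV(\MAJ)$ (\claimref{t:majwsize}), adversary composition (\thmref{t:weakadversarycomposition}) to get $\ADV(\varphi_d) = 2^d$, the composed-graph spectral analysis (\thmref{t:formulaspectrum}) feeding phase estimation, and the adversary lower bound (\thmref{t:adversaryquerybound}). Your explicit $6$-cycle adversary matrix certifying $\ADV(\MAJ) \geq 2$ is a correct, self-contained substitute for the paper's citation of tabulated adversary values, and you have accurately located the two genuine technical burdens (the exact constant $2$ for both witness sizes, and propagating the $\Omega(2^{-d})$ spectral gap through the recursion).
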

\noindent The classical complexity of evaluating this formula is known only to lie between $\Omega((7/3)^d)$ and $o((8/3)^d)$, and the previous best quantum algorithm, from~\cite{AmbainisChildsReichardtSpalekZhang07andor}, used $O(\sqrt{5}^d)$ queries.  

\begin{figure}
\centering
\subfigure{\label{f:notsubstitution}\includegraphics[scale=.52]{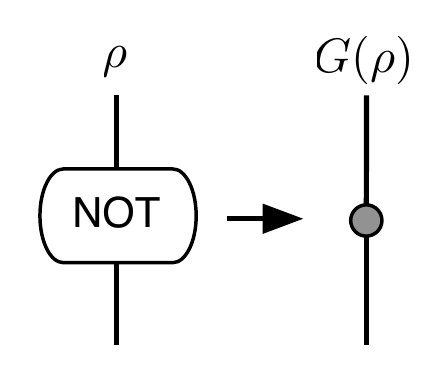}}
\quad\;
\subfigure{\includegraphics[scale=.52]{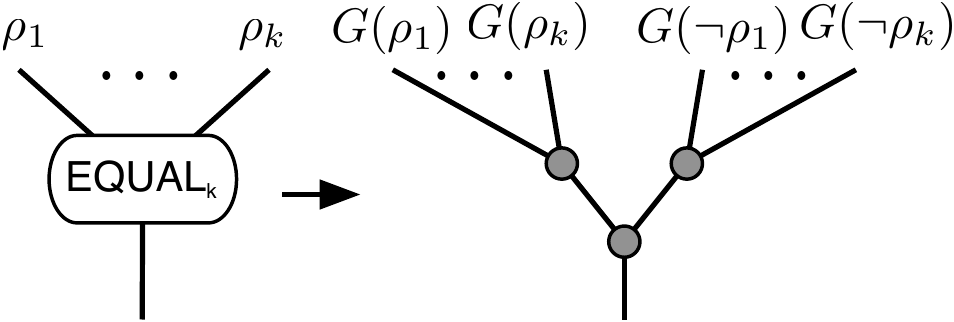}}
\quad\;
\subfigure{\includegraphics[scale=.52]{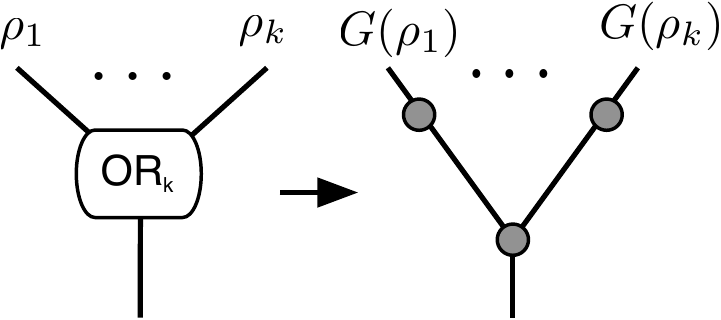}}
\quad\;
\subfigure{\includegraphics[scale=.52]{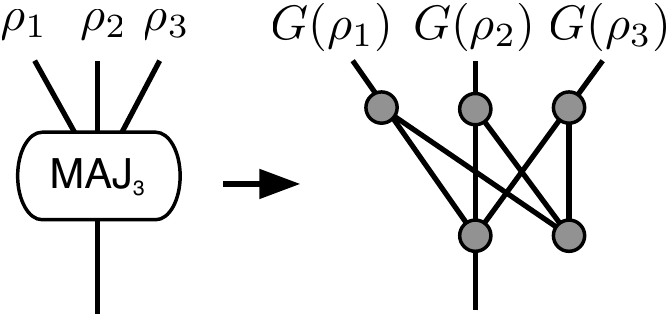}}
\caption{To convert a formula $\varphi$ to the corresponding graph $G(\varphi)$, we recursively apply substitution rules starting at the root to convert each gate into a gadget subgraph.  Some of the rules are shown here, except with the edge weights not indicated.  The dangling edges at the top and bottom of each gadget are the input edges and output edge, respectively.  To compose two gates, the output edge of one is identified with an input edge of the next (see \figref{f:parityandmaj3}).} \label{f:example}
\end{figure}

The graph gadgets themselves are derived from ``span programs"~\cite{KarchmerWigderson93span}.  Span programs have been used in classical complexity theory to prove lower bounds on formula size~\cite{KarchmerWigderson93span, BabaiGalWigderson99superpolyspanprogram} and monotone span programs are related to linear secret-sharing schemes~\cite{BeimelGalPaterson96spanprogram}.  (Most, though not all~\cite{AllenderBealsOgihara99span}, applications are over finite fields, whereas we use the definition over $\bf{C}$.)  We will only use compositions of constant-size span programs, but it is interesting to speculate that larger span programs could directly give useful new quantum algorithms.

\paragraph{Classical and quantum background}

The formula evaluation problem has been well-studied in the classical computer model.  
Classically, the case $\gateset = \{ \text{NAND} \}$ is best understood.  A formula with only NAND gates is equivalent to one with alternating levels of AND and OR gates, a so-called ``AND-OR formula," also known as a two-player game tree.  One can compute the value of a balanced binary AND-OR formula with zero error in expected time $O(N^{\log_2[(1 + \sqrt{33})/4]}) = O(N^{0.754})$~\cite{snir:dec, sw:and-or}, and this is optimal even for bounded-error algorithms~\cite{santha:and-or}.  However, the complexity of evaluating balanced AND-OR formulas grows with the degree of the gates.  For example, in the extreme case of a single OR gate of degree $N$, the complexity is $\Theta(N)$.  The complexity of evaluating AND-OR formulas that are not ``well-balanced" is unknown.  

If we allow the use of a quantum computer with coherent oracle access to the input, however, then the situation is much simpler; between $\Omega(\sqrt N)$ and $N^{\frac12+o(1)}$ queries are necessary and sufficient to evaluate {\emph{any}} $\{ \text{AND, OR, NOT} \}$ formula with bounded error.  
In one extreme case, Grover search~\cite{grover:search, grover:search-time} evaluates an OR gate of degree $N$ using $O(\sqrt N)$ oracle queries and $O(\sqrt{N} \log \log N)$ time.  
In the other extreme case, Farhi, Goldstone and Gutmann recently devised a breakthrough algorithm for evaluating the depth-$\log_2 N$ balanced binary AND-OR formula in $O(\sqrt N)$ time in the unconventional Hamiltonian oracle model~\cite{fgg:and-or}.  
Ambainis~\cite{ambainis07nand} improved this to $O(\sqrt{N})$-queries in the standard query model.
Childs, Reichardt, {\v S}palek and Zhang~\cite{ChildsReichardtSpalekZhang07andor} gave an $O(\sqrt N)$-query algorithm for evaluating balanced or ``approximately balanced" formulas, and extended the algorithm to arbitrary $\{\text{AND, OR, NOT}\}$ formulas with $N^{\frac12 + o(1)}$ queries, and also $N^{\frac12 + o(1)}$ time after a preprocessing step.  (Ref.~\cite{AmbainisChildsReichardtSpalekZhang07andor} contains the merged results of~\cite{ambainis07nand, ChildsReichardtSpalekZhang07andor}.)

This paper shows other nice features of the formula evaluation problem in the quantum computer model.  
Classically, with the exception of $\{\text{NAND}\}$, $\{\text{NOR}\}$ and a few trivial cases like $\{\text{PARITY}\}$, most gate sets are poorly understood.  
In 1986, Boppana asked the complexity of evaluating the balanced, depth-$d$ ternary majority ($\MAJ$) function~\cite{sw:and-or}, and today the complexity is only known to lie between $\Omega((7/3)^d)$ and 
$O((2.6537\ldots)^d)$~\cite{JayramKumarSivakumar03majority}.
In particular, the na{\" i}ve generalization of randomized alpha-beta pruning---recursively evaluate two random immediate subformulas and then the third if they disagree---runs in expected time $O((8/3)^d)$ and is suboptimal.  
This suggests that the balanced ternary majority function is significantly different from the balanced $k$-ary NAND function, for which randomized alpha-beta pruning is known to be optimal.  
In contrast, we show that the optimal {quantum} algorithm of 
\cite{ChildsReichardtSpalekZhang07andor}
does extend to give an optimal $O(2^d)$-query algorithm for evaluating the balanced ternary majority formula.  
Moreover, the algorithm also generalizes to a significantly larger gate set $\gateset$.

\paragraph{Organization}

We introduce span programs and explain their correspondence to weighted bipartite graphs in \secref{s:spanprograms}.  
The correspondence involves considering parts of a span program $P$ as the weighted adjacency matrix for a corresponding graph $G_P$.  
We prove that the eigenvalue-zero eigenvectors of this adjacency matrix
evaluate $P$ (\thmref{t:weakzeroenergy}).
This theorem provides useful intuition.

We develop a quantitative version of \thmref{t:weakzeroenergy} in \secref{sec:spc}.  We lower-bound the overlap of the eigenvalue-zero eigenvector with a known starting state.  This lower-bound will imply completeness of our quantum algorithm.  
To show soundness of the algorithm, we also analyze small-eigenvalue eigenvectors in order to prove a spectral gap around zero.  Essentially, we solve the eigenvalue equations in terms of the eigenvalue $\lambda$, and expand a series around $\lambda = 0$.  
The results for small-eigenvalue and eigenvalue-zero eigenvectors are closely related, and we unify them using a measure we term ``span program witness size."
The details of the proofs from this section are in \appref{s:wsizeproof}.

\secref{sec:algorithm} applies the span program framework to the formula evaluation problem.  
\thmref{t:result} is our general result, an optimal quantum algorithm for evaluating formulas that are over the gate set $\gateset$ of \defref{t:gatesetdef}, and that are adversary-balanced (\defref{t:adversarybalanceddef}).  
The proof of \thmref{t:result} has three parts.  First, in \secref{s:gatebygate}, we display an optimal span program for each of the gates in $\gateset$.  Second, we compose the span programs for the individual gates to obtain a span program for the full formula $\varphi$.  This is equivalent to joining together the gadget graphs described in \figref{f:example} to obtain a graph $G(\varphi)$.  We combine the spectral analyses of the individual span programs to analyze the spectrum of $G(\varphi)$ (\thmref{t:formulaspectrum}).  Finally, this analysis straightforwardly leads to a quantum algorithm based on phase estimation of a quantum walk on $G(\varphi)$, in \secref{s:qalg}.  

\secref{s:extensions} concludes with a discussion of some extensions to the algorithm.  

\section{Span programs and eigenvalue-zero graph eigenvectors} \label{s:spanprograms}

A span program $P$ is a certain linear-algebraic way of specifying a function $f_P$.  
For details on span programs applied in classical complexity theory, we can still recommend the original reference~\cite{KarchmerWigderson93span} as well as, e.g., the more recent~\cite{GalPudlak03spanprogram}.

\begin{definition}[Span program] \label{t:spanprogramdef}
A span program $P$ consists of a nonzero ``target" vector $t$ in a vector space over $\bf{C}$, together with ``grouped input" vectors $\{v_j : j \in J\}$.  Each $v_j$ is labeled with a subset $X_j$ of the literals $\{ x_1, \overline{x_1}, \ldots, x_n, \overline{x_n} \}$.  
To $P$ corresponds a boolean function $f_P : \{0,1\}^n \rightarrow \{0,1\}$; defined by $f_P(x) = 1$ (i.e., true) if and only if there exists a linear combination $\sum_j a_j v_j = t$ such that $a_j = 0$ if any of the literals in $X_j$ evaluates to zero (i.e., false).
\end{definition}

\begin{example} \label{t:maj3example}
For example, the span program 
\[
\setcounter{sprows}{2}
\setlength{\spheight}{16pt}
\setlength{\spraise}{24pt}
\begin{array}{r@{}c@{}l r@{}*{2}{c@{\ }}c@{}l}
& & & X_J = ( & \{x_1\} & \{x_2\} & \{x_3\} & ) \\
\noalign{\smallskip}
\spleft{t=} & 1 & \spright{,} & \spleft[2pt]{v_J=} & \frac1{\sqrt3}& \frac1{\sqrt3}& \frac1{\sqrt3}& \spright[2pt]{} \\
& 0 & & & 1 & e^{\strut 2\pi i/2} & e^{-2\pi i/3} &
\end{array}
\]
computes the $\MAJ$ function.  Indeed, at least two of the $v_j$ must have nonzero coefficient in any linear combination equaling the target $t$.  
Of course, the second row of $(\begin{matrix} v_1 & v_2 & v_3 \end{matrix})$ could be any $(\begin{matrix} \alpha & \beta & \gamma \end{matrix})$ with $\alpha, \beta, \gamma$ distinct and nonzero, and the span program would still compute $\MAJ$.  This specific setting is used to optimize the running time of the quantum algorithm (\claimref{t:majwsize}).
\end{example}

In this section, we will show that by viewing a span program $P$ as the weighted adjacency matrix $A_{G_P}$ of a certain graph $G_P$, the true/false evaluation of $P$ on input $x$ corresponds to the existence or nonexistence of an eigenvalue-zero eigenvector of $A_{G_P}(x)$ supported on a distinguished output node (\thmref{t:weakzeroenergy}).  

In turn, this will imply that writing a span program $P$ for a function $f$ immediately gives a quantum algorithm for evaluating $f$, or for evaluating formulas including $f$ as a gate (\secref{sec:algorithm}).  
The algorithm works by spectral estimation on $A_{G_P}(x)$.
Its running time depends on the span program's ``witness size" (\secref{sec:spc}).
For example, if $f_P(x)$ is true, then the witness size is essentially the shortest squared length of any witness vector $(a_j)_{j \in J}$ in \defref{t:spanprogramdef}.  

\begin{remark} \label{t:spanprogramremark}
Let us clarify a few points in \defref{t:spanprogramdef}.
\begin{enumerate}
\item
It is convenient, but nonstandard, to allow grouped inputs, i.e., literal subsets $X_j$ possibly with $\abs{X_j} > 1$, instead of just single literals, to label the columns.  A grouped input $j$ can be thought of as evaluating the AND of all literals in $X_j$.  A span program $P$ with some $\abs{X_j}> 1$ can be expanded out so that all $\abs{X_j} \leq 1$, without increasing $\sum_j \abs{X_j}$, known as the {\em size} of $P$.
\item
It is sometimes convenient to allow $X_j = \emptyset$.  In this case, vector $v_j$ is always available to use in the linear combination; grouped input $j$ evaluates to true always.  However, such vectors can be eliminated from $P$ without increasing the size~\cite[Theorem~7]{KarchmerWigderson93span}.  
\item
By a basis change, one can always adjust the target vector $t$ to $(1,0,0,\ldots,0)$.  
\end{enumerate}
\end{remark}

\subsection{Span program as an adjacency matrix}

A span program $P$ with target vector $t = (1,0,\ldots,0)$ corresponds to a certain weighted bipartite graph.  

Notation: For an index sequence $H = (h_1, \ldots, h_{\abs{H}})$ and a set of variables $\{a_h\}$, let $a_H = (a_{h_1}, \ldots, a_{h_{\abs{H}}})$.  For example, $v_J$ denotes the sequence of grouped input vectors.  
It will be convenient to define several more index sequences: $O$ (``output"), $C$ (``constraints") and $I$ (``inputs").  Let $O$ and $C$ together index the coordinates of the vector space, with $O = \{1\}$ being the first coordinate, and $C$ the remainder.  Let $I_j$ index $X_j$ for each $j \in J$, and let $I = \bigcup_{j \in J} I_j$ a disjoint union so $\abs{I} = \mathrm{size}(P)$.

\begin{figure}
\begin{equation*}
\hskip-10pt
\raisebox{-35pt}{\includegraphics[scale=.6]{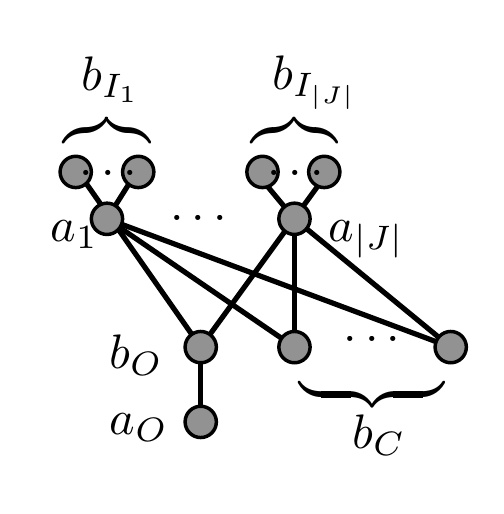}}
\quad
\setcounter{sprows}{3}
\setlength{\spheight}{28pt}
\setlength{\spraise}{-12pt}
A_{G_P} = 
\begin{array}{*{2}{@{\ }c} | *{2}{@{\ }c} @{} r}
& \multispan1{$a_O$} & a_J & & \\
& \multispan1{} & \overbrace{\quad} & & \\
\noalign{\vskip-2pt}
\spleft{} & 1 & A_{OJ} & \spright{} & b_O \\
& \begin{smallmatrix}0\\ :\\\vphantom{(} 0\end{smallmatrix} & A_{CJ} & & \Big\} b_C \\
\cline{2-3}
& \begin{smallmatrix}\strut 0\\ :\\0\end{smallmatrix} & A_{IJ} & & \Big\} b_{I\;}
\end{array}
\end{equation*}
\caption{The bipartite graph $G_P$ corresponding to span program $P$ (the output edge is $(a_O, b_O)$, while the grouped inputs are $a_1, \ldots, a_{\abs{J}}$).
}
\label{f:gadget}
\end{figure}

We will construct a graph $G_P$ on $\abs{I} + \abs{J} + \abs{C} + 2 \abs{O}$ vertices.
Writing the grouped input vectors out as the columns of a matrix, let $\left(\begin{smallmatrix} A_{OJ} \\ A_{CJ} \end{smallmatrix}\right) = \sum_{j \in J} \ketbra{v_j}{j}$; $A_{OJ}$ is a $1 \times \abs{J}$ matrix row, and $A_{CJ}$ is a $\abs{C} \times \abs{J}$ matrix.  
Let $A_{IJ} = \sum_{{j \in J , i \in I_j}} \ketbra{i}{j}$; $A_{IJ}$ encodes $P$'s grouped inputs.  
Now consider the bipartite graph $G_P$ of \figref{f:gadget}, the upper right block of whose weighted Hermitian adjacency matrix is $A_{G_P}$.  (The adjacency matrix is block off-diagonal because the graph is bipartite.)
The edges $(a_j, b_i)$ for $j \in J$ and $i \in I_j$ are ``input edges," while $(a_O, b_O)$ is the ``output edge."  The input and output edges all have weight one.  The weights of edges $(b_O, a_j)$ for $j \in J$ are given by $A_{OJ}$ (the first coordinates of the grouped input vectors $v_J$), while the weights of edges $(b_c, a_j)$ for $c \in C, j \in J$ are given by $A_{CJ}$ (the remaining coordinates of $v_J$).

\begin{example}
For the $\MAJ$ span program of \exampleref{t:maj3example}, $|C|=1$, $|I|=|J|=3$, the graph $G_P$ is shown in \figref{f:example}, and the matrix $A_{G_P}$ is
\[
\left( \begin{array}{c | c c c}
1 & \frac 1 {\sqrt 3} & \frac 1 {\sqrt 3} & \frac 1 {\sqrt 3} \\
0 & 1 & e^{\strut 2 \pi i / 3}	 & e^{-2 \pi i / 3} \\
\hline
0 & 1 & 0 & 0 \\
0 & 0 & 1 & 0 \\
0 & 0 & 0 & 1 \\
\end{array} \right) \enspace .
\]
\end{example}

\subsection{Eigenvalue-zero eigenvectors of the span program adjacency matrix}

\begin{theorem} \label{t:weakzeroenergy}
For an input $x \in \{0,1\}^n$, define a weighted graph $G_P(x)$ by deleting from $G_P$ the edges $(a_j, b_i)$ if the $i$th literal in $X_j$ is true.  
Consider all the eigenvalue-zero eigenvector equations of the weighted adjacency matrix $A_{G_P(x)}$, except for the constraint at $a_O$.  
These equations have a solution with support on vertex $a_O$ if and only if $f_P(x) = 1$, and have a solution with support on $b_O$ if and only if $f_P(x) = 0$.
\end{theorem}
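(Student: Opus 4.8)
The plan is to spell out the eigenvalue-zero eigenvector equations of $A_{G_P(x)}$ in coordinates and match them directly against \defref{t:spanprogramdef}. Write a candidate vector $\psi$ as $\psi(a_O) = \alpha$, $\psi(a_j) = \beta_j$ for $j \in J$, $\psi(b_O) = \gamma_O$, $\psi(b_c) = \gamma_c$ for $c \in C$, and $\psi(b_i) = \delta_i$ for $i \in I$, and group $\gamma = (\gamma_O, \gamma_C)$ into a vector in the same space as the $v_j$. Since $G_P(x)$ is bipartite, the equations split into ``$b$-side'' equations $A_{G_P(x)}\psi_a = 0$ and ``$a$-side'' equations $A_{G_P(x)}^\dagger \psi_b = 0$. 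Reading off \figref{f:gadget}: the $b_O$ and $b_C$ equations together say $\sum_j \beta_j v_j = -\alpha\, t$; the $b_i$ equation, for $i \in I_j$, reads $\beta_j = 0$ when the edge $(a_j,b_i)$ survives in $G_P(x)$ --- i.e. when its literal is false --- and is vacuous otherwise; the $a_j$ equation reads $\langle v_j, \gamma\rangle + \sum_{i \in I_j : (a_j,b_i) \in G_P(x)} \delta_i = 0$ (the Hermitian conjugations from \figref{f:gadget} being immaterial here); and the discarded $a_O$ equation is simply $\gamma_O = 0$. Set $J_x = \{ j \in J : \text{every literal in } X_j \text{ is true on } x \}$ and $V_x = \Span\{ v_j : j \in J_x \}$, so that by \defref{t:spanprogramdef} one has $f_P(x) = 1 \iff t \in V_x$.

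For the first equivalence, combine the $b$-side equations: any solution with $\alpha \neq 0$ satisfies $t = \sum_{j \in J_x} (-\beta_j/\alpha)\, v_j \in V_x$ (using that the $b_i$ equations kill $\beta_j$ for $j \notin J_x$), so $f_P(x) = 1$. Conversely, given a witness $\sum_j a_j v_j = t$ with $a_j = 0$ for $j \notin J_x$, put $\alpha = 1$, $\beta_j = -a_j$, $\gamma = 0$, $\delta = 0$; then every retained equation holds (each $a_j$ equation becomes $0 = 0$) and $\psi(a_O) \neq 0$. Hence a solution supported on $a_O$ exists iff $f_P(x) = 1$.

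For the second equivalence, note the asymmetry introduced by discarding the $a_O$ equation: for $j \in J_x$ the $a_j$ equation has an empty $\delta$-sum, so it reads $\langle v_j, \gamma\rangle = 0$; for $j \notin J_x$ it contains at least one free variable $\delta_i$, and since the $I_j$ are disjoint these equations can be solved for any $\gamma$. Thus the retained $a$-side equations are simultaneously satisfiable for a given $\gamma$ exactly when $\gamma \perp V_x$, and then setting $\alpha = 0$ and all $\beta_j = 0$ makes the $b$-side equations hold automatically. So a solution with $\psi(b_O) = \gamma_O \neq 0$ exists iff there is a $\gamma \perp V_x$ with $\gamma_O \neq 0$; since $t = (1,0,\dots,0)$, the latter condition is just $\gamma \not\perp t$, and by finite dimensionality such a $\gamma$ exists iff $t \notin V_x$, i.e. iff $f_P(x) = 0$.

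I expect essentially all the work to be bookkeeping: reading the precise form of each equation from \figref{f:gadget}, and verifying in each direction that the ``unused'' coordinates ($\gamma,\delta$ in the true case, $\alpha,\beta$ in the false case) can be zeroed out or chosen freely. The only conceptual point --- and the reason the $a_O$ constraint must be removed --- is the dual role of the input vertices $b_i$: on the $b$-side they enforce $\beta_j = 0$ for the unavailable $j \notin J_x$, restricting which $v_j$ may appear in a ``true'' witness $\sum_j \beta_j v_j = -\alpha t$, while on the $a$-side they supply slack that makes the $a_j$ equation vacuous for those same $j$, so a ``false'' witness $\gamma$ need only be orthogonal to the available $v_j$ and is otherwise free to have $\gamma_O \neq 0$ once the equation $\gamma_O = 0$ at $a_O$ is dropped. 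Turning this observation into the two chains of equivalences above, rather than any quantitative estimate, is the crux.
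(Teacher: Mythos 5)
Your proposal is correct and follows essentially the same route as the paper's proof: write out the eigenvalue-zero equations, use bipartiteness to decouple the $a$-side from the $b$-side, identify the $b_O,b_C,b_I$ equations with the witness condition $\sum_j a_j v_j = t$ (with $a_j=0$ for unavailable $j$) in the true case, and identify the $a_J$ equations with the existence of a vector orthogonal to the available columns but not to $t$ in the false case. The paper phrases the latter in matrix form ($t\notin\Range(A\Pi)$ iff such a $w$ exists), but the content is identical to your coordinate-wise bookkeeping.
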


\begin{proof}
Notation: Use $a_j, b_i, b_c, a_O, b_O$ to denote coefficients of a vector on the vertices of $G_P$.  Let $A_{IJ}(x)$ include only edges to false inputs, i.e., $A_{IJ}(x) = \sum_{j \in J, \text{false $i$} \in I_j} \ketbra{i}{j}$.  

The eigenvalue-$\lambda$ eigenvector equations of $A_{G_P(x)}$ are 
\begin{subequations} \label{e:zeroenergyequations}
\begin{align}
\lambda b_O &= a_O + A_{OJ} a_J \label{e:zeroenergyequationsbO} \\
\lambda b_C &= A_{CJ} a_J \label{e:zeroenergyequationsbC} \\ 
\lambda b_I &= A_{IJ}(x) a_J \label{e:zeroenergyequationsbI} \\
\lambda a_O &= b_O \label{e:zeroenergyequationsaO} \\
\lambda a_J &= {A_{OJ}}^\adjoint b_O + {A_{CJ}}^\adjoint b_C + {A_{IJ}}(x)^\adjoint b_I \label{e:zeroenergyequationsaJ}
\end{align}
\end{subequations}
At $\lambda=0$, these equations say that for each vertex, the weighted sum of the adjacent vertices' coefficients must be zero.  
We are looking for solutions satisfying all these equations except possibly Eq.~\eqnref{e:zeroenergyequationsaO}.  
Since the graph is bipartite, at $\lambda = 0$ the $a$ coefficients do not interact with the $b$ coefficients.  In particular, Eqs.~(\ref{e:zeroenergyequations}d,e) (resp. \ref{e:zeroenergyequations}a-c) can always be satisfied by setting the $b$ (resp. $a$) coefficients to zero.

By scaling, there is a solution with nonzero $a_O$ iff there is a solution with $a_O = -1$.  Then Eqs.~(\ref{e:zeroenergyequations}a,b) are equivalent to ${t} = \left(\begin{smallmatrix}A_{OJ} \\ A_{CJ} \end{smallmatrix}\right) a_J = \sum_j a_j{v_j}$.  Moreover, Eq.~\eqnref{e:zeroenergyequationsbI} implies that $a_j$ can be nonzero only if grouped input $j$ is true.  (If $X_j$ includes any false inputs, then $A_{IJ}(x) \ket{j} \neq 0$, so $a_j = 0$.)  These conditions are the same as those in \defref{t:spanprogramdef}.  

Next, we argue that there is a solution of Eq.~\eqnref{e:zeroenergyequationsaJ} with $\lambda = 0$ and $b_O = 1$ if and only if $f_P(x) = 0$.  Indeed, 
\[
f_P(x) = 0 \Leftrightarrow t \notin \Span\{v_j : \text{$j$ true}\} \Leftrightarrow t \notin \Range\!\left[\left(\begin{smallmatrix}A_{OJ} \\ A_{CJ} \end{smallmatrix}\right) \Pi\right]
\]
where $\Pi = \sum_{\text{true $j$}} \ketbra{j}{j}$.  
In turn, this holds iff there is a vector $w$ orthogonal to the range and having inner product one with $t$---precisely constraint \eqnref{e:zeroenergyequationsaJ} with ${w} = \left(\begin{smallmatrix}b_O \\ b_C \end{smallmatrix}\right)$.  
\end{proof}

\begin{remark} \label{t:dualrailremark}
By \thmref{t:weakzeroenergy}, we can think of the graph $G_P$ as giving a ``dual-rail" encoding of the function $f_P$: there is a $\lambda = 0$ eigenvector of $G_P(x)$ supported on $a_O$ if and only if $f_P(x) = 1$, and there is one supported on $b_O$ iff $f_P(x) = 0$.  This justifies calling edge $(a_O, b_O)$ the output edge of $G_P$.
\end{remark}

\subsection{Dual span program} \label{s:dualspanprogram}

A span program $P$ immediately gives a dual span program, denoted $P^\adjoint$, such that $f_{P^\adjoint}(x) = \neg f_P(x)$ for all $x \in \{0,1\}^n$.  For our purposes, though, it suffices to define a NOT gate graph gadget to allow negation of subformulas.

\begin{definition}[NOT gate gadget] \label{t:notgatedef}
Implement a NOT gate $x \mapsto \overline{x}$ as two weight-one edges connected (\figref{f:example}).  The edge $(a_i, b_i)$ is the input edge, while $(a_O, b_O)$ is the output edge.  The middle vertex $a_i = b_O$ is shared.  
\end{definition}

At $\lambda = 0$, the coefficient on $a_O$ is minus that on $b_i$, and $a_i = b_O$ by definition.  Therefore, this gadget complements the dual rail encoding of \thmref{t:weakzeroenergy}.

The NOT gate gadget of \defref{t:notgatedef} can be used to define a dual span program $P^\adjoint$ by complementing the output and all inputs with NOT gates, and also complementing all input literals in the sets $X_J$.  Since it is not essential here, we leave the formal definition as an exercise.  Alternative constructions of dual programs are given in~\cite{CramerFehr02secretshare, NikovNikovaPreneel05spanprogram}.

\begin{example} \label{t:dualspanprogramexample}
For distinct, nonzero $\alpha, \beta, \gamma$, the span program\!\!\!
\[
\setcounter{sprows}{4}
\setlength{\spheight}{24pt}
\setlength{\spraise}{6pt}
\begin{array}{r@{}c@{}l r@{\ }*{4}{c@{\ }}c@{}l}
& & & X_J = ( & \emptyset & \emptyset & \{\overline{x_1}\} & \{\overline{x_2}\} & \{\overline{x_3}\} & ) \\
\noalign{\smallskip}
\spleft{t=} & 1 & \spright{,} & \spleft[2pt]{v_J=} & 1 & 0 & 0 & 0 & 0 & \spright[2pt]{} \\
            & 0 &             &                    & 1 & \alpha & 1 & 0 & 0 & \\
            & 0 &             &                    & 1 & \beta & 0 & 1 & 0 & \\
            & 0 &             &                    & 1 & \gamma & 0 & 0 & 1 & \\
\end{array}
\]
computes $\neg \MAJ(x_1, x_2, x_3)$. 
It was constructed, by adding NOT gate gadgets, as the dual to the span program in \exampleref{t:maj3example}, up to choice of weights.
\end{example}

\subsection{Span program composition}

\begin{definition}[Composed graph and span program] \label{t:composespanprogramdef}
Consider span program $Q$ on $\{0,1\}^n$ and programs $Q_i$, $i \in [n] \equiv \{1, \ldots, n\}$, with corresponding graphs $G_Q$ and $G_{Q_i}$.  The composed graph is defined by identifying the input edges of $G_Q$ with the output edges of copies of the other graphs.  If an edge corresponds to input literal $x_i$, then identify that edge with the output edge of a copy of $G_{Q_i}$; and if an edge corresponds to $\overline{x_i}$, then insert a NOT gate gadget (i.e., an extra vertex, as in \defref{t:notgatedef}) before a copy of $G_{Q_i}$.  
The composed span program, denoted $Q \circ Q_{[n]}$, is the program corresponding to the composed graph (i.e., $G_{Q \circ Q_{[n]}}$ is the composed graph).  Thus $f_{Q \circ Q_{[n]}} = f_Q \circ f_{Q_{[n]}}$.  
\end{definition}

\begin{definition}[Formula graph and span program]
Given span programs for each gate in a formula $\varphi$, span program $P(\varphi)$ is defined as their composition according to the formula.  Let $G(\varphi)$ be the composed graph, $G(\varphi) = G_{P(\varphi)}$.  
\end{definition}

\begin{example}
For example, the span program 
\[
\setcounter{sprows}{4}
\setlength{\spheight}{24pt}
\setlength{\spraise}{4pt}
\begin{array}{r@{}c@{}l r@{}*{5}{c@{\ }}c@{}l}
& & & X_J = ( & \{x_1\} & \{x_2\} & \emptyset & \{x_3\} & \{x_4\} & \{x_5\} & ) \\
\noalign{\smallskip}
\spleft{t=} & 1 & \spright{,} & \spleft[2pt]{v_J=} & 1&1&1&0&0&0& \spright[2pt]{} \\
& 0 & & & \alpha&\beta&\gamma&0 & 0& 0&\\
& 0 & & & 0 & 0& 1& 1& 1& 1& \\
& 0 & & & 0 & 0& 0& \alpha&\beta&\gamma&\\
\end{array}
\]
is a composed span program that 
computes the function $\MAJ(x_1, x_2, \MAJ(x_3, x_4, x_5))$, provided $\alpha, \beta, \gamma$ are distinct and nonzero.  (See \exampleref{t:maj3example}.)  \figref{f:maj3ofmaj3} shows the associated composed graph.
\end{example}

\begin{figure}
\begin{center}
\includegraphics[scale=.62]{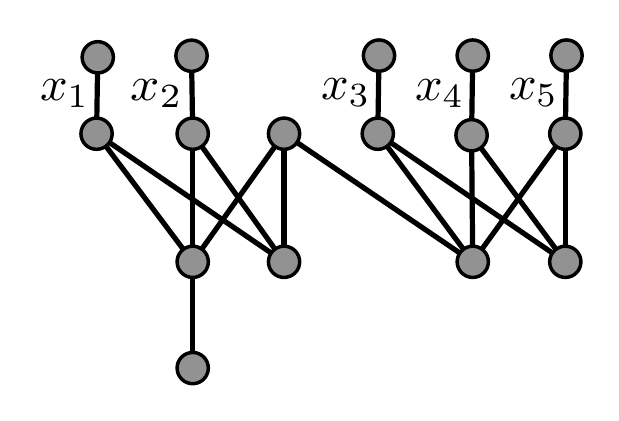}
\end{center}
\vskip -10pt
\caption{Graph for $\MAJ(x_1, x_2, \MAJ(x_3, x_4, x_5))$, with input edges labeled by the associated literals.}
\label{f:maj3ofmaj3}
\end{figure}

\begin{figure}
\begin{center}
\subfigure{
\includegraphics[scale=.52]{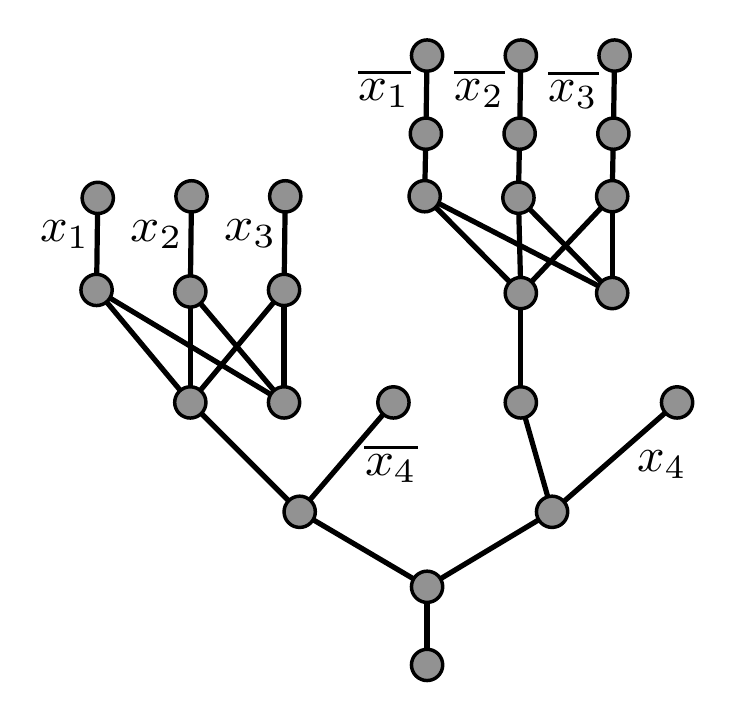}
}
\subfigure{
\includegraphics[scale=.52]{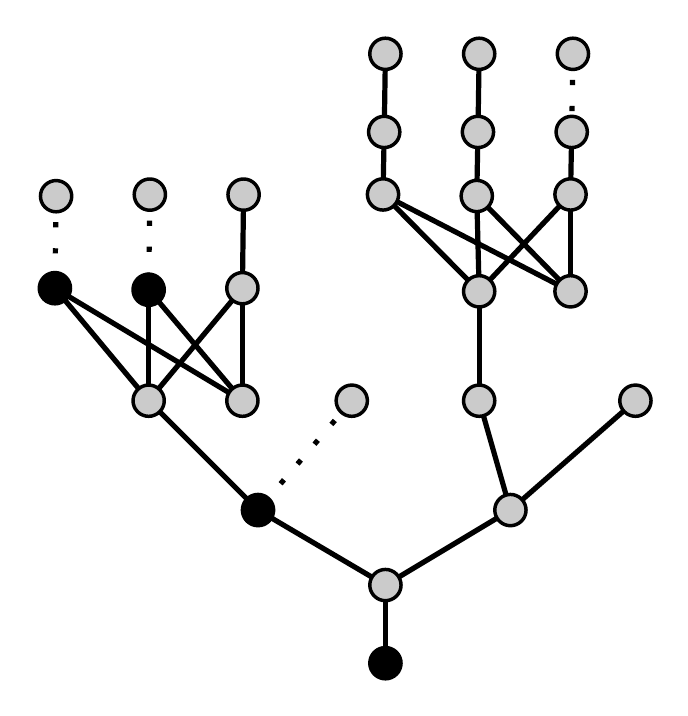}
}
\end{center}
\vskip -10pt
\caption{Graph for $\MAJ(x_1, x_2, x_3) \oplus x_4$, and its evaluation on input $x=1100$.}
\label{f:parityandmaj3}
\end{figure}

\begin{example}[Duplicating and negating inputs]
To the left in \figref{f:parityandmaj3} 
is the composed graph for the formula $\MAJ(x_1, x_2, x_3) \oplus x_4 = \EQUAL_2(\MAJ(x_{[3]}), \overline{x_4})$, obtained using the substitution rules of \figref{f:example}.  
(A span program for PARITY will be given in \lemref{t:parityorwsize}.)
Note that we are effectively negating some inputs twice, by putting NOT gate gadgets below the negated literals $\overline{x_1}, \overline{x_2}, \overline{x_3}$.  This is of course redundant, and is only done to maintain the strict correspondence of graphs to span programs, as in \exampleref{t:dualspanprogramexample}, by keeping the input vertices $b_I$ at odd distances from $a_O$.

To the right is the same graph evaluated on input $x = 1100$, i.e., with edges to true literals deleted.  Since the formula evaluates to true, \thmref{t:weakzeroenergy} promises that there is a $\lambda = 0$ eigenvector supported on $a_O$.  In this case, that eigenvector is unique.  It has support on the black vertices.
\end{example}

\section{Span program witness size}
\label{sec:spc}

In \secref{s:spanprograms}, we established that after converting a formula $\varphi$ into a weighted graph $G(\varphi)$, by replacing each gate with a gadget subgraph coming from a span program, the eigenvalue-zero eigenvectors of the graph effectively evaluate $\varphi$.  
The dual-rail encoding of $\varphi(x) = f_{P(\varphi)}(x)$ 
promised by \thmref{t:weakzeroenergy} will suffice to give a phase-estimation-based quantum algorithm for evaluating $\varphi$.  
The goal of this section is to make \thmref{t:weakzeroenergy} more quantitative, which will enable us to analyze the algorithm's running time.  

In particular, we will {lower-bound} the achievable squared support on either $a_O$ or $b_O$ of a unit-normalized $\lambda = 0$ eigenvector.  This will enable the algorithm to detect if $\varphi(x) = 1$ by starting a quantum walk at $a_O$; if $\varphi(x) = 1$, then
$\ket{a_O}$ will have large overlap with the $\lambda = 0$ eigenvector. 

We also study eigenvalue-$\lambda$ eigenvectors of $G_{P(\varphi)}(x)$, for $\abs{\lambda} \neq 0$ sufficiently small.  At small enough eigenvalues, the dual-rail encoding property of \thmref{t:weakzeroenergy} still holds, in a different fashion.  Note that since the graph is bipartite, we may take $\lambda > 0$ without loss of generality.  For small enough $\lambda > 0$, it will turn out that the function evaluation corresponds to the output {\em ratio} $\ri_{O} \equiv a_O / b_O$.  If $f_P(x) = 1$, then
$\ri_{O}$ is large and negative, roughly of order $-1/\lambda$.  If $f_P(x) = 0$, then $\ri_{O}$ is small and positive, roughly of order $\lambda$.  
Ultimately, the point of this analysis is to show that if the formula evaluates to false, then there do not exist any eigenvalue-$\lambda$ eigenvectors supported on $a_O$ for small enough $\abs{\lambda} \neq 0$.  This spectral gap will prevent the algorithm from outputting false positives.

\smallskip
Consider a span program $P$.  Let us generalize the setting of \thmref{t:weakzeroenergy} to allow $P$'s inputs to be themselves span programs, as in \defref{t:composespanprogramdef}.  Assume that for some $x$, every $\lambda \in [0, \Lambda)$ and each input $i \in I$, we have constructed unit-normalized states $\ket{\psi_i(\lambda)}$ satisfying the eigenvalue-$\lambda$ constraints for all the $i$th subgraph's vertices except $a_i$.  

\begin{definition}[Subformula complexity] \label{def:subformulacomplexity}
At $\lambda = 0$, for each input $i \in I_j$, let $\abi_i$ denote $\ket{\psi_i}$'s squared support on either $a_j$ or $b_i$, depending on whether the input evaluates to true or false, respectively.  

For $\lambda>0$, 
assume that the coefficients of $\ket{\psi_i}$ along the $i$th input edge are nonzero, and 
let $\ri_{i} = a_j / b_i$ be their ratio.  If the literal associated to input $i$ evaluates to false, then let $\si_{i} = \ri_{i} / \lambda$; if it is true, then let $\si_{i} = -1 / (\ri_{i} \lambda)$.

For an input $i \in I$, its \emph{subformula complexity} is 
\begin{equation}
\label{e:Ui}
\Ui_{i} = \max_x \max \left\{ 1/\abi_i, \sup_{\lambda \in (0, \Lambda)} \si_{i} \right\} \enspace .
\end{equation}
\end{definition}

\noindent For example, if $\Ui_{i}$ is small, then $\ket{\psi_i(0)}$ has large support on either $a_i$ or $b_i$.  In general, $\Ui_{i} \ge 1$.  If input $i$ corresponds to a literal and not the output edge of another span program, then $\Ui_{i} = 1$.  

We construct a normalized state $\ket{\psi_O(\lambda)}$ that satisfies {all} the eigenvalue-$\lambda$ eigenvector constraints of the composed graph, except Eq.~\eqnref{e:zeroenergyequationsaO} at $a_O$.
We construct $\ket{\psi_O}$ by putting together the scaled $\ket{\psi_i}$s and also assigning coefficients to the vertices in $G_P$.  Similarly to Eq.~\eqnref{e:Ui}, define
\begin{equation}
\Ui_{O}(x) = \max \left\{ 1/\abi_O, \sup_{\lambda \in (0, \Lambda)} \si_{O} \right\} \enspace ,
\end{equation}
where $\abi_O$ is the squared support of $\ket{\psi_O(0)}$ on $a_O$ or $b_O$, and, for $\lambda > 0$, $\si_{O}$ is $-1/(\ri_{O} \lambda)$ or $\ri_{O}/\lambda$ if $f(x)$ is true or false, respectively.  We will relate $\Ui_{O} = \max_x \Ui_{O}(x)$ to the input complexities $\Ui_{I}$ (\thmref{t:unifiedsolution}).  

First of all, notice that if $\abs{I_j} > 1$, then several of the input subgraphs share the vertex $a_j$.  They must be scaled so that their coefficients at $a_j$ all match, motivating the following definition.  

\begin{definition}
\label{def:groupedinputcomplexity}
The \emph{grouped input complexity} of $j \in J$ on input $x$ is 
\begin{equation} \label{e:Ujdef}
\Uj_{j}(x) = \left\{ \begin{array}{cl} \max \Big\{\sum_{i \in I_j} \Ui_{i}, 1 \Big\} & \text{if $j$ evaluates to true}
\\ \Big(\sum_\text{false $i \in I_j$} \Ui_{i}^{-1} \Big)^{-1} & \text{otherwise}
\end{array} \right.
\end{equation}
Recall that grouped input $j$ evaluates to true iff all inputs in $I_j$ are true.
(In the first case, we take the maximum with 1 to handle the case $I_j = \emptyset$.)  
\end{definition}

\noindent When $j$ is false, some input $i \in I_j$ is false, so the coefficient at $a_j$ must be set to zero at $\lambda = 0$.  However, for each false $i \in I_j$, $\ket{\psi_i}$ can be scaled arbitrarily.  The definition in Eq.~\eqnref{e:Ujdef} comes from choosing scale factors $f_i$ in order to maximize the sum of the scaled coefficients on the vertices $b_i$, under the constraint that the total norm be one, $\sum_{i \in I_j} \abs{f_i}^2 = 1$.

A few more definitions are needed to state \thmref{t:unifiedsolution}.  

\begin{definition}[Asymptotic notation]
Let $a \lesssim b$ mean that there exist constants $c_1, c_2$ such that $a \leq c_1 + b (1 + c_2 \abs{\lambda} \max_i \Ui_{i})$.  
\end{definition}

\begin{definition}[Matrix notations] \label{def:matrixnotations}
For a given input $x$, let $\Pi(x) = \sum_{\text{true $j$}} \ketbra{j}{j}$ a projection onto the true grouped inputs, $\Pim(x) = 1-\Pi(x)$, and $\Uj(x) = \sum_j \Uj_{j}(x) \ketbra{j}{j}$ a diagonal matrix of the grouped input complexities.  To simplify equations, we will generally leave implicit the dependence on $x$, writing $\Pi$, $\Pim$ and $\Uj$.  
Let $A = \left(\begin{smallmatrix} A_{OJ} \\ A_{CJ} \end{smallmatrix}\right) = \sum_j \ketbra{v_j}{j}$ with columns the vectors $\ket{v_j}$.  
\end{definition}

\begin{definition}[Moore-Penrose pseudoinverse]
For a matrix $M$, let $M^+$ denote its Moore-Penrose pseudoinverse.  If the singular-value decomposition of $M$ is $M = \sum_k m_k \ketbra{k}{k'}$ with all $m_k > 0$ and for sets of orthonormal vectors $\{\ket{k}\}$ and $\{\ket{k'}\}$, then $M^+ = \sum_k m_k^{-1} \ketbra{k'}{k}$.  Note that $M M^+ = \sum_k \ketbra{k}{k}$ is the projection onto $M$'s range.
\end{definition}

\begin{definition}[Span program witness size] \label{def:spwsize}
For span program $P$ and input subformula complexities $\Ui_{I}$, the witness size of $P$ is $\wsize P = \max_x \wsizex P x$, where for an input $x$, $\wsizex P x$ is defined as follows: 
\begin{itemize}
\item If $f_P(x) = 1$, then $\ket{t} \in \Range(A \Pi)$, 
so there is a witness $\ket{w} \in {\bf C}^{\abs{J}}$ satisfying $A \Pi \ket{w} = \ket{t}$.  Then $\wsizex P x$ is the minimum squared length, weighted by $\Uj(x)^{1/2}$, of any such witness: 
\begin{align} \label{e:spctrue}
\wsizex{P}{x} 
&= \min_{\ket w : A \Pi \ket{w} = \ket{t}} \norm{\Uj^{1/2} \ket{w}}^2 \\
&= \norm{ (A \Pi \Uj^{-1/2})^+ \ket{t} }^2 
 \enspace . \nonumber
\end{align}
\item If $f_P(x) = 0$, then $\ket{t} \notin \Range(A \Pi)$.
Therefore there is a witness $\ket{w'} \in {\bf C}^{\abs{C}+1}$ satisfying $\braket{t}{w'} = 1$ and $\Pi A^\adjoint \ket{w'} = 0$.  Then 
\begin{align} \label{e:spcfalse}
\wsizex{P}{x} 
&= \min_{\substack{\ket{w'} : \braket{t}{w'} = 1 \\ \Pi A^\adjoint \ket{w'} = 0}} \norm{\Uj^{1/2} A^\adjoint \ket{w'}}^2 \\
&= \norm{ \big(1 + (\Pim (A \Uj^{1/2})^+ A \Uj^{1/2} - 1)^+ \Pi \big) (A \Uj^{1/2})^+ \ket{t} }^{-2}
 \enspace , \nonumber
\end{align}
the inverse squared length of the projection of $(A \Uj^{1/2})^+ \ket{t}$ onto the intersection of $\Pim$ and $\Range(\Uj^{1/2} A^\adjoint)$.
\end{itemize}
\noindent
By $\ket{w_x}$, resp.~$\ket{w'_x}$, we denote a witness for input $x$ achieving the minimum in Eq.~\eqnref{e:spctrue}, resp.~\eqnref{e:spcfalse}.
\end{definition}

\noindent 
The span program witness size is easily computed on any given input $x$.  
\lemref{t:spcinterpretation} below will give two alternative expressions for $\wsizex{P}{x}$.  
Now our main result is: 

\begin{theorem} \label{t:unifiedsolution}
Consider a constant span program $P$.  Assume that $\Lambda \Ui_{i} \leq \eps$ for a small enough constant $\eps > 0$ to be determined and for all $i \in I$.  
Then 
\begin{equation} \label{e:unifiedsolution}
\Ui_{O}(x) \lesssim \wsizex{P}{x} \enspace .
\end{equation}
\end{theorem}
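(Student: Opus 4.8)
The plan is to solve the eigenvalue-$\lambda$ eigenvector equations \eqnref{e:zeroenergyequations} of the composed graph $G_{P(\varphi)}(x)$ directly, treating them as a linear system in the vertex coefficients of $G_P$ together with the (scaled) input states $\ket{\psi_i(\lambda)}$, and then expand the solution as a series in $\lambda$ around $\lambda=0$. First I would set up notation: write the candidate eigenvector $\ket{\psi_O}$ as a superposition of the output-block vertices $a_O,b_O,b_C$ and scaled copies $f_i\ket{\psi_i(\lambda)}$ of the input eigenvectors, where the scale factors $f_i$ are chosen to make the coefficients agree along each input edge (this is where the grouped-input complexity $\Uj_j$ and \defref{def:groupedinputcomplexity} enter — the $f_i$ are the optimal scale factors described there). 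Substituting the input-edge ratios $\ri_i$ (equivalently the $\si_i$) into Eqs.~\eqnref{e:zeroenergyequationsbI} and \eqnref{e:zeroenergyequationsaJ}, the remaining constraints at $b_O$, $b_C$, and $a_J$ become a finite linear system whose size is the constant $\mathrm{size}(P)$; because $P$ is constant, all matrix norms appearing ($\|A\|$, $\|A^+\|$, etc.) are $O(1)$, which is what makes the $\lesssim$ bookkeeping tractable.

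Next I would split into the two cases $f_P(x)=1$ and $f_P(x)=0$, mirroring the two branches of \defref{def:spwsize}. In the true case, at $\lambda = 0$ the constraints reduce exactly to $A\Pi\ket{w} = \ket{t}$ (as in the proof of \thmref{t:weakzeroenergy}), and the achievable squared support $\abi_O$ on $a_O$ is, after normalizing, controlled by $\norm{\Uj^{1/2}\ket{w}}^{-2}$ — here the $\Uj^{1/2}$ weighting appears precisely because each $\ket{\psi_i}$ contributes norm proportional to $\Ui_i$ (for a true input, via the $\sum_{i\in I_j}\Ui_i$ clause) into the global normalization. Choosing $\ket{w} = \ket{w_x}$ the minimizer gives $1/\abi_O \lesssim \wsizex{P}{x}$. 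For $\lambda > 0$ I would perturb: write $\ket{w} = \ket{w_x} + \lambda(\cdots) + O(\lambda^2)$, solve order by order using the pseudoinverse $(A\Pi\Uj^{-1/2})^+$, and read off that the output ratio $\ri_O$ is $-1/\lambda$ times a quantity that is $1 + O(\lambda\max_i\Ui_i)$, so that $\si_O = -1/(\ri_O\lambda) \lesssim \wsizex{P}{x}$. The false case is dual: at $\lambda=0$ one needs a witness $\ket{w'}$ with $\braket{t}{w'}=1$, $\Pi A^\adjoint\ket{w'}=0$, supported on $b_O,b_C$; the squared support $\abi_O$ on $b_O$ is governed by $\norm{\Uj^{1/2}A^\adjoint\ket{w'}}^{-2}$ because now the norm contributions come from the false inputs through the $(\sum_{\text{false }i}\Ui_i^{-1})^{-1}$ clause of \eqnref{e:Ujdef}, and again a $\lambda$-expansion shows $\ri_O = O(\lambda)$ with the constant matching $\wsizex{P}{x}$, giving $\si_O \lesssim \wsizex{P}{x}$. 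Taking the maximum over the four quantities $1/\abi_O$ and $\sup_\lambda \si_O$ (both cases) yields $\Ui_O(x)\lesssim\wsizex{P}{x}$.

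The main obstacle I expect is controlling the $\lambda$-expansion uniformly: one must show the relevant finite linear system stays invertible (bounded inverse) for all $\lambda\in(0,\Lambda)$, which is exactly why the hypothesis $\Lambda\Ui_i \le \eps$ is needed — the off-diagonal perturbations scale like $\lambda\Ui_i$, so taking $\eps$ small keeps a Neumann-series argument convergent and absorbs all the error terms into the "$+b(1+c_2\abs{\lambda}\max_i\Ui_i)$" slack of the $\lesssim$ relation. A secondary subtlety is the normalization accounting when $\abs{I_j}>1$: several input subgraphs share the vertex $a_j$, so the scale factors are coupled, and one has to verify that the choice producing \eqnref{e:Ujdef} is simultaneously (i) consistent with the shared-vertex constraint, (ii) norm-one overall, and (iii) optimal, so that no better eigenvector is missed. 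I would handle this by a short Lagrange-multiplier / Cauchy–Schwarz computation isolated as a lemma, and defer the routine perturbative estimates to \appref{s:wsizeproof} as the excerpt indicates.
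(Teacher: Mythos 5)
Your overall strategy---solve the eigenvector equations of the composed graph, optimize the input scale factors via Cauchy--Schwarz to get the $\lambda=0$ bound, and expand the $\lambda>0$ solution as a series whose leading coefficients reproduce the two optimization problems of \defref{def:spwsize}---is the same as the paper's, and your $\lambda=0$ analysis (including the treatment of the shared vertices $a_j$ when $\abs{I_j}>1$) matches \thmref{t:strongzeroenergy} essentially step for step.

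The gap is in the $\lambda>0$ expansion. You propose to solve order by order and to control invertibility ``for all $\lambda\in(0,\Lambda)$'' by a Neumann series, with the hypothesis $\Lambda\Ui_{i}\le\eps$ absorbing the perturbation. But the zeroth-order operator here is \emph{singular}: after eliminating $b_I$ and $b_C$, the matrix to be inverted is (up to scaling) $X=\cpp-\lambda^2(\cmm+1)$ with $\cpp=\ACJ\sj^{-1}\Pi\ACJ^\adjoint$, and $\cpp$ is invertible only on its range $\Delta=\ACJ\Pi(\ACJ\Pi)^+$, not on the whole constraint space. A regular perturbation or Neumann expansion around ``$\cpp^{-1}$'' therefore does not exist; on $\Deltam=1-\Delta$ the inverse of $X$ is of order $\lambda^{-2}$, not $O(\norm{\sj})$. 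This is exactly why the paper's proof (\appref{s:wsizeproof}) passes through the $2\times2$ block decomposition of $X$ with respect to $\Delta$ and $\Deltam$ and the Schur complement $\XDeltaSchur$ (\claimref{t:twobytwoblockinverse}, \lemref{t:Rtilde}), inverting $\DeltaXDelta$ and $\XDeltaSchur$ separately and recombining: the two blocks scale differently in $\lambda$, and it is precisely the cross terms between them that produce the correction $\VV=\cm^\adjoint\Deltam(\dcmm+1)^{-1}\Deltam\cm$ in the coefficient of $\lambda$, without which the false-case coefficient does not match the witness size. Relatedly, your assertion that the $\lambda$-coefficient in the false case ``matches $\wsizex{P}{x}$'' needs the alternative closed forms of \lemref{t:spcinterpretation} (the expression $\norm{(1-\dcm^+\dcm)\,S\,(1-\ACJ^\adjoint(\Pi\ACJ^\adjoint)^+)\ket\o}^2$), plus \lemref{t:wsizefudgefactors} to pass from the weighting $\sqrt{\sj}$ that actually appears in the expansion to the weighting $\sqrt{\Uj}$ used in \defref{def:spwsize}; neither step is routine enough to leave implicit. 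Until the degenerate leading order is handled, the order-by-order solution you describe is not well defined.
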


For $\lambda = 0$, Eq.~\eqnref{e:unifiedsolution} says that the achievable squared magnitude on $a_O$ or $b_O$ of a normalized eigenvalue-zero eigenvector is at least $1/\wsizex{P}{x}$, up to small controlled terms.  For $\lambda > 0$, Eq.~\eqnref{e:unifiedsolution} says that the ratio $\ri_{O} = a_O / b_O$ is either in $(0, \wsizex{P}{x} \lambda]$ or $(-\infty, -1/ (\wsizex{P}{x} \lambda)]$, up to small controlled terms, depending on whether $f_P(x)$ is false or true.  

\begin{proof}[Proof sketch of \thmref{t:unifiedsolution}]
At $\lambda = 0$, the proof of \thmref{t:unifiedsolution} is the same as that of \thmref{t:weakzeroenergy}, except scaling the inputs so as to maximize the squared magnitude on $a_O$ or $b_O$.  This maximization problem is essentially the same as the problems stated in \defref{def:spwsize} (up to additive constants).  The explicit expressions for the solutions follow by geometry.

For $\lambda > 0$, we solve the eigenvalue equations (\ref{e:zeroenergyequations}a,b,e) by inverting a matrix and applying the Woodbury formula.  We argue that all inverses exist in the given range of $\lambda$.  We obtain 
\[
\ri_{O} = a_O / b_O = \lambda + \bra{\o} \rj \ket{\o} + \lambda \bra{\o} \rj {A_{CJ}}^\adjoint X^{-1} A_{CJ} \rj \ket{\o} \enspace ,
\]
where $\ket{\o} = {A_{OJ}}^\adjoint$, $\rj = -\frac1{\lambda} \sj^{-1} \Pi  + \lambda \sj \Pim$ (with $\sj$ defined from $\si$ similarly to how $\Uj$ is defined from $\Ui$), and 
$X = A_{CJ} \sj^{-1} \Pi {A_{CJ}}^\adjoint  - \lambda^2 A_{CJ} \sj \Pim {A_{CJ}}^\adjoint - \lambda^2$.  The largest term in $X$, $A_{CJ} \sj^{-1} \Pi {A_{CJ}}^\adjoint$, is only invertible restricted to its range, $\Delta = A_{CJ} \Pi (A_{CJ} \Pi)^+$.  Therefore, we compute the Taylor series in $\lambda$ of the pseudoinverse of $\Delta X \Delta$ and of its Schur complement, $\big(X / (\Delta X \Delta)\big)$, separately, and then recombine them.  The lowest-order term in the solution again corresponds to \defref{def:spwsize} (if $f_P(x)$ is false, the $1/\lambda$ term is zero), and we bound the higher-order terms.  
\end{proof}

The full proof of \thmref{t:unifiedsolution} is given in \appref{s:wsizeproof}.  

\begin{remark}
In case $f_P(x) = 0$, $A^\adjoint \ket{w'}$ appears also in the ``canonical form" of the span program $P$~\cite{KarchmerWigderson93span}.
\end{remark}

The above analysis of span programs does not apply to the NOT gate, because the ability to complement inputs was assumed in \defref{t:spanprogramdef}.  Implementing the NOT gate $x \mapsto \overline{x}$ with a span program on the literal $\overline{x}$ would be circular.  Therefore we provide a separate analysis.

\begin{lemma}[NOT gate] \label{t:not}
Consider a NOT gate, implemented as two weight-one edges connected as in \defref{t:notgatedef}.
Assume $\abs{\lambda} \le 1 / (\sqrt{2 \Ui_{i}})$.  Then $\Ui_{O} \lesssim \Ui_{i}$.
\end{lemma}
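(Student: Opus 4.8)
The plan is to directly solve the eigenvalue-$\lambda$ equations for the little three-vertex gadget and track how the output ratio $\ri_O = a_O/b_O$ and the $\lambda=0$ support relate to the corresponding quantities at the input edge. The gadget has two weight-one edges: the input edge $(a_i,b_i)$ and the output edge $(a_O,b_O)$, sharing the middle vertex $a_i = b_O$. Let me name the coefficients of a candidate eigenvector: $a_O, b_O=a_i, b_i$, together with whatever state $\ket{\psi_i(\lambda)}$ hangs below $b_i$ satisfying all of the $i$th subgraph's constraints except the one at $a_i$. The only genuinely new eigenvector equations are the ones at $a_O$ (which we are allowed to violate, being the $a_O$-constraint of the composed graph), at $b_O$, and at $a_i$; the constraint at $b_i$ already lives inside $\ket{\psi_i}$. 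So first I would write these out: the $b_O$-equation reads $\lambda b_O = a_O + a_i$ — but wait, since $b_O = a_i$ is a single vertex, I should be careful and write the single constraint at that shared vertex, $\lambda a_i = a_O + b_i$ (the two neighbors of the middle vertex are $a_O$ along the output edge and $b_i$ along the input edge, both with weight one). The $a_O$-equation $\lambda a_O = b_O = a_i$ is the one we may drop when $a_O$ is to be supported; its analog $\lambda b_i = (\text{stuff from }\ket{\psi_i})$ is the constraint we may drop when we instead want support on $b_i$ and push the eigenvector into $\ket{\psi_i}$.

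Next I would do the $\lambda=0$ case, which should be immediate: at $\lambda=0$ the shared-vertex constraint gives $a_O = -b_i$, so the output edge carries exactly (up to sign) what the input edge carries. Hence if $\varphi$ restricted to the subformula below is true, $\ket{\psi_i}$ has squared support $\sigma_i \ge 1/\Ui_i$ on $b_i$, and transferring that to $a_O$ (after renormalizing the three gadget vertices, which only changes things by a bounded constant factor absorbed into $\lesssim$) shows the composed state has squared support $\gtrsim 1/\Ui_i$ on $a_O$; symmetrically for the false case, support moves to $b_i$ side and we keep it on $b_i = b_O$. Either way $1/\abi_O \lesssim \Ui_i$, which is the $\lambda=0$ half of the claim. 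For $\lambda > 0$ I would solve for $\ri_O = a_O/b_O = a_O/a_i$ in terms of $\ri_i = a_i/b_i$ (this is where the input-edge ratio enters). From $\lambda a_i = a_O + b_i$ we get $a_O = \lambda a_i - b_i$, so $\ri_O = \lambda - b_i/a_i = \lambda - 1/\ri_i$. Then I translate through the $\si$-variables of \defref{def:subformulacomplexity}: a NOT gate swaps the true/false role of its input versus its output, so when the input literal (i.e., the subformula below) is false, $\si_i = \ri_i/\lambda$, and the NOT-gate output behaves as ``true'' and should be measured by $\si_O = -1/(\ri_O\lambda)$; plugging $\ri_O = \lambda - 1/\ri_i = \lambda - \lambda/(\lambda^2 \si_i \cdot \lambda^{-1})$... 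I'd compute $\ri_O = \lambda - 1/(\lambda \si_i)$, hence $\si_O = -1/(\ri_O \lambda) = -1/(\lambda^2 - 1/\si_i) = \si_i/(1 - \lambda^2 \si_i)$, and since $\lambda^2 \si_i \le \lambda^2 \Ui_i \le 1/2$ by hypothesis, $\si_O \le 2\si_i \le 2\Ui_i \lesssim \Ui_i$. The true-input case is the mirror image and gives the same bound. Taking the sup over $\lambda \in (0,\Lambda)$ and the max with the $\lambda=0$ bound yields $\Ui_O \lesssim \Ui_i$.

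The main obstacle — really the only subtle point — is getting the sign and true/false bookkeeping exactly right across the NOT gadget, since the gadget both complements the dual-rail encoding (the remark after \defref{t:notgatedef}) and identifies $b_O$ with $a_i$, so the roles of the $\ri$ and $\si$ variables shift in a way that must be matched carefully against \defref{def:subformulacomplexity}; a sign error there would flip which of the two regimes ($\ri_O$ large negative vs.\ small positive) we land in. The other thing to watch is that renormalizing the combined state costs a constant factor in the support, but that is exactly what the $\lesssim$ notation is designed to absorb, and one should double-check that the $c_1$ additive and $c_2\abs{\lambda}\max_i \Ui_i$ multiplicative slacks in the definition of $\lesssim$ comfortably cover the $1/(1-\lambda^2\si_i)$ correction and the three-extra-vertices renormalization; given $\lambda^2 \Ui_i \le 1/2$ this is routine.
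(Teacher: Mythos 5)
Your approach is the same as the paper's: write the single eigenvector constraint at the shared middle vertex, read off $a_O = -b_i$ at $\lambda=0$ and $\ri_{O} = \lambda - 1/\ri_{i}$ at $\lambda>0$, and translate through the $\si$-variables; your false-input computation $\si_{O} = \si_{i}/(1-\lambda^2\si_{i})$ is exactly the paper's. There are two slips. First, your $\lambda=0$ bookkeeping is inverted relative to \defref{def:subformulacomplexity}: when the input is \emph{true}, $\abi_i$ is the squared support on $a_i$, which \emph{is} the output vertex $b_O$ (the identification is $a_i=b_O$, not ``$b_i=b_O$'' as you wrote), so $\abi_O=\abi_i$ directly, with no transfer and no renormalization; it is the \emph{false}-input case in which the support sits on $b_i$ and must be transferred to $a_O$ via $a_O=-b_i$, at the cost of adding one new vertex to the norm, giving $1/\abi_O = 1/\abi_i + 1$. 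You correctly flagged this as the delicate point and then landed on the wrong side of it.

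Second, and more substantively: the coarsening $\si_{O}\le 2\si_{i}$ does \emph{not} establish $\si_{O}\lesssim\si_{i}$. The relation $a\lesssim b$ allows an additive constant and a multiplicative factor $1+c_2\abs{\lambda}\max_i\Ui_{i}$, not a flat factor of $2$; a constant multiplicative loss per gate would compound to $2^{\mathrm{depth}}$ in the recurrence of \thmref{t:formulaspectrum} and destroy the conclusion $\Ui_{g}=O(\ADV(\varphi_g))$. You must retain the sharper estimate that follows from the exact expression you already derived: $\si_{O}=\si_{i}/(1-\lambda^2\si_{i})\le\si_{i}(1+2\lambda^2\si_{i})$ when $\lambda^2\si_{i}\le 1/2$, and $2\lambda^2\si_i \le \sqrt2\,\abs{\lambda}\,\Ui_i$ under the stated hypothesis, which is of the required form. (The true-input case, which you dismiss as a mirror image, is in fact simpler and not symmetric: $\si_{O}=\ri_{O}/\lambda=\si_{i}+1$, an additive constant.) With those two corrections your argument coincides with the paper's.
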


\begin{proof}
{\subsubsecfnt Analysis at $\lambda = 0$.}
If the input is true, then $\abi_i$ measures the squared support on $a_i$ of a normalized $\lambda = 0$ eigenvector.  Then $\abi_O = \abi_i$, since $a_i = b_O$ the output vertex.
If the input is false, so $b_i = \sqrt{\abi_i}$, then $b_i + a_O = 0$.  Therefore, we simply need to renormalize: $\abi_O = \abi_i / (1 + \abi_i)$, or equivalently
$\frac1{\abi_O} = \frac1{\abi_i} + 1$.

\subparagraph{Analysis for small $\lambda > 0$}
We are given $\ri_{i} = a_i / b_i$.  The eigenvector equation is $b_i + a_O = \lambda a_i = \lambda b_O$.  Therefore, $\ri_{O} = a_O / b_O = \lambda - 1/\ri_{i}$.  
If the input is false, so $\si_{i} = \ri_{i} / \lambda$, then $\si_{O} = -1 / (\lambda \ri_{O}) = \si_{i} / (1 - \lambda^2 \si_{i})$.  Therefore, $\si_{i} < \si_{O} \leq \si_{i} (1 + 2 \lambda^2 \si_{i})$ since $\lambda^2 \si_{i} \leq 1/2$.
If the input is true, so $\si_{i} = -1 / (\lambda \ri_{i})$, then $\si_{O} = \ri_{O} / \lambda = \si_{i} + 1$ .

Therefore $\Ui_{O} \lesssim \Ui_{i}$ as claimed.
Note that w.l.o.g.\ we may assume there are never two NOT gates in a row in the formula $\varphi$, so the additive constants lost do not accumulate.  
\end{proof}

\section{Formula evaluation algorithm\texorpdfstring{\!\!\!}{}}
\label{sec:algorithm}

In \secref{s:result}, we specify the gate set $\gateset$ (\defref{t:gatesetdef}) and define the correct notion of ``balance" for a formula that includes different kinds of gates (\defref{t:adversarybalanceddef}).  These two definitions allow us to formulate the general statement of our results, \thmref{t:result}, of which \thmref{t:balancedmajority} is a corollary.

In \secref{s:gatebygate}, we present span programs of optimal witness size for each of the gates in $\gateset$.  \thmref{t:formulaspectrum} in \secref{s:spectralanalysis} plugs together the spectral analyses of the individual span programs to give a spectral analysis of $G(\varphi)$.  Finally, we sketch in \secref{s:qalg} how this implies a quantum algorithm, therefore proving \thmref{t:result}.

\subsection{General formula evaluation result} \label{s:result}

\begin{definition}[Extended gate set $\gateset$] \label{t:gatesetdef}
Let 
\begin{equation}\begin{split}
\gateset' &= \left\{ \begin{array}{c}
\text{arbitrary 1-, 2-, or 3-bit gates,} \\
\text{$O(1)$-fan-in EQUAL gates}
\end{array}
\right\} \\
\gateset &= \left\{ 
\begin{array}{c}
\text{$O(1)$-size $\{\text{AND, OR, NOT, PARITY}\}$} \\ 
\text{read-once formulas composed onto} \\
\text{the gates from $\gateset'$} 
\end{array}
\right\}
\end{split}\end{equation}
\end{definition}

\begin{example}
The gate set $\gateset$ includes simple gates like AND, as well as substantially more complicated gates like $\MAJ(x_1,x_2,x_3) \wedge (x_4 \oplus x_5 \oplus \cdots \oplus x_{k-1} \oplus (x_k \vee x_{k+1}))$, provided $k = O(1)$.
It does not include gates from $\gateset'$ composed onto gates from $\gateset$: for example $\MAJ(x_1, x_2 \oplus x_3, x_4 \wedge x_5) \notin \gateset$.
\end{example}

To define ``adversary-balanced" formulas, we need to define the nonnegative-weight quantum adversary bound.

\begin{definition}[Nonnegative adversary bound] \label{t:adversarydef}
Let $f: \{0,1\}^k \rightarrow \{0,1\}$.  Define 
\begin{equation} \label{e:adversarydef}
\ADV(f) = \max_{\substack{\Gamma \geq 0\\ \Gamma \neq 0}} \frac { \norm{\Gamma} }
  { \max_i \norm{ \Gamma \circ D_i } }
 \enspace ,
\end{equation}
where $\Gamma \circ D_i$ denotes the entrywise matrix product between $\Gamma$ and $D_i$ a zero-one-valued matrix defined by $\bra{x} D_i \ket{y} = 1$ if and only if bitstrings $x$ and $y$ differ in the $i$th coordinate, for $i \in \{1, \ldots, k\}$.
The maximum is over all $2^k \times 2^k$ symmetric matrices $\Gamma$ with nonnegative entries satisfying $\bra{x} \Gamma \ket{y} = 0$ if $f(x) = f(y)$.  
\end{definition}

The motivation for this definition is that $\ADV(f)$ gives a lower bound on the number of queries to the \emph{phase-flip input oracle}
\begin{equation} \label{e:ox}
O_x : \ket{b, i} \mapsto (-1)^{b \cdot x_i} \ket{i}
\end{equation}
required to evaluate $f$ on input $x$.

\begin{theorem}[Adversary lower bound {\cite{Ambainis06polynomial, BarnumSaksSzegedy03adv}}] \label{t:adversaryquerybound}
The two-sided $\epsilon$-bounded error quantum query complexity of function $f$, $Q_{\epsilon}(f)$, is at least $\frac{1 - 2 \sqrt{\epsilon (1-\epsilon)}}{2} \ADV(f)$.
\end{theorem}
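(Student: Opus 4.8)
The plan is to recall the standard positive-weight (``spectral'') adversary argument of~\cite{Ambainis06polynomial, BarnumSaksSzegedy03adv}, since \thmref{t:adversaryquerybound} is exactly their lower bound. Fix a matrix $\Gamma \geq 0$, $\Gamma \neq 0$, feasible in the optimization \eqnref{e:adversarydef} (i.e.\ $\bra x \Gamma \ket y = 0$ whenever $f(x) = f(y)$), and let $\ket\delta$ be a unit-norm principal eigenvector of $\Gamma$; by Perron--Frobenius it may be taken entrywise nonnegative, so $\bra\delta \Gamma \ket\delta = \norm\Gamma$. Given any quantum algorithm evaluating $f$ with error at most $\epsilon$ using $T$ queries to the phase oracle $O_x$ of \eqnref{e:ox}, write $\ket{\psi_x^t}$ for the global state after $t$ queries on input $x$, and define the progress measure
\[
W_t = \sum_{x,y} \delta_x\, \delta_y\, \bra x \Gamma \ket y \braket{\psi_x^t}{\psi_y^t} \enspace .
\]
The argument shows $W_t$ starts large, ends small, and changes slowly, forcing $T$ to be large.

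First I would pin down the two endpoints. At $t=0$ all $\ket{\psi_x^0}$ coincide, so the Gram matrix is all-ones and $W_0 = \bra\delta \Gamma \ket\delta = \norm\Gamma$. At $t = T$, correctness plus $\bra x \Gamma \ket y = 0$ for $f(x) = f(y)$ means every surviving pair has $f(x) \neq f(y)$, and then the two final states must be $\epsilon$-distinguishable; the trace-distance/fidelity relation for pure states gives $\abs{\braket{\psi_x^T}{\psi_y^T}} \leq 2\sqrt{\eps(1-\eps)}$, hence $\abs{W_T} \leq 2\sqrt{\eps(1-\eps)}\,\norm\Gamma$.

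The heart of the proof, and the step I expect to be the main obstacle, is bounding the per-query change $\abs{W_{t+1} - W_t}$. Input-independent unitaries preserve inner products, so the change comes only from the query: $\braket{\psi_x^{t+1}}{\psi_y^{t+1}} = \bra{\psi_x^t} O_x^\adjoint O_y \ket{\psi_y^t}$, and $O_x^\adjoint O_y - 1$ acts as $-2$ on the subspace $\ket 1 \otimes \ket i$ of the query register precisely for those $i$ with $x_i \neq y_i$, i.e.\ with $\bra x D_i \ket y = 1$. Writing $\ket{\phi_x^{(i)}}$ for the component of $\ket{\psi_x^t}$ on $\ket 1 \otimes \ket i$ and $\ket{\tilde\phi_x^{(i)}} = \delta_x \ket{\phi_x^{(i)}}$, one obtains $W_{t+1} - W_t = -2 \sum_i \sum_{x,y} \bra x (\Gamma \circ D_i) \ket y \braket{\tilde\phi_x^{(i)}}{\tilde\phi_y^{(i)}}$. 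The inner double sum is a Gram-matrix contraction of $\Gamma \circ D_i$, so one needs the lemma $\abs{\sum_{x,y} M_{xy} \braket{u_x}{u_y}} \leq \norm M \sum_x \norm{u_x}^2$ for Hermitian $M$ and arbitrary vectors $\{u_x\}$; applying it gives a bound $\norm{\Gamma\circ D_i}\sum_x \norm{\tilde\phi_x^{(i)}}^2$ for each $i$. This contraction bound is what makes the spectral norms $\norm{\Gamma \circ D_i}$ — rather than a cruder quantity — the right denominator, and getting it cleanly is the one genuinely non-bookkeeping point.

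Finally I would sum over $i$, using that the subspaces $\ket 1 \otimes \ket i$ are mutually orthogonal so $\sum_i \norm{\phi_x^{(i)}}^2 \leq \norm{\psi_x^t}^2 = 1$; together with $\sum_x \delta_x^2 = 1$ this yields $\abs{W_{t+1} - W_t} \leq 2 \max_i \norm{\Gamma \circ D_i}$. Telescoping, $\norm\Gamma\,(1 - 2\sqrt{\eps(1-\eps)}) \leq \mathrm{Re}(W_0 - W_T) \leq 2T \max_i \norm{\Gamma \circ D_i}$, so $T \geq \tfrac{1 - 2\sqrt{\eps(1-\eps)}}{2}\cdot\tfrac{\norm\Gamma}{\max_i\norm{\Gamma\circ D_i}}$. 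Taking the supremum over feasible $\Gamma$ on the right-hand side replaces the fraction by $\ADV(f)$, giving $Q_\epsilon(f) \geq \tfrac{1-2\sqrt{\eps(1-\eps)}}{2}\,\ADV(f)$ as claimed.
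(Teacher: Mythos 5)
The paper states this theorem as a citation to~\cite{Ambainis06polynomial, BarnumSaksSzegedy03adv} and gives no proof of its own, and your reconstruction is exactly the standard spectral-adversary progress-measure argument from those references: endpoints $W_0 = \norm{\Gamma}$ and $\abs{W_T} \leq 2\sqrt{\eps(1-\eps)}\norm{\Gamma}$ (the latter using entrywise nonnegativity of $\Gamma$ and $\ket{\delta}$), and the per-query bound via the Gram-contraction lemma and orthogonality of the $\ket{1}\otimes\ket{i}$ query components. The argument is correct, including the constant.
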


\noindent To match the lower bound of \thmref{t:adversaryquerybound}, our goal will be to use $O(\ADV(\varphi))$ queries to evaluate $\varphi$.

\begin{figure}
\begin{center}
\begin{tabular}{|c c| c c|}
\hline \hline
Gate $f$& $\ADV(f)$& Gate $f$& $\ADV(f)$ \\
\hline
0 & 0 
& $\MAJ(x_{[3]}) = (x_1 \wedge x_2) \vee ((x_1 \vee x_2) \wedge x_3)$ & 2 \\
$x_1$ & 1 
& $\EQUAL_3(x_{[3]}) = (x_1 \wedge x_2 \wedge x_3) \vee (\overline{x_1} \wedge \overline{x_2} \wedge \overline{x_3})$ & $3/\sqrt 2$ \\
$x_1 \wedge x_2$ & $\sqrt 2$ 
& $(x_1 \wedge x_2 \wedge x_3) \vee (\overline{x_1} \wedge \overline{x_2})$ & $\sqrt{3+\sqrt 3}$ \\
$x_1 \oplus x_2$ & 2 
& $x_1 \vee (x_2 \oplus x_3)$ & $\sqrt{5}$ \\
$x_1 \wedge x_2 \wedge x_3$ & $\sqrt 3$ 
& $x_1 \oplus (x_2 \wedge x_3)$ & $1+\sqrt 2$ \\
$x_1 \vee (x_2 \wedge x_3)$ & $\sqrt 3$ 
& $\mathrm{EXACT}_{\text{2 of 3}}(x_{[3]}) = \MAJ(x_{[3]}) \wedge (\overline{x_1} \vee \overline{x_2} \vee \overline{x_3})$ & $\sqrt 7$ \\
$(x_3 \wedge x_2) \vee (\overline{x_3} \wedge x_1)$ & 2 
& $x_1 \oplus x_2 \oplus x_3$ & 3 \\
\hline \hline
\end{tabular}
\end{center}
\newsavebox{\tmpbox}
\caption{Binary gates on up to three bits.  Up to equivalences---permutation of inputs, complementation of some or all inputs or output---there are fourteen binary gates on three inputs $x_1, x_2, x_3$.  Adversary bounds $\ADV(f)$ for all functions $f$ on up to four bits have been computed by~{\protect\cite{HoyerLeeSpalek07negativeadvurl}}, and see~{\protect\cite{ReichardtSpalek07formulaurl}}.}  
\label{f:threebitgates}
\end{figure}

\begin{definition}[Adversary-balanced formula] \label{t:adversarybalanceddef}
For a gate $g$ in formula $\varphi$, let $\varphi_g$ denote the subformula of $\varphi$ rooted at $g$.
Define $\varphi$ to be adversary-balanced if for every gate $g$, the adversary lower bounds for its input subformulas are the same; if $g$ has children $h_1, \ldots, h_k$, then $\ADV(\varphi_{h_1}) = \cdots = \ADV(\varphi_{h_k})$.  
\end{definition}
\noindent
\defref{t:adversarybalanceddef} is motivated by a version of an adversary composition result~\cite{Ambainis06polynomial, HoyerLeeSpalek07negativeadv}:

\begin{theorem}[Adversary composition {\cite{HoyerLeeSpalek07negativeadv}}] \label{t:weakadversarycomposition}
Let $f = g \circ (h_1, \ldots, h_k)$, where $\ADV(h_1) = \cdots = \ADV(h_k)$ and $\circ$ denotes function composition.  Then $\ADV(f) = \ADV(g) \ADV(h_1)$.  
\end{theorem}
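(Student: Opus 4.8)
The plan is to prove the composition theorem $\ADV(g \circ (h_1, \ldots, h_k)) = \ADV(g)\ADV(h)$ in two inequalities. For the lower bound $\ADV(f) \geq \ADV(g)\ADV(h)$, I would take optimal adversary matrices $\Gamma_g$ for $g$ (on $\{0,1\}^k$) and $\Gamma_h$ for each $h_i$ (on $\{0,1\}^{m_i}$), with $\norm{\Gamma_h} = \ADV(h)\max_\ell\norm{\Gamma_h\circ D_\ell}$, and build a combined matrix $\Gamma$ on the input space of $f$ by a tensor-product-like construction: roughly, $\bra{x}\Gamma\ket{y}$ should be $\bra{b(x)}\Gamma_g\ket{b(y)}$ times a product of factors associated to the blocks where $x$ and $y$ differ, where $b(x) \in \{0,1\}^k$ is the vector of subformula values $(h_1(x^{(1)}), \ldots, h_k(x^{(k)}))$. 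One checks that $\Gamma \geq 0$, that $\bra{x}\Gamma\ket{y} = 0$ when $f(x) = f(y)$ (inherited from $\Gamma_g$ since then $b(x), b(y)$ give the same $g$-value), and then estimates $\norm{\Gamma}$ from below and $\max_i \norm{\Gamma\circ D_i}$ from above. The cleanest route is to use the spectral/eigenvector characterization: take the principal eigenvector $\ket{\delta}$ of $\Gamma_g$ and principal eigenvectors $\ket{\beta_i}$ of $\Gamma_h$, form a suitable product state, and show it is (approximately, or exactly, after the right normalization) an eigenvector of $\Gamma$ with eigenvalue $\norm{\Gamma_g}\prod$-type bound; the condition $\ADV(h_1) = \cdots = \ADV(h_k)$ is exactly what makes the cross-terms combine cleanly.

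For the upper bound $\ADV(f) \leq \ADV(g)\ADV(h)$, I would invoke the dual formulation of the adversary bound (the adversary bound equals a minimization over ``query-weight schemes'' / the SDP dual), or alternatively cite the negative-weight adversary composition machinery; however, since we only have the nonnegative version stated, the intended argument is presumably the one from~\cite{HoyerLeeSpalek07negativeadv}: pass to the negative-weight adversary $\ADV^\pm$, use that $\ADV^\pm$ composes exactly and multiplicatively without any balance condition, and that $\ADV^\pm = \ADV$ for the nonnegative case is not needed — rather, one sandwiches $\ADV(f) \le \ADV^\pm(f) = \ADV^\pm(g)\ADV^\pm(h)$ and, under the balance hypothesis, $\ADV^\pm(g)\ADV^\pm(h) \le \ADV(g)\ADV(h)$ via a matching-lower-bound construction, so all inequalities collapse to equalities. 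Concretely I would: (i) state the dual SDP for $\ADV$; (ii) given optimal dual solutions (query schemes) for $g$ and for each $h_i$, compose them by assigning to a query in block $i$ the product of the $g$-scheme weight for coordinate $i$ and the $h_i$-scheme weight for that query; (iii) verify feasibility of the composed scheme for $f$ and that its objective value is $\ADV(g)\ADV(h)$.

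The key steps in order: (1) set up both the primal (max) and dual (min) forms of $\ADV$; (2) the lower-bound direction by the tensor construction on $\Gamma$ matrices, using principal eigenvectors and the equal-$\ADV(h_i)$ hypothesis to control norms; (3) the upper-bound direction by composing dual feasible solutions; (4) combine to get equality. I expect the main obstacle to be the lower-bound norm estimate: one must show $\norm{\Gamma} \geq \norm{\Gamma_g}\cdot\norm{\Gamma_h}^{?}$ with the right exponent bookkeeping across blocks, because the ``product over differing blocks'' structure means $\Gamma$ is not literally a tensor product and its top eigenvalue requires a careful argument — this is precisely where the condition $\ADV(h_1) = \cdots = \ADV(h_k)$ is used, since unequal inner bounds would force a weighted combination that degrades the ratio. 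I would handle it by exhibiting the explicit product eigenvector and computing $\Gamma$ acting on it block-by-block, reducing to the single-gate eigenvalue equations for $\Gamma_g$ and $\Gamma_h$; the $\Gamma\circ D_i$ bound is then comparatively routine since restricting to coordinate-$i$ differences isolates one $h$-block and factors through $\Gamma_g\circ D_{\text{block}(i)}$ times $\Gamma_h\circ D_\ell$.
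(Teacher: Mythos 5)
This theorem is not proved in the paper at all: it is imported verbatim from H{\o}yer, Lee and \v{S}palek \cite{HoyerLeeSpalek07negativeadv} (building on Ambainis \cite{Ambainis06polynomial}), so there is no in-paper proof to compare against. Your outline does track the standard proof in those references --- lower bound by a composed adversary matrix, upper bound by composing dual (minimax/weight-scheme) solutions --- so the overall architecture is right. But as written it is a plan, not a proof, and the one step you flag as the main obstacle is also the one you leave genuinely unresolved.

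Concretely: in the lower-bound construction, $\bra{x}\Gamma\ket{y}$ cannot simply be $\bra{b(x)}\Gamma_g\ket{b(y)}$ times a product of $\Gamma_{h_i}$-entries over the \emph{differing} blocks. On a block where $h_i(x^{(i)}) = h_i(y^{(i)})$ the matrix $\Gamma_{h_i}$ is identically zero by the adversary constraint, so you need a separate matrix for agreeing blocks --- in the standard construction, essentially the rank-one projector built from the restriction of $\Gamma_{h_i}$'s principal eigenvector to inputs with the fixed $h_i$-value. Without specifying this, the product state is not an eigenvector and the norm computation $\norm{\Gamma} \geq \norm{\Gamma_g}\,\norm{\Gamma_{h}}$ does not go through; this is exactly where the bookkeeping you worry about lives, and also where a normalization of the $\Gamma_{h_i}$ (so that agreeing and differing blocks contribute comparably) is forced --- that normalization, not the eigenvector cross-terms per se, is where the hypothesis $\ADV(h_1) = \cdots = \ADV(h_k)$ enters. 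Your upper-bound paragraph is also internally muddled: the detour through $\ADV^\pm$ is both unnecessary and circular as stated (you invoke exact multiplicative composition of $\ADV^\pm$, which is itself not known in general --- the paper itself notes only $\ADV^\pm(f) \geq \ADV^\pm(g)\ADV^\pm(h_1)$ is known). The clean route is the one in your items (i)--(iii): compose feasible dual weight schemes directly, which gives $\ADV(f) \leq \ADV(g)\max_i \ADV(h_i)$ with no appeal to $\ADV^\pm$. With the agreeing-block matrices supplied and the $\ADV^\pm$ digression deleted, your plan becomes the proof in \cite{HoyerLeeSpalek07negativeadv}.
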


If $\varphi$ is adversary-balanced, then by \thmref{t:weakadversarycomposition} $\ADV(\varphi_g)$ is the product of the gate adversary bounds along any non-self-intersecting path $\chi$ from $g$ up to an input, $\ADV(\varphi_g) = \prod_{h \in \chi} \ADV(h)$.
Note that $\ADV(\neg f) = \ADV(f)$, so NOT gates can be inserted anywhere in an adversary-balanced formula.

The main result of this paper is

\begin{theorem}[Main result] \label{t:result}
There exists a quantum algorithm that evaluates an adversary-balanced formula $\varphi(x)$ over $\gateset$ using $O(\ADV(\varphi))$ queries to the phase-flip input oracle $O_x$.  After efficient classical preprocessing independent of the input $x$, and assuming $O(1)$-time coherent access to the preprocessed classical string, the running time of the algorithm is $\ADV(\varphi) (\log \ADV(\varphi))^{O(1)}$.
\end{theorem}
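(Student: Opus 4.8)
The plan is to assemble \thmref{t:result} from the components advertised in the introduction: optimal span programs for each gate in $\gateset'$ (and hence, by composition, for the $O(1)$-size $\{\text{AND, OR, NOT, PARITY}\}$ gadgets that make up $\gateset$), the span program composition machinery of \defref{t:composespanprogramdef}, the quantitative spectral bound of \thmref{t:unifiedsolution}, and finally a phase-estimation quantum walk. First I would carry out the gate-by-gate analysis (\secref{s:gatebygate}): for each gate $g \in \gateset'$ exhibit an explicit span program $P_g$ and show, using \defref{def:spwsize} and \lemref{t:spcinterpretation}, that $\wsize{P_g} = \ADV(g)$ when all input complexities $\Ui_i = 1$. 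The $\MAJ$ example (\exampleref{t:maj3example}) and the PARITY/OR programs alluded to in \lemref{t:parityorwsize} are the prototypes; the EQUAL gates and the generic two- and three-bit gates need their own small programs, with weights chosen (as in \exampleref{t:maj3example}) to make the witness size match the adversary bound exactly. For the $\{\text{AND, OR, NOT, PARITY}\}$ sub-formulas appearing as single gates in $\gateset$, I would invoke a composition lemma for witness size — that $\wsize{Q \circ Q_{[n]}}$ multiplies appropriately — together with \thmref{t:weakadversarycomposition} to conclude the composed gadget still has witness size equal to its adversary bound.

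Second, I would prove \thmref{t:formulaspectrum} (\secref{s:spectralanalysis}): composing the per-gate analyses up the formula tree. Here the key device is \thmref{t:unifiedsolution}, applied recursively from the leaves to the root. At a leaf, $\Ui_i = 1$. At each internal gate $g$ with children of equal adversary bound $W = \ADV(\varphi_{h_\ell})$, the subformula complexities $\Ui_i$ of its inputs are (by induction) bounded by $\wsize{P_{\varphi_{h_\ell}}}/W^2$ or so — a bounded constant after normalizing out the $W$ scaling — so the hypothesis $\Lambda \Ui_i \le \eps$ of \thmref{t:unifiedsolution} holds provided we run the walk at eigenvalue scale $\lambda = \Theta(1/\ADV(\varphi))$, and we get $\Ui_O(x) \lesssim \wsize{P_g}$ with the gate's witness size, which the first part showed equals $\ADV(g)$. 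Multiplying up a root-to-leaf path and using \thmref{t:weakadversarycomposition} gives $\Ui_O \le c \cdot \ADV(\varphi)^2$ at the root, for a constant $c$ — i.e. the normalized $\lambda=0$ eigenvector has $\Omega(1/\ADV(\varphi)^2)$ squared overlap with $\ket{a_O}$ when $\varphi(x)=1$, and a spectral gap of $\Omega(1/\ADV(\varphi))$ around zero with no $a_O$-supported small-eigenvalue eigenvectors when $\varphi(x)=0$. One must be careful that the $\lesssim$ additive constants accumulate only additively over the $O(\mathrm{depth})$ levels and stay negligible against the multiplicative growth; the NOT-gate remark in \lemref{t:not} (no two NOTs in a row) is the analogue of the bookkeeping needed here.

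Third, \secref{s:qalg}: turn the spectral picture into an algorithm. Run phase estimation on the quantum walk $U = e^{i A_{G(\varphi)}(x)}$ (or a discrete-time Szegedy-type walk with the same spectral structure), started in the state $\ket{a_O}$, with precision $\delta = \Theta(1/\ADV(\varphi))$. If $\varphi(x)=1$ the large overlap with the $\lambda=0$ eigenvector makes phase estimation output $0$ with constant probability; if $\varphi(x)=0$ the $\Omega(1/\ADV(\varphi))$ gap means phase estimation never outputs a phase within $\delta$ of $0$. The number of oracle calls is $O(1/\delta) = O(\ADV(\varphi))$ to the phase-flip oracle $O_x$, matching \thmref{t:adversaryquerybound} up to constants. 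The time bound follows because one step of the walk on $G(\varphi)$ costs $\poly\log$ overhead given $O(1)$-time coherent access to a preprocessed adjacency-list description of $G(\varphi)$ — the preprocessing computes the $O(1)$-size gadget weights and the tree structure, independent of $x$.

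The main obstacle, I expect, is the inductive spectral bound of \thmref{t:formulaspectrum}: showing that the hypothesis $\Lambda \Ui_i \le \eps$ of \thmref{t:unifiedsolution} propagates all the way up the tree with a single fixed $\eps$ and a single fixed walk eigenvalue-scale $\lambda$, while the additive slack hidden in $\lesssim$ does not swamp the $1/\ADV(\varphi)^2$ signal. This requires that the per-gate witness sizes, after rescaling by the gate adversary bounds, are uniformly bounded constants (true since $\gateset$ has only $O(1)$-size gates, hence finitely many gadget shapes), and that the recursion contracts rather than expands the normalized complexities — essentially a fixed-point argument on the quantity $\Ui_O/\ADV(\varphi_g)^2$. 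Getting the constants to close is exactly the technical heart deferred to \appref{s:wsizeproof}, and everything else (the gate programs, the walk, phase estimation) is routine once that is in place.
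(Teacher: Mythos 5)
Your overall architecture (gate-by-gate optimal span programs, inductive spectral analysis via \thmref{t:unifiedsolution}, phase estimation on a walk over $G(\varphi)$) matches the paper's, but there are two concrete problems in the middle step that break the completeness/query-count accounting. First, the exponent in your inductive bound is wrong: the witness size of \defref{def:spwsize} is degree-one homogeneous in the grouped input complexities $\Uj$, so the recurrence is $\Ui_g \lesssim \ADV(g)\max_i \Ui_{h_i}$, which telescopes (using \thmref{t:weakadversarycomposition}) to $\Ui_g = O(\ADV(\varphi_g))$ --- linear, not the $O(\ADV(\varphi)^2)$ you claim, and there is no ``normalizing out the $W$ scaling'' step. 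Second, and more importantly, even with the correct linear bound the normalized eigenvalue-zero eigenvector has squared support only $\Omega(1/\ADV(\varphi))$ on $a_O$, which is \emph{not} a constant overlap; your assertion that phase estimation then outputs $0$ ``with constant probability'' is false, and a naive repetition or amplitude amplification would cost an extra factor of order $\sqrt{\ADV(\varphi)}$. The paper closes this gap with a final amplification step you are missing: in $\tilde G_P(x)$ the output edge $(a_O,b_O)$ is given weight $w = \eps_w/\sqrt{\ADV(\varphi)}$ rather than $1$, which rescales the eigenvector equation at the output ($\hat a_O = w\, a_O$) so that the squared support on $a_O$ becomes $\Omega(1)$, while on false inputs the constraint $\lambda a_O = w b_O$ combined with the ratio bound $\abs{\hat a_O/b_O} \le O(\abs{\lambda}\ADV(\varphi))$ still yields a contradiction for $\abs{\lambda} \le \eps/\ADV(\varphi)$. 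Without this reweighting (or an equivalent device) the stated $O(\ADV(\varphi))$ query bound does not follow.

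A smaller point: running phase estimation on $e^{iA_{G(\varphi)}(x)}$ is not merely an aesthetic alternative to the discrete-time walk. Simulating the continuous-time evolution with discrete oracle queries via Lie--Trotter product formulas costs a super-constant ($N^{o(1)}$) overhead, which is exactly why the paper routes through Szegedy's correspondence to build $U_x = \tilde O_x U_{0^N}$ in which each walk step consumes exactly one query. You do mention the discrete-time option parenthetically, so this is recoverable, but to actually prove the $O(\ADV(\varphi))$ bound you must commit to it and verify (as in \lemref{t:szegedizationapplied}) that the spectrum of $U_x$ near $\pm 1$ faithfully reflects the spectrum of $A_{\tilde G_P(x)}$ near $0$.
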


From \figref{f:threebitgates}, the adversary bound $\ADV(\MAJ) = 2$.  By \thmref{t:weakadversarycomposition} the adversary bound for the balanced $\MAJ$ formula of depth $d$ is $2^d$.  \thmref{t:balancedmajority} is therefore essentially a corollary of \thmref{t:result} (for the balanced $\MAJ$ formula, coherent access to a preprocessed classical string is not needed).  

\subsection{Optimal span programs for gates in \texorpdfstring{$\gateset$}{S}} \label{s:gatebygate}

In this section, we will substitute specific span programs into \defref{def:spwsize}, in order to prove:

\begin{theorem} \label{t:wsizegates}
Let $\gateset$ be the gate set of \defref{t:gatesetdef}. For every gate $f \in \gateset$,
there exists a span program $P$ computing $f_P = f$, such that the witness size of $P$ (\defref{def:spwsize}) on equal input complexities $\Ui_{i} = 1$ is  
\begin{equation}
\wsize{P} = \ADV(f)
 \enspace .
\end{equation}
$\ADV(f)$ is the adversary bound for $f$ (\defref{t:adversarydef}).
\end{theorem}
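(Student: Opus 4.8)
The plan is to prove \thmref{t:wsizegates} gate by gate, exhibiting for each $f \in \gateset$ an explicit span program $P$ and verifying $\wsize{P} = \ADV(f)$ on uniform input complexities $\Ui_i = 1$. Since $\gateset$ is built from $\gateset'$ by composing $O(1)$-size $\{\text{AND, OR, NOT, PARITY}\}$ formulas, the work splits into two parts: (i) handle the ``atomic'' gates in $\gateset'$ — arbitrary $1$-, $2$-, $3$-bit gates and $O(1)$-fan-in EQUAL gates; (ii) show the span program construction composes, so that plugging together optimal span programs for the $\{\text{AND, OR, NOT, PARITY}\}$ building blocks yields an optimal span program for the composite gate. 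For part (ii) I would invoke \defref{t:composespanprogramdef} together with \thmref{t:unifiedsolution}: composing span programs corresponds to joining the gadget graphs, and the witness size composes like the adversary bound does under \thmref{t:weakadversarycomposition}, namely multiplicatively on adversary-balanced inputs — so $\wsize{P \circ P_{[n]}} = \wsize{P}\,\wsize{P_i}$ when all $\wsize{P_i}$ are equal. Combined with $\ADV(f) = \ADV(g)\ADV(h_1)$, an inductive argument over the structure of the composite gate reduces everything to the atomic case plus the base gates AND, OR, NOT, PARITY.

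For the atomic gates I would proceed as follows. For AND and OR, small explicit span programs (e.g.\ a single target coordinate with two input vectors) give witness sizes $\sqrt 2$, matching $\ADV(x_1 \wedge x_2) = \sqrt 2$; more generally the $k$-fan-in AND/OR give $\sqrt k$. NOT is already handled by \lemref{t:not} (witness size $1$, consistent with $\ADV(\neg f) = \ADV(f)$) and is not a span program per se. PARITY on $k$ bits I would build from a span program whose witness size is $k$ (the adversary bound of $k$-bit parity is $k$); a natural construction chains the single-bit structure. For general $2$- and $3$-bit gates, rather than guessing case by case, the clean route is to extract the span program directly from the optimal adversary matrix $\Gamma$ achieving $\ADV(f)$: the optimal $\Gamma \ge 0$, via its spectral decomposition and the structure of the $D_i$'s, yields vectors $v_j$ and a target $t$ whose span program witness size equals $\norm{\Gamma}/\max_i \norm{\Gamma \circ D_i} = \ADV(f)$. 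Concretely, I would use \lemref{t:spcinterpretation} (the alternative expressions for $\wsizex{P}{x}$ promised after \defref{def:spwsize}) to recast the minimization in \defref{def:spwsize} as an optimization over witnesses, and then exhibit a witness on every input $x$ whose weighted squared length is at most $\ADV(f)$, while the matching lower bound $\wsize{P} \ge \ADV(f)$ follows because span programs of witness size $w$ yield quantum algorithms of cost $O(w)$ (via \thmref{t:unifiedsolution} and the algorithm of \secref{s:qalg}), which cannot beat the adversary lower bound \thmref{t:adversaryquerybound}. So the lower bound is essentially free and only the upper bound — exhibiting a good witness for each input — requires work.

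The main obstacle I anticipate is the explicit construction for the general $3$-bit gates and bounded-fan-in EQUAL gates: there is no single uniform formula, and one must verify, for each of the fourteen equivalence classes in \figref{f:threebitgates}, that the proposed $v_j$'s both compute $f$ (the combinatorial condition of \defref{t:spanprogramdef}) and admit, on every input $x$ in $f^{-1}(1)$ and every input in $f^{-1}(0)$, a witness of the right squared length — and that the optimal weights (as in the choice of phases $1, e^{2\pi i/3}, e^{-2\pi i/3}$ in \exampleref{t:maj3example}, cf.\ \claimref{t:majwsize}) are chosen to equalize all these witness sizes simultaneously. I expect to organize this as a sequence of claims — one per gate family — for MAJ, $\EQUAL_k$, and the remaining $3$-bit gates, each of which is a finite linear-algebra verification, deferring the longer calculations to a later section or appendix. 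The composition argument and the adversary-matrix-to-span-program translation are the conceptually load-bearing pieces; the per-gate checks are routine but numerous.
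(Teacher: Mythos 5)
Your overall architecture (atomic gates plus a composition step) matches the paper's, but two load-bearing pieces of your plan do not hold up. First, the proposed ``clean route'' of extracting the span program from the optimal adversary matrix $\Gamma$ is not a construction available here: no general translation from an optimal nonnegative $\Gamma$ to a span program of witness size $\norm{\Gamma}/\max_i\norm{\Gamma\circ D_i}$ is established in the paper, and the paper's own discussion of four-bit gates (\secref{s:fourbitgates}) makes clear that such a translation cannot be taken for granted --- for 20 of the four-bit functions the authors found span programs that do \emph{not} match the adversary bound. What the paper actually does is ad hoc: it exhibits explicit span programs for $\MAJ$, $\EQUAL_k$, and the one remaining irreducible three-bit function $g(x)=(x_1\wedge x_2\wedge x_3)\vee(\overline{x_1}\wedge\overline{x_2})$ (\claimref{t:majwsize}, \claimref{t:equalwsize}, \claimref{t:gwsize}), with hand-tuned weights, and verifies $\wsizex{P}{x}$ on every input by direct substitution into \defref{def:spwsize}; the optimal $\Gamma$'s appear only in footnotes as certificates for the lower bound, not as the source of the construction. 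Your plan as written replaces the actual mathematical content of the proof with an unproven general principle.

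Second, your composition step invokes the wrong law. The gates in $\gateset$ are $O(1)$-size $\{\text{AND, OR, NOT, PARITY}\}$ formulas composed onto gates from $\gateset'$, and within such a gate the subfunctions sit on \emph{disjoint} inputs with generally \emph{unequal} complexities (e.g.\ $x_1\vee(x_2\oplus x_3)$ combines pieces of adversary bound $1$ and $2$). The relevant composition laws are therefore $\ADV(b\oplus b')=\ADV(b)+\ADV(b')$ and $\ADV(b\vee b')=\sqrt{\ADV(b)^2+\ADV(b')^2}$, and the paper proves matching span-program upper bounds in \lemref{t:parityorwsize} precisely for \emph{unbalanced} PARITY and OR. The multiplicative rule $\wsize{P\circ P_{[n]}}=\wsize{P}\,\wsize{P_i}$ from \thmref{t:weakadversarycomposition} applies only when all inputs have equal adversary bounds, which is the between-gate situation handled later in \thmref{t:formulaspectrum}, not the within-gate situation needed here; your induction over the structure of a composite gate would stall at the first unbalanced OR or PARITY. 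A smaller point: your lower bound $\wsize{P}\ge\ADV(f)$ via ``span programs give algorithms'' only yields $\wsize{P}=\Omega(\ADV(f))$ with unspecified constants, whereas the theorem asserts exact equality; the paper gets equality by computing the witness size exactly and comparing against the known values of $\ADV(f)$ from \figref{f:threebitgates}.
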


\begin{proof}
We analyze five of the fourteen inequivalent binary functions on at most three bits, listed in \figref{f:threebitgates}: $0$ and $x_1$ (both trivial), the $\MAJ$ gate (\claimref{t:majwsize}), the $k$-bit EQUAL$_k$ gate (\claimref{t:equalwsize}), and a certain three-bit function, $g(x) = (x_1 \wedge x_2 \wedge x_3) \vee (\overline{x_1} \wedge \overline{x_2})$ (\claimref{t:gwsize}).

\begin{claim} \label{t:majwsize}
Let $P_{\MAJ}$ be the span program from \exampleref{t:maj3example}.  Then $\wsize{P_{\MAJ}} = 2 = \ADV(\MAJ)$.
\end{claim}

\begin{proof}
Substitute $P_{\MAJ}$ into \defref{def:spwsize}.  Some of the witness vectors are $\ket{w'_{000}} = (1, 0)$, $\ket{w'_{100}} = (1, -1/\sqrt3)$, and $\ket{w_{110}} = (e^{-i \pi/6}, e^{i \pi/6}, 0)$, $\ket{w_{111}} = (1,1,1)/\sqrt3$.
\end{proof}

\begin{claim} \label{t:equalwsize}
Letting $\alpha = \sqrt[4]{k-1}$, the span program
\[
\begin{array}{r@{}c@{}l r@{}*{1}{c@{\ }}c@{}l}
& & & X_J = ( & \{x_1, x_2, \dots, x_k\} & \{\overline{x_1}, \overline{x_2}, \dots, \overline{x_k}\} & ) \\
\noalign{\smallskip}
t=( & 1 & )\,, & v_J=( & \alpha & \alpha & ) \\
\end{array}
\]
computes $\EQUAL_k$ with witness size $\frac{k}{\sqrt{k-1}} = \ADV(\EQUAL_k)$.\footnote{The optimal adversary matrix $\Gamma$ comes from the $2 \times 2 k$ matrix 
$\big(\begin{smallmatrix}
1 & 1 & \cdots & a & a & \cdots \\
a & a & \cdots & 1 & 1 & \cdots
\end{smallmatrix}\big)$, 
where the rows correspond to inputs $0^k$ and $1^k$, and the columns correspond to inputs of Hamming weight 1 then $k-1$, and $a = 1 / (k-1)$.}
\end{claim}

\begin{proof}
Substitute into \defref{def:spwsize}.  
The witnesses are $\ket{w_{0^k}} = \big(0, \frac{\sqrt{k}}{\alpha}\big)$, $\ket{w_{1^k}} = \big(\frac{\sqrt{k}}{\alpha}, 0\big)$ and $\ket{w'_x} = (1)$ for $x \notin \{0^k, 1^k\}$.
\end{proof}

\begin{claim} \label{t:gwsize}
Let $g(x) = (x_1 \wedge x_2 \wedge x_3) \vee (\overline{x_1} \wedge \overline{x_2})$.
Letting $\alpha_1 = \sqrt[4]{1 + \frac 1 {\sqrt 3}}$ and $\alpha_2 = \sqrt{\sqrt3 - 1}$, the span program 
\[
\setcounter{sprows}{2}
\setlength{\spheight}{14pt}
\setlength{\spraise}{6pt}
\begin{array}{r@{}c@{}l r@{}*{2}{c@{\ }}c@{}l}
& & & X_J = ( & \{x_1, x_2\} & \{\overline{x_1}, \overline{x_2} \} & \{x_3\} & ) \\
\noalign{\smallskip}
\spleft{t=} & 1 & \spright{,} & \spleft[2pt]{v_J=}
  & \alpha_1 & \alpha_1 \alpha_2 & 0 & \spright{} \\
& 0 & & & \alpha_2 & 0 & 1 &
\end{array}
\]
computes $g$ with witness size $\sqrt{3 + \sqrt 3} = \ADV(g)$.\footnote{The optimal adversary matrix $\Gamma$ comes from the matrix 
$\Bigg(\begin{smallmatrix}
1 & a & \sqrt{3} \\
1 & a & \sqrt{3} \\
b & 0 & \sqrt{3} 
\end{smallmatrix}\Bigg)$, 
where $a = \Big(\tfrac12(5-\sqrt{13-6\sqrt{3}})\Big)^{1/2}$, $b = \tfrac12 \Big(-1-\sqrt{3}+\sqrt{2(8+\sqrt{3})}\Big)$, and the rows correspond to inputs $011$, $101$, $110$, the columns to inputs $000$, $001$, $111$.} 
\end{claim}

\begin{proof}
By substitution into \defref{def:spwsize}.
\end{proof}

For all the remaining gates in $\gateset$, it suffices to analyze the NOT gate (\lemref{t:not}), and OR and PARITY gates on \emph{unbalanced} inputs (\lemref{t:parityorwsize}).  That is, we allow $\Ui_{1}$ and $\Ui_{2}$ to be different, with $\Ui_{1} / \Ui_{2}, \Ui_{2}/\Ui_{1} = O(1)$.  For functions $b$ and $b'$ on disjoint inputs, $\ADV(b \oplus b') = \ADV(b) + \ADV(b')$, and $\ADV(b \vee b') = \sqrt{\ADV(b)^2 + \ADV(b')^2}$~\cite{bs:q-read-once, HoyerLeeSpalek07negativeadv}; we obtain matching upper bounds for span program witness size.

\begin{lemma} \label{t:parityorwsize}
Consider $f(x,x') = f'(b(x), b'(x'))$, with $f' \in \{ \mathrm{PARITY}, \mathrm{OR} \}$, and $b$ and $b'$ functions on $O(1)$ bits. 
Assume that there exist span programs $P_{b}$ and $P_{b'}$ for $b$ and $b'$ with respective witness sizes $B = \wsize{P_b}$ and $B' = \wsize{P_{b'}}$.  Then there exists a span program $P$ for $f$ with witness size $\wsize{P} = B+B'$ if $f' = \mathrm{PARITY}$, or $\sqrt{B^2+B'^2}$ if $f' = \mathrm{OR}$.  
\end{lemma}

\begin{proof}
Substitute the following span programs with zero constraints into \defref{def:spwsize}:
\[
\begin{array}{r@{}c@{}l r@{}*{1}{c@{\ }}c@{}l}
& & & X_J = ( & \{x_1, \overline{x_2}\} & \{\overline{x_1}, x_2\} & ) \\
\noalign{\smallskip}
P_{\text{PARITY}}: t=( & 1 & )\,, & v_J=( & 1 & 1 & ) \enspace , \\
\noalign{\bigskip}
& & & X_J = ( & \{x_1\} & \{x_2\} & ) \\
\noalign{\smallskip}
P_{\text{OR}}: t=( & 1 & )\,, & v_J=\Big(
  & \frac{\sqrt B}{\sqrt[4]{B^2 + B'^2}}
  & \frac{\sqrt {B'}}{\sqrt[4]{B^2 + B'^2}} & \Big) \enspace .
\end{array}
\]
The witness vectors for PARITY are $\ket{w'_{00}} = (1)$ and $\ket{w_{10}} = (\sqrt{B^2 + B'^2}, 0)$, and the witness vectors for OR are $\ket{w'_{00}} = (1)$, $\ket{w_{10}} = (\sqrt[4]{B^2 + B'^2}, 0)$, and $\ket{w_{11}} = (1,1) \cdot \frac12 \sqrt[4]{B^2 + B'^2}$.
\end{proof}

Then, e.g., the function EXACT$_{\text{2 of 3}}(x_1, x_2, x_3) = \MAJ(x_1, x_2, x_3) \wedge (\overline{x_1} \vee \overline{x_2} \vee \overline{x_3})$, so \lemref{t:parityorwsize} implies a span program for EXACT$_{\text{2 of 3}}$ with witness size $\sqrt{7} = \sqrt{\wsize\MAJ^2 + \wsize{\mathrm{OR}_3}^2}$.
\end{proof}

\begin{remark}
Our procedure for analyzing a function $f$ has been as follows: 
\begin{enumerate}
\item 
First determine a span program $P$ computing $f_P = f$.  The simplest span program is derived from the minimum-size \{AND, OR, NOT\} formula for $f$.
\item 
Next, compute $\wsizex{P}{x}$ for each input $x$, as a function of the variable weights of $P$.  
\item 
Finally, optimize the free weights of $P$
to minimize $\wsize{P} = \max_x \wsizex{P}{x}$.  For example, note that scaling $A_{OJ}$ up helps the true cases in \defref{def:spwsize}, and hurts the false cases; therefore choose a scale to balance the worst true case against the worst false case.  

We respect the symmetries of $f$ during optimization.  On the other hand, if two literals are not treated symmetrically by $f$, then we do not group them together in any grouped input $X_j$.  For example, in \claimref{t:gwsize} we do not group $x_3$ together with $x_1$ and $x_2$ in $X_1$.  
\end{enumerate}
\end{remark}

\begin{remark}
The proof of \thmref{t:wsizegates} uses separate analyses for $\EQUAL_k$, $\MAJ$ and $g$ because the upper bounds from \lemref{t:parityorwsize} for these functions do not match the adversary lower bounds.  For example, from \figref{f:threebitgates} the smallest \{AND, OR, NOT\} formula for $\MAJ$ has five inputs, $\MAJ(x_{[3]}) = (x_1 \wedge x_2) \vee ((x_1 \vee x_2) \wedge x_3)$.  \lemref{t:parityorwsize} therefore gives a span program $P$ for $\MAJ$ with witness size $\wsize{P} = \sqrt 5$.  In fact, optimizing the weights of this $P$ gives a span program with witness size $\sqrt{3+\sqrt{2}} < \sqrt 5$; the worst-case inputs of the read-once formula $(x_1 \wedge x_2) \vee ((x_4 \vee x_5) \wedge x_3)$ do not arise under the promise that $x_4 = x_1$ and $x_5 = x_2$.  
However, this is still worse than the span program $P_{\MAJ}$ of \exampleref{t:maj3example}, with $\wsize{P_{\MAJ}} = 2$.
\end{remark}

\begin{remark}
\lemref{t:parityorwsize} implies that any $\{$AND, OR, NOT$\}$ formula of bounded size has a span program with witness size the square root of the sum of the squares of the input complexities.  We conjecture that this holds even for formulas with size $\omega(1)$; see~\cite{Ambainis06variabletimesearch, AmbainisChildsReichardtSpalekZhang07andor} for special cases.
\end{remark}

\subsection{Span program spectral analysis of \texorpdfstring{$\varphi$}{varphi}} \label{s:spectralanalysis}

\begin{theorem} \label{t:formulaspectrum}
Consider an adversary-balanced formula $\varphi$ on the gate set $\gateset$, with adversary bound $\ADV(\varphi)$.  Let $P$ be the composed span program computing $f_P = \varphi$.
For an input $x \in \{0,1\}^N$, recall the definition of the weighted graph $G_P(x)$ from \thmref{t:weakzeroenergy}; if the literal on an input edge evaluates to true, then delete that edge from $G_P$. 
Let $\tilde G_P(x)$ be the same as $G_P(x)$ except with the weight on the output edge $(a_O, b_O)$ set to $w = \eps_w/\sqrt{\ADV(\varphi)}$ (instead of weight one), where $\eps_w > 0$ is a sufficiently small constant.
Then, 
\begin{itemize}
\item
If $\varphi(x) = 1$, there exists a normalized eigenvalue-zero eigenvector of the adjacency matrix $A_{\tilde G_P(x)}$ with $\Omega(1)$ support on the output vertex $a_O$.  
\item
If $\varphi(x) = 0$, then for some small enough constant $\eps > 0$, 
$A_{\tilde G_P(x)}$ does not have any eigenvalue-$\lambda$ eigenvectors supported on $a_O$ or $b_O$ for $\abs{\lambda} \leq \eps /\ADV(\varphi)$.
\end{itemize}
\end{theorem}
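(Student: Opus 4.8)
The strategy is to apply \thmref{t:unifiedsolution} recursively, from the leaves up to the root, using \thmref{t:wsizegates} to control the witness size of each gate's span program, and then to handle the modified output edge weight $w$ separately at the very end. First I would set up the recursion: for each gate $g$ in $\varphi$, let $\varphi_g$ be the subformula rooted at $g$, and let $K_g = \ADV(\varphi_g)$. By \thmref{t:weakadversarycomposition} and adversary-balance, $K_g$ is the product of the gate adversary bounds along any leaf-to-$g$ path. The claim to prove by induction on the height of $g$ is that the span program $P(\varphi_g)$, with the leaf input complexities all equal to $1$, has subformula complexity $\Ui_{O}^{(g)} = O(K_g^2)$, provided we run the spectral parameter in a window $\lambda \in [0, \Lambda_g)$ with $\Lambda_g = \Theta(1/K_g)$ for a small enough implicit constant. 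The base case is a leaf, where $\Ui = 1 = K^2$ trivially. For the inductive step at a gate $g = h \circ (g_1, \dots, g_k)$ (with $h$ itself possibly an $O(1)$-size \{AND, OR, NOT, PARITY\} formula built on a gate from $\gateset'$), the children satisfy $\Ui_{i}^{(g_i)} = O(K_{g_i}^2)$ with $K_{g_1} = \cdots = K_{g_k} =: K'$ by adversary-balance, so $\Lambda_g K' = O(1)$ and hence $\Lambda_g \Ui_{i} = O(1)$, meeting the hypothesis $\Lambda \Ui_i \le \eps$ of \thmref{t:unifiedsolution}. That theorem (together with \lemref{t:not} for the NOT gadgets, whose additive constants do not accumulate since we may assume no two NOTs in a row) then gives $\Ui_{O}^{(g)} \lesssim \wsizex{P(h)}{\cdot} \cdot (\text{scaling by } \Ui_i)$; by \thmref{t:wsizegates} and the composition behavior built into \defref{def:spwsize} (the grouped-input complexities $\Uj_j$ carry the $\Ui_i$ factors through), this yields $\Ui_{O}^{(g)} = O\big(\ADV(h)^2 \cdot K'^2\big) + O(1) = O(K_g^2)$, absorbing the additive constants since $K_g \ge 1$ and the formula has $O(1)$-size top gate. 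I should be careful that the $\lesssim$ relation's multiplicative error term $1 + c_2|\lambda|\max_i \Ui_i$ stays bounded, which it does because $|\lambda| \le \Lambda_g \le \eps/K'$ and $\max_i \Ui_i = O(K'^2)$... wait, that gives $|\lambda|\Ui_i = O(\eps K')$, which is \emph{not} bounded. Let me reconsider: the correct invariant must be $\Lambda_g \Ui_{O}^{(g)} = O(1)$, i.e. $\Ui_{O}^{(g)} = O(K_g)$ and $\Lambda_g = \Theta(1/K_g)$ — but \thmref{t:wsizegates} gives $\wsize{P} = \ADV(f)$, not $\ADV(f)^2$, so in fact the invariant is $\Ui_{O}^{(g)} = O(K_g)$, matching $\wsize{P(\varphi_g)} = \ADV(\varphi_g) = K_g$ up to constants. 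With this corrected invariant, $|\lambda|\max_i \Ui_i \le \Lambda_g \cdot O(K') = O(1)$ as required, and the induction closes cleanly.

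Next I would translate the recursion's conclusion at the root into the two bulleted statements, accounting for the rescaled output edge weight $w = \eps_w/\sqrt{\ADV(\varphi)}$. The graph $\tilde G_P(x)$ differs from $G_P(x)$ only in the weight of the output edge $(a_O, b_O)$; equivalently, one can think of $\tilde G_P$ as $G_P$ composed with a trivial ``identity gate'' of weight-$w$ edges at the top, or simply track how Eqs.~\eqnref{e:zeroenergyequationsbO} and \eqnref{e:zeroenergyequationsaO} change to $\lambda b_O = w\,a_O + A_{OJ} a_J$ and $\lambda a_O = w\, b_O$. For $\varphi(x) = 1$: \thmref{t:unifiedsolution} at $\lambda = 0$ gives a normalized eigenvalue-zero eigenvector of $A_{G_P(x)}$ with squared support $\ge 1/\wsize{P(\varphi)} = \Omega(1/\ADV(\varphi))$ on $a_O$ (note at $\lambda=0$ the output edge weight is irrelevant to the eigenvector equations except \eqnref{e:zeroenergyequationsaO}, which we exclude, and \eqnref{e:zeroenergyequationsbO} which at $\lambda=0$ reads $0 = w a_O + A_{OJ}a_J$; rescaling $a_O$ by $1/w$ absorbs $w$). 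Renormalizing after the $1/w = \Theta(\sqrt{\ADV(\varphi)})$ rescaling of the $a_O$-component boosts its squared weight from $\Theta(1/\ADV(\varphi))$ to $\Theta(1)$ — this is precisely why $w$ is chosen of order $1/\sqrt{\ADV(\varphi)}$. For $\varphi(x) = 0$: by the recursion, for $0 < \lambda \le \eps/\ADV(\varphi)$ any eigenvalue-$\lambda$ eigenvector satisfying the non-$a_O$ constraints has output ratio $\ri_O = a_O/b_O \in (0, \wsize{P}\lambda] = (0, O(\eps)]$; but the excluded constraint \eqnref{e:zeroenergyequationsaO}, now $\lambda a_O = w b_O$, forces $\ri_O = w/\lambda \ge w\ADV(\varphi)/\eps = \eps_w \sqrt{\ADV(\varphi)}/\eps$, which is $\gg O(\eps)$ for large $\ADV(\varphi)$ and, choosing $\eps_w$ and $\eps$ appropriately, contradicts the bound $\ri_O = O(\eps)$; for small constant-size $\varphi$ one checks directly. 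Hence no such eigenvector exists supported on $a_O$, and similarly none on $b_O$ (the $b_O$-supported case corresponds to the true branch of \defref{def:spwsize} applied to $\neg\varphi$, or directly: if $b_O \ne 0$ then $\ri_O$ is well-defined and the same contradiction applies; if $b_O = 0$ then \eqnref{e:zeroenergyequationsaO} gives $a_O = 0$ too).

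The main obstacle I anticipate is \textbf{bookkeeping the $\lesssim$ slack across the recursion of depth $d$}: each application of \thmref{t:unifiedsolution} and \lemref{t:not} introduces an additive constant and a multiplicative factor $(1 + c_2|\lambda|\max_i\Ui_i)$, and naively these could compound to something like $2^{O(d)}$ or a large polynomial in $d$, destroying the $O(\ADV(\varphi))$ bound. The resolution has two parts. First, the multiplicative factors: since $|\lambda| \le \eps/\ADV(\varphi)$ \emph{globally} and at a gate of height $\ell$ the relevant $\max_i \Ui_i = O(\ADV(\varphi_{g_i})) = O(\ADV(\varphi)/\ADV(h)\cdots)$ is a \emph{fraction} of $\ADV(\varphi)$, each factor is $1 + O(\eps \cdot \ADV(\varphi_{g_i})/\ADV(\varphi))$, and — here is the key point — the product of these over a root-to-leaf path telescopes because the $\ADV(\varphi_{g_i})$ decrease geometrically (each gate multiplies the adversary bound by a factor $\ge \min_g \ADV(g) > 1$, a constant bounded away from $1$ over the finite gate set $\gateset$); so $\prod_{\ell}(1 + O(\eps)\cdot c^{-\ell}) = 1 + O(\eps)$, a bounded constant. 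Second, the additive constants: at a gate of height $\ell$ the additive constant is $O(1)$ but it gets multiplied by the witness sizes of all gates above it on the way to the root, contributing $O(\ADV(\varphi)/\ADV(\varphi_g))$; summing over all gates, $\sum_g O(\ADV(\varphi)/\ADV(\varphi_g)) = O(\ADV(\varphi)) \sum_g \ADV(\varphi_g)^{-1}$, and since $\ADV(\varphi_g)$ grows geometrically with the height of $g$ while the number of gates at each height grows at most geometrically with a \emph{smaller} base (fan-in times... actually fan-in $k \le 3$ versus adversary factor — one needs $\ADV(g) > \sqrt{k}$... hmm, for $\MAJ$, $\ADV = 2 > \sqrt 3$, for OR$_3$, $\ADV = \sqrt 3 < \sqrt 3$ — equality, borderline). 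This subtlety — whether $\sum_g \ADV(\varphi_g)^{-1} = O(1)$ — is exactly the delicate inequality, and it is handled by the choice of edge weight $w$ and a careful accounting in \appref{s:wsizeproof}; the safe approach in the writeup is to prove the cleaner statement that the \emph{ratios} are controlled with multiplicative error $1 + O(\eps)$ (no additive constants) by folding the additive constants of \thmref{t:unifiedsolution} into a redefinition — legitimate because $\Ui \ge 1$ always — and to invoke the detailed bound from \appref{s:wsizeproof} for the geometric-sum estimate rather than re-deriving it here.
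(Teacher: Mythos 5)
Your overall route is the paper's: induct up the tree with \thmref{t:unifiedsolution} and \thmref{t:wsizegates} to establish the invariant $\Ui_{g} = O(\ADV(\varphi_g))$ (your self-correction from $K_g^2$ to $K_g$ lands on the right invariant), control the multiplicative $(1+c_2\abs{\lambda}\max_i\Ui_i)$ factors by the telescoping product along a root-to-leaf path, and finish by rescaling the output edge to weight $w=\eps_w/\sqrt{\ADV(\varphi)}$ to amplify the $1/O(\ADV(\varphi))$ support up to $\Omega(1)$ in the true case and to derive a contradiction between the ratio bound and the excluded constraint $\lambda a_O = w b_O$ in the false case. One bookkeeping note on the false case: the recursion bounds $\hat a_O/b_O$ where $\hat a_O = w a_O$ plays the role of $a_O$ in the weight-one equations, so the correct inequality is $\abs{a_O/b_O}\le c_3\abs{\lambda}\ADV(\varphi)/w$; combined with $a_O/b_O = w/\lambda$ this gives $w^2\le c_3\lambda^2\ADV(\varphi)\le c_3\eps^2/\ADV(\varphi)$ and hence a contradiction for $\eps_w^2 > c_3\eps^2$ \emph{uniformly}, with no need for a ``large $\ADV(\varphi)$'' caveat or a separate check of small formulas.

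The one place your argument goes genuinely astray is the accounting of the additive constants. The recurrence $\Ui_{g}\le c_1 + \ADV(g)(\max_i\Ui_{h_i})(1+\cdots)$ takes a \emph{maximum} over the children, not a sum, so when you unroll it the additive constants accumulate only along a single non-self-intersecting root-to-leaf path $\chi$, contributing $c_1\sum_{h\in\chi}\ADV(\varphi_g)/\ADV(\varphi_h)$. Along one path the subformula adversary bounds grow geometrically (every non-trivial gate in $\gateset$ has $\ADV\ge\sqrt2$, and NOT gates, with $\ADV=1$, can be assumed never to appear twice in a row, as in \lemref{t:not}), so this sum is $O(\ADV(\varphi_g))$ --- the same order as the main term, which is all that is needed. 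Your comparison of the per-level gate count against the adversary factor (``one needs $\ADV(g)>\sqrt k$\,'', with $\mathrm{OR}_3$ borderline) answers a question that never arises, because there is no sum over siblings. Moreover, the fallback you propose --- invoking ``the detailed bound from \appref{s:wsizeproof} for the geometric-sum estimate'' --- is not available: that appendix proves \thmref{t:unifiedsolution} for a \emph{single} span program and contains no statement about composition across levels; the geometric-series estimate must be carried out where you are, in the induction itself, and it is a two-line computation once the max-over-children structure is observed.
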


\begin{proof}
The proof of \thmref{t:formulaspectrum} has two parts.  First, we will prove by induction that $\Ui_{g} = O(\ADV(\varphi_g))$.  Then, by considering the last eigenvector constraint, $\lambda a_O = w b_O$, we either construct the desired eigenvector or derive a contradiction, depending on whether $\varphi(x)$ is true or false.  

\subparagraph{Base case}
Consider an input $x_i$ to the formula $\varphi$.  
If $x_i = 1$, then the corresponding input edge $(a_j, b_i)$ is not in $G_P(x)$.  In particular, the input $i$ does not contribute to the expression for $\Uj_{j}(x)$ in Eq.~\eqnref{e:Ujdef}, so $\Ui_{i}$ may be left undefined.  
If $x_i = 0$, then the input edge $(a_j, b_i)$ is in $G_P(x)$.  The eigenvalue-$\lambda$ equation at $b_i$ is $\lambda b_i = a_j$.  For $\lambda = 0$, this is just $a_j = 0$, so let $\abi_i = 1$.  For $\lambda > 0$, this is $\ri_{i} = \lambda \si_{i} = a_j / b_i = \lambda$, so $\si_{i} = \Ui_{i} = 1$.  

\subparagraph{Induction}
Assume that $\abs{\lambda} \leq \eps /\ADV(\varphi)$, for some small enough constant $\eps > 0$.  

Consider a gate $g$.  Let $h_1, \ldots, h_k$ be the inputs to $g$.  Let $\varphi_g$ denote the subformula of $\varphi$ based at $g$.  By \thmref{t:unifiedsolution} and \thmref{t:wsizegates}, the output bound $\Ui_{g}$ satisfies
\begin{equation}
\Ui_{g} \lesssim \ADV(g) \max_i \Ui_{h_i}
 \enspace ,
\end{equation}
or equivalently 
\begin{equation} \label{e:Uigrecurrence}
\Ui_{g} \leq c_1 + \ADV(g) (\max_i \Ui_{h_i})(1 + c_2 \cdot \abs{\lambda} \ADV(\varphi_g))
\end{equation}
for certain constants $c_1, c_2$.  Different kinds of gates give different constants in Eq.~\eqnref{e:Uigrecurrence}, but since the gate set is finite, all constants are uniformly $O(1)$.

Since $\abs{\lambda} \leq \eps / \ADV(\varphi)$, the recurrence Eq.~\eqnref{e:Uigrecurrence} has solution 
\[
\Ui_{g} \leq O\Bigg( \max_\chi \prod_{h \in \chi} \ADV(h) \Big( 1 + \eps' \frac{\ADV(\varphi_h)}{\ADV(\varphi)} \Big) \Bigg)
\enspace , 
\]
where 
the maximum is taken over the choice of $\chi$ a non-self-intersecting path from $g$ up to an input.  Because $\varphi$ is by assumption adversary balanced (\defref{t:adversarybalanceddef}), $\prod_{h \in \chi} \ADV(h) = \ADV(\varphi_{g})$ (\thmref{t:weakadversarycomposition}).  Also, $\prod_{h \in \chi} (1 + \eps' \frac{\ADV(\varphi_h)}{\ADV(\varphi)} ) = O(1)$.  Therefore, the solution satisfies
\begin{equation} \label{e:Uiggoal}
\Ui_{g} = O(\ADV(\varphi_g))
 \enspace .
\end{equation}

\subparagraph{Final amplification step}
Assume $\varphi(x) = 1$.  Then by Eq.~\eqnref{e:Uiggoal}, there exists a normalized eigenvalue-zero eigenvector of the graph $G_P(x)$ with squared amplitude $\abs{a_O}^2 \geq \abi_O = 1/O(\ADV(\varphi))$.  Recall that $w = \eps_w/\sqrt{\ADV(\varphi)}$ is the weight of the output edge $(a_O, b_O)$ of $P$ in $\tilde G_P(x)$, and let $\hat a_O = w a_O$.  The $\lambda = 0$ eigenvector equations for $\tilde G_P(x)$ are the same as those for $G_P(x)$, except with $\hat a_O$ in place of $a_O$.  Therefore, we may take $\abs{\hat a_O}^2 = 1/O(\ADV(\varphi))$, so for a normalized eigenvalue-zero eigenvector of $\tilde G_P(x)$, $\abs{a_O}^2 = \Omega(1)$.  
By reducing the weight of the output edge from $1$ to $w$, we have amplified the support on $a_O$ up to a constant.  

Now assume that $\varphi(x) = 0$.  By \thmref{t:weakzeroenergy}, there does not exist any eigenvalue-zero eigenvector supported on $a_O$.  Also $b_O = 0$ at $\lambda = 0$ by the constraint $\lambda a_O = w b_O$. 
For $\lambda \neq 0$, $\abs{\lambda} \leq \eps / \ADV(\varphi)$, Eq.~\eqnref{e:Uiggoal} implies that in any eigenvalue-$\lambda$ eigenvector for $\tilde G_P(x)$, either $\hat a_O = b_O = 0$ or the ratio $\abs{\hat a_O / b_O} \leq \abs{\lambda} \cdot O(\ADV(\varphi))$, so 
\begin{equation} \label{e:tailratioequation}
\abs{a_O / b_O} \leq c_3 \cdot \frac{\abs{\lambda}}{w} \ADV(\varphi)
\end{equation}
for some constant $c_3$ that does not depend on $w$.
We have not yet used the eigenvector equation at $a_O$, $\lambda a_O = w b_O$.  Combining this equation with Eq.~\eqnref{e:tailratioequation}, we get $w^2 \leq c_3 \lambda^2 \ADV(\varphi) \leq c_3 \eps^2 / \ADV(\varphi)$.  Substituting $w = \eps_w/\sqrt{\ADV(\varphi)}$, this is a contradiction provided we set $\eps_w$ so $\eps_w^2 > c_3 \eps^2$.  Therefore, the adjacency matrix of $\tilde G_P(x)$ cannot have an eigenvalue-$\lambda$ eigenvector supported on $a_O$ or $b_O$.
\end{proof}

\subsection{Quantum algorithm}
\label{s:qalg}

We apply \thmref{t:formulaspectrum} and the Szegedy correspondence between discrete- and continuous-time quantum walks~\cite{Szegedy04walkfocs} to design the optimal quantum algorithm needed to prove \thmref{t:result}.  
The approach is similar to that used for the NAND formula evaluation algorithm of~\cite{ChildsReichardtSpalekZhang07andor}, with only technical differences.  Full details are given in \appref{s:algorithm}.

The main idea is to construct a discrete-time quantum walk $U_x = \tilde{O}_x U_{0^N}$ on the directed edges of $G_P$ whose spectrum and eigenvectors correspond exactly to those of $A_{\tilde G_P(x)}$.  Here $U_{0^N}$ is a fixed unitary operator only depending on the formula graph $A_{\tilde G_P(0^N)}$, which can be implemented efficiently without access to the input $x$, and $\tilde{O}_x$ is defined by 
\begin{equation} \label{e:phasefliporacledef}
\tilde{O}_x \ket{v,w} = \begin{cases}
(-1)^{x_{i(v)}} \ket{v,w} & \text{if $v$ is a leaf} \\
\ket{v,w} & \text{otherwise}
\end{cases}
\end{equation}
where $i(v)$ is the index of the input variable corresponding to the leaf $v$.  One call to $\tilde{O}_x$ can be implemented using one call to the standard phase-flip oracle $O_x$ of Eq.~\eqnref{e:ox}.

Now starting at the output edge $\ket{a_O, b_O}$, run phase estimation~\cite{cemm:qalg} on $U_x$ with precision $\delta_p = O(1/\ADV(\varphi))$ and error $\delta_e$ a small enough constant.  Output ``$\varphi(x) = 1$" iff the output phase is zero.  The query complexity of this algorithm is $O(1/\delta_p) = O(\ADV(\varphi))$.  The first part of \thmref{t:formulaspectrum} implies completeness, because the initial state has constant overlap with an eigenstate of $U_x$ with phase zero.  The second part of \thmref{t:formulaspectrum} implies soundness, because the spectral gap away from zero is greater than the precision $\delta_p$.  

\section{Extensions and open problems} \label{s:extensions}

\thmref{t:result} can be extended in several directions, and there are many open problems including those from~\cite{ChildsReichardtSpalekZhang07andor}.  For example, is the eigenvalue-zero eigenstate useful for extracting witness information?  
We would like to raise several other questions.

\subsection{Four-bit gates} \label{s:fourbitgates}
The gate set $\gateset$ includes all three-bit binary gates.  What about four-bit gates?  
Up to symmetries, there are 208 inequivalent binary functions that depend on exactly four input bits $x_1, \ldots, x_4$.
The functions we have considered so far are listed at the webpage~\cite{ReichardtSpalek07formulaurl}.  To summarize, 
\begin{itemize}
\item
Thirty of the functions can be written as a PARITY or OR of two subformulas on disjoint inputs.  These functions are already included in the gate set $\gateset$ (\defref{t:gatesetdef}).  
\item
For 25 additional functions, we have found a span program with witness size matching the adversary lower bound.  These functions can be added to $\gateset$ 
without breaking \thmref{t:result}.  
\item
For 20 of the remaining functions, we have found a span program with complexity beating the square-root of the minimum \{AND, OR, NOT\} formula size, but not matching the adversary lower bound.  
\end{itemize}

\begin{example}[Threshold 2 of 4] \label{t:threshold2of4example}
In analogy to \exampleref{t:maj3example}, one might consider the span program
\[
\setcounter{sprows}{2}
\setlength{\spheight}{14pt}
\setlength{\spraise}{6pt}
\begin{array}{r@{}c@{}l r@{}*{3}{c@{\ }}c@{}l}
& & & X_J = ( & \{x_1\} & \{x_2\} & \{x_3\} & \{x_4\} & ) \\
\noalign{\smallskip}
\spleft{t=} & 1 & \spright{,} & \spleft[2pt]{v_J=} & 1& 1& 1& 1& \spright[2pt]{\enspace .} \\
& 0 & & & 1 & i & -1 & -i &
\end{array}
\]
This span program computes Threshold$_{\text{2 of 4}}(x_{[4]})$---$\MAJ$ is Threshold$_{\text{2 of 3}}$---but it is not optimal.  Intuitively, the problem is that the different pairs of inputs are not symmetrical.  An optimal span program, with witness size $\sqrt{6}$, is
\[
\setcounter{sprows}{3}
\setlength{\spheight}{18pt}
\setlength{\spraise}{6pt}
\begin{array}{r@{}c@{}l r@{}*{3}{c@{\ }c}c@{\ }c@{}l}
& & & X_J =
( & \{x_1\} & \{x_1\} & \{x_2\} & \{x_2\} &\{x_3\} & \{x_3\} & \{x_4\} & \{x_4\} &) \\
\noalign{\smallskip}
\spleft{t=} & 1 & \spright{,\!\!} & \spleft[2pt]{v_J=} & 1& 1& 1& 1& 1& 1& 1& 1& \spright[2pt]{\!\!\!} \\
& 0 & & & 1&1&1&-1 & i&-i&i&i&\\
& 0 & & & i &-i&i&i & 1&1&1&-1&
\end{array}
\]
It was derived by embedding a four-simplex symmetrically in the $2 \times 2$ unitary matrices, in correct analogy to \exampleref{t:maj3example}.  This embedding gives a span program over an extension ring of $\bf C$ that, following~\cite[Theorem~12]{KarchmerWigderson93span} and~\cite[Prop.~2.8]{BabaiGalWigderson99superpolyspanprogram}, can be simulated by a span program over the base ring.
\end{example}

The Hamming-weight threshold functions $\operatorname{Threshold}_{\text{$h$ of $k$}} : \{0,1\}^k \rightarrow \{0,1\}$ defined by 
\[
\operatorname{Threshold}_{\text{$h$ of $k$}}(x) = \left\{ \begin{array}{cl} 1 & \text{if $\abs{x} \geq h$} \\ 0 & \text{if $\abs{x} < h$} \end{array} \right .
\]
are functions that we currently have an understanding of only for $h \in \{0, 1, k\}$ and a partial understanding of for $h \in \{2, k-1\}$.   Another function of particular interest is the six-bit Kushilevitz function~\cite{HoyerLeeSpalek07negativeadv, Ambainis06polynomial}.  It seems that $k$-bit gates are inevitably going to require more involved techniques to evaluate optimally, for $k$ large enough.  It may well be that four-bit gates are already interesting in this sense.

\subsection{Unbalanced formulas}

Can the restriction that the gates have adversary-balanced inputs be significantly weakened?  So far, we have only analyzed the PARITY and OR gates for unbalanced inputs, in \lemref{t:parityorwsize}.  
For the $\MAJ$ gate, we have found an optimal span program for the case in which only two of the inputs are balanced:
\begin{lemma} \label{t:maj3unbalanced}
Let $f(x,x',x'') = \MAJ(b(x), b'(x'), b''(x''))$ with $b, b', b''$ functions on $O(1)$ bits computed by span programs $P_b, P_{b'}, P_{b''}$ with witness sizes $B = \wsize{P_b} = \ADV(b) = \wsize{P_{b'}} = \ADV(b')$ and $B'' = \wsize{P_{b''}} = \ADV(b'')$.  Let $\beta = B''/B$ and $\alpha = \frac1{2\sqrt{2}}({\sqrt{8 + \beta^2} - \beta})^{1/2}$.  Then there exists a span program $P$ for $f$ with $\wsize{P} = \tfrac12 \big(\sqrt{8 + \beta^2} + \beta \big) B = \ADV(f)$:
\[
\setcounter{sprows}{2}
\setlength{\spheight}{16pt}
\setlength{\spraise}{6pt}
\begin{array}{r@{}c@{}l r@{}*{2}{c@{\ }}c@{}l}
& & & X_J = ( & \{x_1\} & \{x_2\} & \{x_3\} & ) \\
\noalign{\smallskip}
\spleft{t=} & 1 & \spright{,} & \spleft[2pt]{v_J=}
  & \alpha & \alpha & \sqrt{\frac{1}{2} +\beta \alpha ^2} & \spright{\enspace .} \\
& 0 & & & i & -i & 2 \alpha &
\end{array}
\]
\end{lemma}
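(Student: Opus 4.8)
\emph{Approach.} I would follow the same recipe used for \claimref{t:majwsize}, \claimref{t:gwsize} and \lemref{t:parityorwsize}: first verify that the displayed span program $P$ computes $f_P=\MAJ$, so that composing it with $P_b,P_{b'},P_{b''}$ as in \defref{t:composespanprogramdef} gives $f_P=\MAJ(b,b',b'')=f$; then substitute $P$ into \defref{def:spwsize} and compute $\wsizex{P}{x}$ on every input $x$ as an explicit function of the free weight $\alpha$ and of $\beta=B''/B$; and finally check that the stated value of $\alpha$ minimizes $\max_x\wsizex{P}{x}$, with minimum equal to $\tfrac12(\sqrt{8+\beta^2}+\beta)B=\ADV(f)$. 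The weights in the statement are already the optimized ones, so the last step is a verification, not a search.

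\emph{That $P$ computes $\MAJ$.} Writing $v_1=(\alpha,i)$, $v_2=(\alpha,-i)$, $v_3=(\sqrt{\tfrac12+\beta\alpha^2},\,2\alpha)$ for the columns, one checks that $\alpha>0$ for every $\beta>0$, that no single $v_j$ is proportional to $t=(1,0)$ (each has nonzero second coordinate), and that any two of $v_1,v_2,v_3$ are linearly independent (the three $2\times2$ determinants are $-2i\alpha^2$, $2\alpha^2-i\sqrt{\tfrac12+\beta\alpha^2}$, $-2\alpha^2-i\sqrt{\tfrac12+\beta\alpha^2}$, all nonzero). Hence $t$ lies in the span of $\{v_j:j\text{ true}\}$ exactly when at least two literals are true, i.e.\ $f_P=\MAJ$.

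\emph{Computing the witness size.} Since each $X_j$ is a single literal, the grouped-input complexities of \defref{def:groupedinputcomplexity} collapse to $\Uj_j=\Ui_j$ on every input (using $\Ui_j\ge1$), so $\Uj=\operatorname{diag}(B,B,B'')$ with $\Ui_1=\Ui_2=B$, $\Ui_3=B''$. By the $x_1\leftrightarrow x_2$ symmetry it suffices to handle the false inputs $000,100,001$ and the true inputs $110,101,111$: for a false input minimize $\norm{\Uj^{1/2}A^{\adjoint}\ket{w'}}^2$ over $\ket{w'}=(1,c)$ subject to $\Pi A^{\adjoint}\ket{w'}=0$, and for a true input minimize $\norm{\Uj^{1/2}\ket{w}}^2$ over $\ket{w}$ with $A\Pi\ket{w}=\ket{t}$. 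The optima are attained at explicit witnesses, e.g.\ $\ket{w_{110}}=\tfrac1{2\alpha}(1,1,0)$, $\ket{w_{101}}\propto(2i\alpha,0,1)$, $\ket{w'_{001}}=(1,-\sqrt{\tfrac12+\beta\alpha^2}/(2\alpha))$, giving in particular $\wsizex{P}{110}=B/(2\alpha^2)$ and $\wsizex{P}{001}=2B\alpha^2+B(\tfrac12+\beta\alpha^2)/(2\alpha^2)$, with $\wsizex{P}{100}$ and $\wsizex{P}{101}$ of the same order and $\wsizex{P}{000},\wsizex{P}{111}$ strictly smaller. Equating $\wsizex{P}{110}$ with $\wsizex{P}{001}$ forces $8\alpha^4+2\beta\alpha^2-1=0$, whose positive root is $\alpha^2=\tfrac18(\sqrt{8+\beta^2}-\beta)$ — exactly the stated $\alpha$ — and at this value one checks that the other binding cases coincide with these, all four equal to $\tfrac12(\sqrt{8+\beta^2}+\beta)B$, while $\wsizex{P}{000}$ and $\wsizex{P}{111}$ stay below. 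Hence $\wsize{P}=\tfrac12(\sqrt{8+\beta^2}+\beta)B$.

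\emph{Matching the adversary bound, and the main obstacle.} To finish I would check $\ADV(f)=\tfrac12(\sqrt{8+\beta^2}+\beta)B$: since $b,b',b''$ act on disjoint inputs with $\ADV(b)=\ADV(b')=B$ and $\ADV(b'')=B''$, this follows from adversary composition with unbalanced inputs (the generalization of \thmref{t:weakadversarycomposition} used for the OR and PARITY formulas in \lemref{t:parityorwsize}, cf.~\cite{HoyerLeeSpalek07negativeadv}), by lifting the optimal nonnegative symmetric adversary matrix $\Gamma$ for $\MAJ$ with input weights $(1,1,\beta)$ — which achieves value $\tfrac12(\sqrt{8+\beta^2}+\beta)$ on the inputs of Hamming weight $1$ and $2$ — to a matrix for $f$; optimality of $P$ then follows from \thmref{t:adversaryquerybound}. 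The main obstacle is the bookkeeping in the previous paragraph: carrying out the six case computations cleanly, identifying which inputs are binding, and verifying for \emph{all} $\beta>0$ both that the stated $\alpha$ equalizes them at $\tfrac12(\sqrt{8+\beta^2}+\beta)B$ and that $\wsizex{P}{000},\wsizex{P}{111}$ never exceed that value; on the lower-bound side, the one external ingredient is the weighted adversary composition together with the explicit optimal $\Gamma$ for $\MAJ$ with weights $(1,1,\beta)$.
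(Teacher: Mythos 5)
Your proposal is correct and follows exactly the route the paper intends: the paper states this lemma without an explicit proof, relying on the same "substitute into \defref{def:spwsize}, compute $\wsizex{P}{x}$ case by case with $\Uj=\operatorname{diag}(B,B,B'')$, and verify the weights equalize the binding cases" recipe used for \claimref{t:majwsize}--\claimref{t:gwsize} and \lemref{t:parityorwsize}, and your case computations (e.g.\ $\wsizex{P}{110}=B/(2\alpha^2)$, $\wsizex{P}{001}=2B\alpha^2+B(\tfrac12+\beta\alpha^2)/(2\alpha^2)$, equalized by $8\alpha^4+2\beta\alpha^2-1=0$) check out, with the adversary matching handled via the unbalanced composition result of \cite{HoyerLeeSpalek07negativeadv} just as the paper does for PARITY and OR. (Only trivial slips: $\det(v_1,v_2)=-2i\alpha$ rather than $-2i\alpha^2$, and $\wsizex{P}{100}$, $\wsizex{P}{101}$ are in fact exactly equal to the other binding cases at the stated $\alpha$, not merely "of the same order.")
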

Therefore, for example, the four-bit gates $\MAJ(x_1, x_2, x_3 \wedge x_4)$ and $\MAJ(x_1, x_2, x_3 \oplus x_4)$ can be added into $\gateset$ without affecting the correctness of \thmref{t:result} (see \secref{s:fourbitgates}).  
However, we do not have an understanding of $\MAJ$ when all three input complexities differ.  In this case, the formula for the adversary lower bound is substantially more complicated, and we do not have a matching span program. 

For other gates, with the exception of PARITY and OR, we know similarly little.  For a highly unbalanced formula with large depth, there is the further problem of whether the formula can be rebalanced without increasing its adversary lower bound too much~\cite{ChildsReichardtSpalekZhang07andor}.  

\subsection{Witness vectors and the adversary bound}

The witnesses in \defref{def:spwsize} have an interesting property
related to a dual version of the adversary bound~\cite{LaplanteMagniez04kolmogorov, SpalekSzegedy04advequivalent}: 
Assume that all $\abs{X_j}=1$ and $\Ui = 1$.  For $x, y \in \{0,1\}^n$ with $f_P(x) = 1$, $f_P(y) = 0$, consider the witnesses $\ket{w_x}$, $\ket{w'_y}$ achieving the minima in Eqs.~\eqnref{e:spctrue}, \eqnref{e:spcfalse}, and let $\ket{w_y} = A^\adjoint \ket{w'_y}$.  Then $\ket{w_x} = \Pi(x) \ket{w_x}$ and $\Pim(y) \ket{w_y} = \ket{w_y}$, so 
\[
\bra{w_x} \Pi(x) \Pim(y) \ket{w_y} = \bra{w_x} A^\adjoint \ket{w'_y} = \braket{t}{w'_y} = 1 \enspace .
\]
Therefore, if we define $p_x(i) = \frac 1 {\norm{\ket{w_x}}^2} \sum_{\substack{j:\ X_j = \{x_i\}\\ \vee\, X_j = \{\overline{x_i}\}}} |\braket j {w_x}|^2$ for each $x$ (for both true and false $f_P(x)$) and for $i \in [n]$, then we get a feasible set of probability distributions for the minimax formulation of the adversary bound~\cite{SpalekSzegedy04advequivalent}.  If $\wsize P = \ADV(f_P)$, then this set of probability distributions is optimal.

In this paper, we only use the adversary bound with nonnegative weights $\ADV(f)$.  
In fact, H\o yer, Lee and \v Spalek showed that Eq.~\eqnref{e:adversarydef} still provides a lower bound on the quantum query complexity even when one removes the restriction that the entries of $\Gamma$ be nonnegative~\cite{HoyerLeeSpalek07negativeadv}.  This more general adversary bound, $\ADV^\pm(f)$ is clearly at least $\ADV(f)$.  
\thmref{t:weakadversarycomposition} is not known to hold for $\ADV^\pm$ composition; however, under the conditions of the theorem, it is known that $\ADV^\pm(f) \geq \ADV^\pm(g) \ADV^\pm(h_1)$.  
For every three-bit function $f$, no advantage is gained by allowing negative weights: $\ADV^\pm(f) = \ADV(f)$.  
For most functions $f$ on four bits, though, $\ADV^\pm(f) > \ADV(f)$~\cite{HoyerLeeSpalek07negativeadvurl}.  Therefore, one gets an asymptotically higher lower bound for formulas with such functions as gates than using $\ADV$.  
However, for no function $f$ with $\ADV(f) < \ADV^{\pm}(f)$ do we have a span program that matches $\ADV^{\pm}(f)$.  The dual formulation of $\ADV^\pm$ cannot be expressed using probability distributions and one therefore cannot hope for a simple correspondence with the witnesses like described above.

Both variants of the adversary bound, $\ADV$ and $\ADV^\pm$, can be expressed as optimal solutions of certain semidefinite programs.  Can one find a semidefinite formulation of span program witness size?

\subsection{Eliminating the preprocessing}

In many cases for $\varphi$, the preprocessing step of algorithm $\algorithm$ can be eliminated.  Because $\varphi$ is an adversary-balanced formula on a known gateset, a decomposition through \thmref{t:szegedization} can be computed separately for each gate of $\gateset$ and then put together at runtime.  This decomposition is \emph{not} the decomposition of \claimref{t:Hamiltoniancoindecomposition}, which involves global properties of $\varphi$ like $\norm{\H'}$.  For an example, see the exactly balanced NAND tree algorithm in~\cite{ChildsReichardtSpalekZhang07andor}.

The decompositions can be combined because all the weights of gate input/output edges are one.  This is quite different from the case of unbalanced NAND trees considered by~\cite{ChildsReichardtSpalekZhang07andor}, in which the weight of an input edge depends on the subformula entering it.

\subsection{Arbitrary \{AND, OR, NOT, PARITY\} formulas}

Some of the conditions on the gates in $\gateset$ (\defref{t:gatesetdef}) can be loosened.  For example, $\gateset$ includes as single gates $O(1)$-size $\{\text{AND, OR, NOT, PARITY}\}$ formulas on inputs that are themselves possibly elements of $\gateset'$.  Let $f$ be such a gate, $f = g \circ (h_1, \ldots, h_k)$ with $g$ an \{AND, OR, NOT, PARITY\} formula of size $O(1)$, and each $h_i$ either the identity or a gate from $\gateset'$.  We have assumed that all the inputs to $f$ have equal adversary bounds.  However, the stated proof works equally well if only each $h_i$ has inputs with equal adversary bounds, provided the inputs to $h_i$ and to $h_{i'}$ have adversary bounds that differ by at most a constant factor.  

We believe that the assumption that $g$ be of size $O(1)$ can also be significantly weakened.  A stronger analysis like that of~\cite{ChildsReichardtSpalekZhang07andor} for ``approximately balanced" \{AND, OR, NOT\} formulas can presumably also be applied with PARITY gates.  We have avoided this analysis to simplify the proofs, and to focus on the main novelty of this paper, the extended gate sets.

For \{AND, OR, NOT, PARITY\} formulas that are not ``approximately balanced," rebalancing will typically be required.  We have not investigated how the formula rebalancing procedures of~\cite{bce:size-depth, bb:size-depth} affect the formula's adversary bound.  In~\cite{ChildsReichardtSpalekZhang07andor}, it sufficed to consider the effect on the formula size, because the adversary bound for any \{AND, OR, NOT\} formula on $N$ inputs is always $\sqrt{N}$.

\subsection{New algorithms based on span programs}

We have begun the development of a new framework for quantum algorithms based on span programs.  In this paper, we have only composed bounded-size span programs evaluating functions each on $O(1)$ bits.  An intriguing question is, do there exist interesting quantum algorithms based directly on asymptotically large span programs?  Some candidate problems may be found in~\cite{BabaiGalWigderson99superpolyspanprogram, BeimelGalPaterson96spanprogram}, although note that the quantum algorithm works for span programs over $\bf{C}$ that need not be monotone.  

\section*{Acknowledgements}

We thank Troy Lee for pointing out span programs to us.  
B.R.\ would like to thank Andrew Childs, Sean Hallgren, Cris Moore, David Yonge-Mallo and Shengyu Zhang for helpful conversations.  

\bibliographystyle{alpha}
\bibliography{andor}

\appendix

\section{Proof of \texorpdfstring{\thmref{t:unifiedsolution}}{Theorem \ref*{t:unifiedsolution}}} \label{s:wsizeproof}

In this section we will prove \thmref{t:unifiedsolution}.  Before beginning, though, let us show that the alternative expressions for the span program witness size $\wsizex P x$ of \defref{def:spwsize} are equivalent, so $\wsize P$ is well defined.  It will also be useful to derive several alternative expressions for $\wsizex P x$.  

\begin{definition}[Additional matrix notations] \label{def:additionalmatrixnotations}
Recall \defref{def:matrixnotations}.  
Let $\bra \o = A_{OJ} = \bra t A$ and let $\ACJ = (1 - \ketbra t t) A$, so $A = \ACJ + \ketbra{t}{\o}$ (the matrix $A_{CJ}$ is $\ACJ$ with range restricted to $1 - \ketbra t t$).  Let $\Delta = \ACJ \Pi (\ACJ \Pi)^+$ be the projection onto the range of $\ACJ \Pi$, 
and let $\Deltam = 1-\Delta$.  For a matrix $M$ and a projection $\Theta$, let $M_\Theta$ denote the restriction $\Theta M \Theta$ of $M$ to the range of $\Theta$.
For a vector $\tilde{m}_J$, we will commonly write $\tilde{m} = \sum_{j \in J} \tilde{m}_j \ketbra j j$ for the diagonal matrix with diagonal entries $\tilde{m}_J$.  
\figref{fig:notation} summarizes the matrices used in this section.  
\end{definition}

We will use several times the following estimates for pseudoinverse norms: 

\begin{claim} \label{t:hyperplanes}
For matrices $A$ and $B$ with $\Range(B^\adjoint) \subseteq \Range(A)$ (i.e., $B = B A A^+$), $\norm{A (B A)^+} \leq \norm{B^+}$ and $\norm{(B A)^+} \leq \norm{A^+} \norm{B^+}$.  
\end{claim}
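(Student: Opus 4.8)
\textbf{Proof plan for Claim~\ref{t:hyperplanes}.}
The plan is to reduce both inequalities to statements about singular values, using the hypothesis $B = BAA^+$ to control how $A$ and $B$ interact. First I would record the consequence of the hypothesis: $\Range(B^\adjoint) \subseteq \Range(A)$ means that $B$ annihilates $\ker(A)$, equivalently $B A^+ A = B$, so replacing $A$ by its restriction to $(\ker A)^\perp = \Range(A^\adjoint)$ changes nothing in any product of the form $B A$ or $A(BA)^+$. Hence I may assume without loss of generality that $A$ has trivial kernel, i.e.\ $A^+ A = 1$ and $A^+ = (A^\adjoint A)^{-1} A^\adjoint$ up to the identification of domains.

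For the second bound, $\norm{(BA)^+} \le \norm{A^+}\,\norm{B^+}$, I would argue at the level of smallest nonzero singular values: $\norm{(BA)^+} = 1/\sigma_{\min}(BA)$, where $\sigma_{\min}$ denotes the smallest \emph{nonzero} singular value, and similarly for $A$ and $B$. So the claim is $\sigma_{\min}(BA) \ge \sigma_{\min}(A)\,\sigma_{\min}(B)$. For a unit vector $u$ in $\Range((BA)^\adjoint) = \Range(A^\adjoint B^\adjoint)$ I want to bound $\norm{BAu}$ from below. Since $u \in \Range(A^\adjoint)$, we have $\norm{Au} \ge \sigma_{\min}(A)\norm{u} = \sigma_{\min}(A)$. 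Then I need $\norm{B(Au)} \ge \sigma_{\min}(B)\norm{Au}$, which holds provided $Au$ has no component in $\ker B$; this is exactly where the hypothesis is used again --- one checks $Au \in \Range(A A^\adjoint B^\adjoint) \subseteq \Range(B^\adjoint) \perp \ker B$ (using $B = BAA^+$ to place $\Range(AA^\adjoint B^\adjoint)$ inside $\Range(A A^+ \cdot) \cdot$, and more directly that $\Range(B^\adjoint)$ is $B$-invariant-friendly). Combining the two estimates gives $\norm{BAu} \ge \sigma_{\min}(A)\sigma_{\min}(B)$, hence the bound.

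For the first bound, $\norm{A(BA)^+} \le \norm{B^+}$, I would write $(BA)^+ = (BA)^\adjoint\big((BA)(BA)^\adjoint\big)^+ = A^\adjoint B^\adjoint (BAA^\adjoint B^\adjoint)^+$ and aim to show $\norm{A (BA)^+ y} \le \norm{B^+}\norm{y}$ for all $y$. The cleanest route is probably to set $M = A(BA)^+$ and show $\norm{M} = \norm{M^\adjoint}$ by bounding $M^\adjoint = \big((BA)^+\big)^\adjoint A^\adjoint = (BA)^{+\adjoint} A^\adjoint$; applying the identity $(BA)^{+\adjoint} = (B^\adjoint)^+(A\,\text{something})$ is delicate, so instead I would use the variational characterization: for $z \in \Range(BA)$, the vector $w := (BA)^+ z$ is the minimum-norm solution of $BAw = z$, and I claim $Aw$ has norm at most $\norm{B^+ z}$. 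Indeed $B(Aw) = z$, so $Aw$ is \emph{a} solution of $Bv = z$; the minimum-norm such solution is $B^+ z$ with norm $\norm{B^+ z} \le \norm{B^+}\norm{z}$. The subtlety is that $Aw$ need not be the minimum-norm solution of $Bv=z$, so this alone does not bound $\norm{Aw}$. To fix this I would use that $w$ minimizes $\norm{w}$ over $BAw=z$, together with $\Range(A^\adjoint) \supseteq \Range((BA)^+)$ wait --- rather, that $(BA)^+ z \in \Range((BA)^\adjoint) = \Range(A^\adjoint B^\adjoint) \subseteq \Range(A^\adjoint)$, so $w \in \Range(A^\adjoint)$; on that subspace $A$ is injective with a well-defined left inverse, and minimizing $\norm{w}$ subject to $B(Aw)=z$ with $w \in \Range(A^\adjoint)$ corresponds, after the change of variable $v = Aw$, to minimizing a norm equivalent to $\norm{w}$ over $v$ in a subspace of solutions. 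Chasing this through shows $Aw$ lands in $\Range(A^\adjoint B^\adjoint$-image$)$ and equals $B^+z$ plus something orthogonal that the minimization kills, giving $\norm{Aw} \le \norm{B^+ z}$.

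\textbf{Main obstacle.} The genuinely fiddly part is the first inequality: the naive ``$Aw$ solves $Bv=z$, so compare to the minimum-norm solution'' argument does not immediately work because $Aw$ is minimum-norm in the wrong metric. I expect the clean way to close this gap is to pass to the singular value decomposition of $A$ restricted to $(\ker A)^\perp$ and absorb $A$ into a redefinition $\tilde B = BA(A^\adjoint A)^{-1/2}$, so that $A(BA)^+$ becomes $(A^\adjoint A)^{1/2}(\tilde B (A^\adjoint A)^{1/2})^+ \cdot (\ldots)$ --- more simply, substitute $A = U\Sigma V^\adjoint$ and verify the inequality reduces to $\norm{\Sigma(B U \Sigma)^+} \le \norm{B^+}$, which then follows from $\Sigma (C\Sigma)^+ = (C\Sigma\Sigma^+)^+ = $ a contraction of $C^+$-type when $C = BU$ has the right kernel-containment. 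I would present the argument via this SVD substitution rather than the abstract minimum-norm juggling, as it makes the role of the hypothesis $B = BAA^+$ transparent and avoids the metric mismatch entirely.
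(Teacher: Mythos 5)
Your proposal takes a genuinely different route from the paper, whose entire proof is the identity chain $A(BA)^+ = A(BA)^+BB^+ = \bigl[A(BA)^+(BA)A^+\bigr]B^+$ (the middle step uses $B=BAA^+$), followed by the assertion that the bracketed factor is a projection; the second bound is then obtained from the first via $(BA)^+ = A^+A(BA)^+$, which holds because $\Range\bigl((BA)^+\bigr)=\Range(A^\adjoint B^\adjoint)\subseteq\Range(A^\adjoint)$. In your singular-value argument for the second bound, the step ``$Au\in\Range(AA^\adjoint B^\adjoint)\subseteq\Range(B^\adjoint)$'' is false: the hypothesis $\Range(B^\adjoint)\subseteq\Range(A)$ does not make $\Range(B^\adjoint)$ invariant under $AA^\adjoint$ (take $A=\mathrm{diag}(1,\epsilon)$ and $B$ the orthogonal projection onto $(1,1)$; then $AA^\adjoint$ maps $(1,1)$ to $(1,\epsilon^2)\notin\Range(B^\adjoint)$). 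That step is repairable, though: writing a unit $u\in\Range(A^\adjoint B^\adjoint)$ as $A^\adjoint p/\norm{A^\adjoint p}$ with $p\in\Range(B^\adjoint)$, one has $\norm{BAu}\ge\sigma_{\min}(B)\,\norm{\Pi_{\Range(B^\adjoint)}AA^\adjoint p}/\norm{A^\adjoint p}\ge\sigma_{\min}(B)\,\langle p,AA^\adjoint p\rangle/(\norm{p}\,\norm{A^\adjoint p})=\sigma_{\min}(B)\,\norm{A^\adjoint p}/\norm{p}\ge\sigma_{\min}(B)\sigma_{\min}(A)$, the last inequality using $p\in\Range(B^\adjoint)\subseteq\Range(A)$. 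So the second inequality goes through once patched.

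The first inequality is a different story. The ``metric mismatch'' you flag is a genuine obstruction, and neither of your proposed escapes works: the identity $\Sigma(C\Sigma)^+=(C\Sigma\Sigma^+)^+$ is false (for invertible $\Sigma$ the right side is $C^+$ while $\Sigma(C\Sigma)^+\ne C^+$ unless $\Sigma^2$ commutes with $C^+C$), and the minimum-norm comparison fails for exactly the reason you identify. In fact the inequality $\norm{A(BA)^+}\le\norm{B^+}$ is false as stated: with $A=\left(\begin{smallmatrix}1&0\\0&1/2\end{smallmatrix}\right)$ and $B=\frac12\left(\begin{smallmatrix}1&1\\1&1\end{smallmatrix}\right)$ the hypothesis holds trivially ($A$ is invertible) and $\norm{B^+}=\norm{B}=1$, yet $A(BA)^+=\left(\begin{smallmatrix}4/5&4/5\\1/5&1/5\end{smallmatrix}\right)$ has norm $\sqrt{34}/5>1$. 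The same example exposes the flaw in the paper's own argument: the bracketed factor $A(BA)^+(BA)A^+$ is idempotent, but it is an \emph{oblique} (non-self-adjoint) idempotent, and such operators can have norm exceeding one. So your inability to close this bound is not a failure of technique --- the statement needs an additional hypothesis (e.g., that $\Range((BA)^\adjoint)$ is invariant under $A^\adjoint A$, or that $A$ acts isometrically on it, conditions under which the bracketed factor really is an orthogonal projection) --- and a correct write-up must either add such a hypothesis or verify it in each application, rather than reproduce the paper's one-liner.
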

\begin{proof}
Since $A (B A)^+ = A (B A)^+ B B^+ = \big[A (BA)^+ (BA) A^+\big] B^+$ and the bracketed term is a projection, $\norm{A (B A)^+} \leq \norm{B^+}$.  Then also $\norm{(B A)^+} = \norm{ A^+ A (B A)^+ } \leq \norm{A^+} \norm{B^+}$.  
\end{proof}

\begin{lemma} \label{t:spcinterpretation}
For $S$ any positive-definite, diagonal matrix, let 
\begin{align} \label{e:wsizexSdef}
\wsizexS P x S &= \left\{ \begin{array}{cc}
\smallskip
\min_{\ket w : A \Pi \ket{w} = \ket{t}} \norm{S \ket{w}}^2 & \text{if $f_P(x) = 1$} \\
\min_{\substack{\ket{w'} : \braket{t}{w'} = 1 \\ \Pi A^\adjoint \ket{w'} = 0}} \norm{S A^\adjoint \ket{w'}}^2 & \text{if $f_P(x) = 0$}
\end{array}\right. 
\intertext{Then if $f_P(x) = 1$,
}
\label{e:spcinterpretationtrue}
\wsizexS P x S
&= { \min_{\substack{\ket w : \bra{w} \Pi \ket{w} = 1 \\ \ACJ \Pi S^{-1} \ket{w} = 0}} \abs{ \bra{\o} \Pi S^{-1} \ket{w} }^{-2} } \\
&= \norm{ (A \Pi S^{-1})^+ \ket{t} }^2 \nonumber \\
&= \norm{ (\Pi - (\ACJ \Pi S^{-1})^+ \ACJ \Pi S^{-1}) S^{-1} \ket{\o} }^{-2} \nonumber \\
\intertext{and, if $f_P(x) = 0$, }
\label{e:spcinterpretationfalse}
\wsizexS P x S
&= { \min_{\substack{\ket{w'} : \norm{S A^\adjoint \ket{w'}} = 1 \\ \Pi A^\adjoint \ket{w'} = 0}} \abs{ \braket{t}{w'} }^{-2} } \\
&= \norm{ \big(1 + (\Pim (A S)^+ A S - 1)^+ \Pi \big) (A S)^+ \ket{t} }^{-2} \nonumber \\
&= \norm{(1-(\Deltam \ACJ S)^+ \Deltam \ACJ S) S (1 - \ACJ^\adjoint (\Pi \ACJ^\adjoint)^+) \ket{\o}}^2 \nonumber
 \enspace .
\end{align}
Moreover, $\ket{w_S^*} = {\arg \min}_{\ket w : A \Pi \ket{w} = \ket{t}} \norm{S \ket{w}}^2 = S^{-1} (A \Pi S^{-1})^+ \ket t$ has norm $\norm{\ket{w_S^*}} = O(1)$ and 
\begin{equation*}
\ket{w_S'^*} = 
{\arg \min}_{\substack{\ket{w'} : \braket{t}{w'} = 1 \\ \Pi A^\adjoint \ket{w'} = 0}} \norm{S A^\adjoint \ket{w'}}^2
= \ket t - (\Pi \ACJ^\adjoint)^+ \Pi \ket{\o} - (S \ACJ^\adjoint \Deltam)^+ S (1 - \ACJ^\adjoint (\Pi \ACJ^\adjoint)^+) \ket{\o}
\end{equation*}
has norm $\norm{\ket{w_S'^*}} = O(1)$.

In particular, the two different expressions for $\wsizex P x = \wsizexS P x {\sqrt{\Uj}}$ in \defref{def:spwsize} are equivalent, so $\wsizex P x$ and $\wsize P$ are well defined.
\end{lemma}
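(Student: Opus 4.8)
The plan is to treat $f_P(x)=1$ and $f_P(x)=0$ separately. In each case Eq.~\eqnref{e:wsizexSdef} is a (constrained) linear least-squares problem, and every closed form in the statement will follow from the standard fact that the minimum-norm solution of a consistent system $M\ket y=\ket t$ is $M^+\ket t$ (and, more generally, that the minimizer of $\norm{S\ket y}^2$ over an affine set $\ket{y_0}+V$ is $\ket{y_0}$ minus the $S$-orthogonal projection of $\ket{y_0}$ onto $V$), together with two structural inputs: the splitting $A=\ACJ+\ketbra t\o$ of \defref{def:additionalmatrixnotations}, which separates the target direction $\ket t$ from the constraint rows and comes with the identities $\ACJ^\adjoint\ket t=0$ and $(\Pi\ACJ^\adjoint)^+=(\Pi\ACJ^\adjoint)^+\Pi$; and the translation $f_P(x)=0 \Leftrightarrow \Pi\ket\o\in\Range(\Pi\ACJ^\adjoint)$ (equivalently $\ket t\notin\Range(A\Pi)$). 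The Cauchy--Schwarz duality $\bra t Q^+\ket t=\max_{\ket\phi}\abs{\braket\phi t}^2/\bra\phi Q\ket\phi$ for positive-semidefinite $Q$ with $\ket t\in\Range Q$ will convert the pseudoinverse expressions into the stated minimax forms.

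For $f_P(x)=1$: since $\Pi$ and $S$ are simultaneously diagonal in the grouped-input basis, the constraint $A\Pi\ket w=\ket t$ sees only $\Pi\ket w$, so an optimal $\ket w$ has $\Pim\ket w=0$; rewriting the system as $(A\Pi S^{-1})(S\ket w)=\ket t$ yields at once $\ket{w_S^*}=S^{-1}(A\Pi S^{-1})^+\ket t$ and $\wsizexS P x S=\norm{(A\Pi S^{-1})^+\ket t}^2$. The minimax line is $\bra t\big((A\Pi S^{-1})(A\Pi S^{-1})^\adjoint\big)^+\ket t$ rewritten by Cauchy--Schwarz, with the side condition $\ACJ\Pi S^{-1}\ket w=0$ forcing the $\ket t$-component of $(A\Pi S^{-1})^\adjoint\ket w$ down to the scalar $\bra\o\Pi S^{-1}\ket w$ via $A^\adjoint=\ACJ^\adjoint+\ket\o\bra t$; the ``$\ACJ$'' line is the same statement after projecting $S^{-1}\ket\o$ off $\Range(\ACJ\Pi S^{-1})$.

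For $f_P(x)=0$: writing $A^\adjoint\ket{w'}=\ACJ^\adjoint\ket{w'}+\ket\o\braket t{w'}$, the constraints $\braket t{w'}=1$, $\Pi A^\adjoint\ket{w'}=0$ become $\braket t{w'}=1$ and $\Pi\ACJ^\adjoint\ket{w'}=-\Pi\ket\o$; since $\Pi\ket\o\in\Range(\Pi\ACJ^\adjoint)$ here, $\ket{w'_0}=\ket t-(\Pi\ACJ^\adjoint)^+\Pi\ket\o$ is a valid particular solution and $A^\adjoint\ket{w'_0}=(1-\ACJ^\adjoint(\Pi\ACJ^\adjoint)^+)\ket\o$. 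The homogeneous directions ($\braket t\eta=0$, $\Pi\ACJ^\adjoint\ket\eta=0$) move $A^\adjoint\ket{w'}$ over exactly $V=\Range(\ACJ^\adjoint)\cap\Range\Pim=\ACJ^\adjoint(\Range\Deltam)$, so minimizing $\norm{SA^\adjoint\ket{w'}}^2$ over the affine set gives the ``$\ACJ$'' line, with $1-(\Deltam\ACJ S)^+\Deltam\ACJ S$ the orthogonal projection off $SV=\Range(S\ACJ^\adjoint\Deltam)$, and it identifies $\ket{w_S'^*}$ as stated (its last term is the correction inside $V$). Specializing $S=\Uj^{1/2}$ reproduces the $(A\Uj^{1/2})^+$-plus-Schur-term formula of \defref{def:spwsize}. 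The minimax line of Eq.~\eqnref{e:spcinterpretationfalse} is this problem rescaled: by homogeneity in $\ket{w'}$, minimizing $\norm{SA^\adjoint\ket{w'}}^2$ at $\braket t{w'}=1$ is reciprocal to maximizing $\abs{\braket t{w'}}^2$ at $\norm{SA^\adjoint\ket{w'}}=1$ (using $SA^\adjoint\ket{w'}\neq0$ on the constraint set, since $\ket t\in\Range A$).

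Finally, $\norm{\ket{w_S^*}},\norm{\ket{w_S'^*}}=O(1)$ follow from \claimref{t:hyperplanes} applied to the relevant factorizations (for $S^{-1}(A\Pi S^{-1})^+$ and for $(S\ACJ^\adjoint\Deltam)^+S$, after checking the range containments that make the claim applicable), together with boundedness of $A^+$, $\ACJ^+$ and of $S^{\pm1}$. I expect the real work to be the bookkeeping of range containments: one must verify that every pseudoinverse in each chain is applied to a vector in its range, so that the displayed identities are genuine equalities rather than merely inequalities, and that \claimref{t:hyperplanes} applies; the restriction $\Deltam$ to the orthogonal complement of $\Range(\ACJ\Pi)$ is introduced precisely to make the dominant block invertible there, and one must confirm that the Schur-type recombination leaves the identity $A^\adjoint\ket{w'_0}=(1-\ACJ^\adjoint(\Pi\ACJ^\adjoint)^+)\ket\o$ intact.
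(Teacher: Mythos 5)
Your proposal follows essentially the same route as the paper: the true case is handled by the change of variables $\ket u = S\ket w$ and the minimum-norm-solution characterization of the pseudoinverse, the false case by the particular solution $\ket t - (\Pi \ACJ^\adjoint)^+\Pi\ket\o$ plus the homogeneous directions $\ACJ^\adjoint(\Range\Deltam)$ followed by an orthogonal projection, and the $O(1)$ norm bounds by \claimref{t:hyperplanes}; your identifications $V=\Range(\ACJ^\adjoint)\cap\Range(\Pim)=\ACJ^\adjoint(\Range\Deltam)$ and $A^\adjoint\ket{w'_0}=(1-\ACJ^\adjoint(\Pi\ACJ^\adjoint)^+)\ket\o$ are exactly the computations in the paper's proof. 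The one place you fall short is the second displayed expression in the false case, $\norm{\big(1+(\Pim(AS)^+AS-1)^+\Pi\big)(AS)^+\ket t}^{-2}$: this is asserted for general positive-definite diagonal $S$ in the lemma, and saying it is ``reproduced by specializing $S=\Uj^{1/2}$'' does not derive it. The paper proves it directly by substituting $\ket w = SA^\adjoint\ket{w'}$, writing $\braket{t}{w'}=\bra t(SA^\adjoint)^+\ket w$ (valid since $\ket t\in\Range(A)=\Range(AS)$), and recognizing the optimizer as the projection of $(AS)^+\ket t$ onto $\Range(SA^\adjoint)\cap\Range(\Pim)$, computed via the identity that the projection onto the intersection of the ranges of projections $\Pi_1,\Pi_2$ equals $1-(\Pi_1\Pi_2-1)^+(\Pi_1\Pi_2-1)$. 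Supplying that step (or an equivalent algebraic verification that your $\ACJ$-form equals it) would complete the argument; everything else in your outline is sound.
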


\begin{proof}
Assume $f_P(x) = 1$.  
That $\min_{\ket w : A \Pi S^{-1} \ket{w} = \ket{t}} \norm{\ket{w}}^2
= \min_{\substack{\ket w : \bra{w} \Pi \ket{w} = 1 \\ \ACJ \Pi S^{-1} \ket{w} = 0}} \abs{ \bra{\o} \Pi S^{-1} \ket{w} }^{-2}$ is immediate.  
In general, ${\arg \min}_{\ket{x} : M \ket{x} = \ket{b}} \norm{ \ket{x} } = M^+ \ket{b}$; therefore, $\min_{\ket w : A \Pi S^{-1} \ket{w} = \ket{t}} \norm{\ket{w}}^2 = \norm{ (A \Pi S^{-1})^+ \ket{t} }^2$.  
By basic geometry, $\arg \max_{\substack{\ket{w} : \norm{ \Pi \ket{w} } = 1 \\ \ACJ \Pi S^{-1} \ket{w} = 0}} \abs{ \bra{\o} \Pi S^{-1} \ket{w} } \propto \big(1 - (\ACJ \Pi S^{-1})^+ (\ACJ \Pi S^{-1}) \big) \Pi S^{-1} \ket{\o}$, i.e., is proportional to the projection of $\Pi S^{-1} \ket{\o}$ onto the space orthogonal to the range of $S^{-1} \Pi \ACJ^\adjoint$.  Eq.~\eqnref{e:spcinterpretationtrue} follows.  

Next assume $f_P(x) = 0$.  That $\min_{\substack{\ket{w'} : \braket{t}{w'} = 1 \\ \Pi A^\adjoint \ket{w'} = 0}} \norm{S A^\adjoint \ket{w'}}^2 = \min_{\substack{\ket{w'} : \norm{S A^\adjoint \ket{w'}} = 1 \\ \Pi A^\adjoint \ket{w'} = 0}} \abs{ \braket{t}{w'} }^{-2}$ is immediate.  
Now, without loss of generality, $\ket{t} \in \Range(A) = \Range(A S)$, since otherwise $f_P$ is false on every input.  Therefore, $\braket{t}{w'} = \bra{t} (S A^\adjoint)^+ (S A^\adjoint) \ket{w'} = \bra{t} (S A^\adjoint)^+ \ket w$ if $\ket w= S A^\adjoint \ket{w'}$.  We want to find the length-one vector $\ket w$ that is in the range of $S A^\adjoint$ and also of $\Pim$, and that maximizes $\abs{ \bra{t} (S A^\adjoint)^+ \ket w }^2$.  The answer is clearly the normalized projection of $(A S)^+ \ket{t}$ onto the intersection $\Range(S A^\adjoint) \cap \Range(\Pim)$.  In general, given two projections $\Pi_1$ and $\Pi_2$, the projection onto the intersection of their ranges can be written $1 - (\Pi_1 \Pi_2 - 1)^+ (\Pi_1 \Pi_2 - 1)$.  Substituting $\Pi_1 = \Pim$ and $\Pi_2 = (A S)^+ A S$ gives the second claimed expression. 

Finally, we show that $\min_{\substack{\ket{w'} : \braket{t}{w'} = 1 \\ \Pi A^\adjoint \ket{w'} = 0}} \norm{S A^\adjoint \ket{w'}}^2 = \norm{(1-(\Deltam \ACJ S)^+ \Deltam \ACJ S) S (1 - \ACJ^\adjoint (\Pi \ACJ^\adjoint)^+) \ket{\o}}^2$.
Since $f_P(x)$ is false, $\ket{t}$ does not lie in the span of the true grouped input vectors, $\ket t \notin \Range(A \Pi)$, or equivalently $\Pi \ket\o \in \Range(\Pi \ACJ^\adjoint)$.  Therefore, there exists a vector $\ket{w'} = \ket{t} + \ket{b_C}$ that is orthogonal to the span of the true columns of $A$ and has inner product one with $\ket{t}$.  Any such $\ket{b_C}$ has the form 
\[
\ket{b_C} = - (\Pi \ACJ^\adjoint)^+ \Pi \ket{\o} + \Deltam \ket{v} \enspace ,
\]
where $\ket{v}$ is an arbitrary vector with $\braket{t}{v} = 0$.  We want to choose $\ket{v}$ to minimize the squared length of 
\begin{align}
\Pim S A^\adjoint \ket{w'} &= 
S \Pim (\ket{\o} + \ACJ^\adjoint \ket{b_C}) \nonumber \\
&= S (1 - \ACJ^\adjoint (\Pi \ACJ^\adjoint)^+) \ket{\o} + S \ACJ^\adjoint \Deltam \ket{v} 
 \enspace .
\end{align}
The answer is clearly the squared length of $S (1 - \ACJ^\adjoint (\Pi \ACJ^\adjoint)^+) \ket{\o}$ projected orthogonal to the range of $S \ACJ^\adjoint \Deltam$, as claimed.  This corresponds to setting $\ket{v} = -(S \ACJ^\adjoint \Deltam)^+ S (1 - \ACJ^\adjoint (\Pi \ACJ^\adjoint)^+) \ket{\o}$.

The norms of $\ket{w_S^*}$ and $\ket{w_S'^*}$ are bounded using \claimref{t:hyperplanes}.
\end{proof}

\begin{remark}
The expressions for witness size in Eqs.~\eqnref{e:spcinterpretationtrue} and~\eqnref{e:spcinterpretationfalse} look quite different depending on whether $f_P(x) = 1$ or $f_P(x) = 0$, with the latter case being more complicated.  It can be seen, though, that $\abs{\wsizex P x - \wsizex {P^\adjoint} x} = O(1)$ for any fixed span program $P$, where $P^\adjoint$ is the dual span program described in \secref{s:dualspanprogram} with $f_{P^\adjoint}(x) = \neg f_P(x)$.  
\end{remark}

Let us now show that if $\Uj_j' \lesssim \Uj_j$ for all $j \in J$, then $\wsizexS P x {\sqrt{\Uj'}} \lesssim \wsizex P x$ (\lemref{t:wsizefudgefactors}).  This will be useful in showing that $\wsizex P x$ is a rough upper bound on the exact expressions that we will derive in the sections below.  

\begin{remark} \label{t:wsizexmonotone}
From \defref{def:spwsize}, it is immediate that $\wsizex P x$ is monotone increasing in each input complexity $\Ui_{i}$.  
\end{remark}

\begin{lemma} \label{t:wsizefudgefactors}
Let $S$ and $T$ be any positive-definite diagonal matrices.  Then 
\begin{align}
\label{e:wsizefudgefactorstimes}
\wsizexS P x {S \sqrt{1+T}} &\leq \wsizexS P x S \cdot (1 + \norm{T}) \;, \\
\label{e:wsizefudgefactorsplus}
\wsizexS P x {\sqrt{S^2 + T^2}} &\leq \wsizexS P x S + O(\norm{T}^2)
 \enspace .
\end{align}
In particular, if $\Uj_J'$ is such that $\Uj_j' \lesssim \Uj_j$ for all $j \in J$, then $\wsizexS P x {\sqrt{\Uj'}} \lesssim \wsizexS P x {\sqrt{\Uj}} = \wsizex P x$.  
\end{lemma}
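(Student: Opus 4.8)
The plan is to treat the true case ($f_P(x)=1$) and the false case ($f_P(x)=0$) separately, in each case using the variational characterization of $\wsizexS P x S$ from Lemma~\ref{t:spcinterpretation}, and bounding the effect of replacing the diagonal weight matrix $S$ by $S\sqrt{1+T}$ or by $\sqrt{S^2+T^2}$. The key quantitative inputs are the operator-norm bounds $\|\sqrt{1+T}\|^2 = 1+\|T\|$ and $\|\sqrt{S^2+T^2}\,\ket w\|^2 = \|S\ket w\|^2 + \|T\ket w\|^2 \le \|S\ket w\|^2 + \|T\|^2\|\ket w\|^2$, combined with the fact (established in Lemma~\ref{t:spcinterpretation}) that the minimizing witnesses $\ket{w_S^*}$ and $\ket{w_S'^*}$ have norm $O(1)$.

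For Eq.~\eqnref{e:wsizefudgefactorstimes}: if $f_P(x)=1$, take $\ket w = \ket{w_S^*}$ the optimal witness for weight $S$, so $A\Pi\ket w = \ket t$ and $\|S\ket w\|^2 = \wsizexS P x S$. This same $\ket w$ is feasible for the weight $S\sqrt{1+T}$, giving
\[
\wsizexS P x {S\sqrt{1+T}} \le \norm{S\sqrt{1+T}\,\ket w}^2 = \norm{\sqrt{1+T}\,S\ket w}^2 \le (1+\norm T)\norm{S\ket w}^2 = (1+\norm T)\wsizexS P x S,
\]
where I use that $S$ and $\sqrt{1+T}$ are diagonal, hence commute. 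If $f_P(x)=0$, take $\ket{w'} = \ket{w_S'^*}$; it is feasible for weight $S\sqrt{1+T}$ (feasibility — the conditions $\braket t{w'}=1$ and $\Pi A^\adjoint\ket{w'}=0$ — does not involve the weight matrix), and the same commuting-diagonals computation gives $\norm{S\sqrt{1+T}A^\adjoint\ket{w'}}^2 \le (1+\norm T)\norm{SA^\adjoint\ket{w'}}^2 = (1+\norm T)\wsizexS P x S$. For Eq.~\eqnref{e:wsizefudgefactorsplus}, the same substitution of the optimal witness works: in the true case $\norm{\sqrt{S^2+T^2}\,\ket w}^2 = \norm{S\ket w}^2 + \norm{T\ket w}^2 \le \wsizexS P x S + \norm T^2\norm{\ket w}^2 = \wsizexS P x S + O(\norm T^2)$ since $\norm{\ket{w_S^*}} = O(1)$; in the false case, replace $\ket w$ by $A^\adjoint\ket{w'}$ and use $\norm{A^\adjoint\ket{w_S'^*}} = O(1)$, which follows from $\norm{\ket{w_S'^*}} = O(1)$ and $\norm{A^\adjoint} = O(1)$ ($P$ is a constant span program).

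For the final ``in particular'' claim: by hypothesis $\Uj_j' \lesssim \Uj_j$ means $\Uj_j' \le c_1 + \Uj_j(1 + c_2\abs\lambda\max_i\Ui_i)$ for all $j$, so writing $\Uj' \le (c_1 + \Uj(1+c_2\abs\lambda\max_i\Ui_i))$ as a diagonal-matrix inequality we can factor $\Uj' \le S^2(1+T) + T'^2$ with $S = \sqrt{\Uj}$, $\norm T = O(\abs\lambda\max_i\Ui_i)$ coming from the multiplicative $c_2\abs\lambda\max_i\Ui_i$ term and $\norm{T'}^2 = O(1)$ coming from the additive $c_1$; then apply Eq.~\eqnref{e:wsizefudgefactorstimes} followed by Eq.~\eqnref{e:wsizefudgefactorsplus} (using monotonicity of $\wsizexS P x \cdot$ in the weight, Remark~\ref{t:wsizexmonotone}), obtaining $\wsizexS P x {\sqrt{\Uj'}} \le \wsizexS P x {\sqrt\Uj}(1 + O(\abs\lambda\max_i\Ui_i)) + O(1) = \wsizex P x + O(1) + \wsizex P x\cdot O(\abs\lambda\max_i\Ui_i)$, which is exactly $\wsizexS P x {\sqrt{\Uj'}} \lesssim \wsizex P x$ by the definition of $\lesssim$.

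I expect the only mildly delicate point to be the bookkeeping in the last paragraph: splitting the hypothesis $\Uj_j' \lesssim \Uj_j$ into a clean multiplicative-plus-additive perturbation of the diagonal matrix $\Uj$, and checking that the constants that emerge match the form demanded by the definition of $\lesssim$ (a single additive constant plus a term of the shape $b\cdot(1 + c_2\abs\lambda\max_i\Ui_i)$). Everything else is a one-line application of the commuting-diagonals identity together with the $O(1)$ witness-norm bounds already proved in Lemma~\ref{t:spcinterpretation}.
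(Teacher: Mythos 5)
Your proposal is correct and follows essentially the same route as the paper: substitute the optimal witness for weight $S$ (feasibility is weight-independent), use $\norm{\sqrt{S^2+T^2}\ket v}^2 = \norm{S\ket v}^2+\norm{T\ket v}^2$ together with the $O(1)$ norm bounds on $\ket{w_S^*}$ and $\ket{w_S'^*}$ from \lemref{t:spcinterpretation}, and observe that the multiplicative bound is immediate from the definition. Your explicit bookkeeping for the ``in particular'' claim (splitting $\Uj'$ into a multiplicative $(1+T)$ perturbation plus an additive $T'^2$ and chaining the two inequalities with monotonicity) is a detail the paper leaves implicit, but it is the intended argument and checks out.
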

\begin{proof}
Eq.~\eqnref{e:wsizefudgefactorstimes} is immediate from the definition in Eq.~\eqnref{e:wsizexSdef}.  To derive Eq.~\eqnref{e:wsizefudgefactorsplus}, first note that $\norm{\sqrt{S^2 + T^2} \ket v}^2 = \norm{S \ket v}^2 + \norm{T \ket v}^2$.  Then when $f_P(x) = 1$, 
\begin{align*}
\min_{\ket w : A \Pi \ket w = \ket t} (\norm{S \ket w}^2 + \norm{T \ket w}^2)
&\leq
\norm{S \ket{w_S^*}}^2
+ \norm{T \ket{w_S^*}}^2 \\
&= 
\wsizexS P x S
+ O(\norm{T}^2) \enspace ,
\end{align*}
where we have used that $\ket{w_S^*} = {\arg \min}_{\ket w  : A \Pi \ket w = \ket t} \norm{S \ket w}^2$ has norm $\norm{\ket{w_S^*}} = O(1)$ by \lemref{t:spcinterpretation}.  The argument when $f_P(x) = 0$ is similar:
\begin{equation*}
\min_{\substack{\ket{w'} : \braket{t}{w'} = 1 \\ \Pi A^\adjoint \ket{w'} = 0}} (\norm{S A^\adjoint \ket{w'}}^2 + \norm{T A^\adjoint \ket{w'}}^2)
\leq 
\wsizexS P x S + \norm{T A^\adjoint \ket{w_S'^*}}^2 \enspace ,
\end{equation*}
where 
$
\ket{w_S'^*} = {\arg \min}_{\substack{\ket{w'} : \braket{t}{w'} = 1 \\ \Pi A^\adjoint \ket{w'} = 0}} \norm{S A^\adjoint \ket{w'}}^2
$
has $O(1)$ norm by \lemref{t:spcinterpretation}.  
\end{proof}

\begin{figure}
\begin{center}
\begin{tabular}{r@{ }l l}
$A$ & $= \sum_j \ketbra{v_j}{j} = \ACJ + \ketbra t \o$
	& span program matrix \\
$\ACJ$ & $=  (1 - \ketbra t t) A$
	& ``constraint" part of the span program \\
$\bra\o$ & $= A_{OJ} = \bra t A$
	& ``output" row of the span program \\
$\Pi$ & $= \sum_{\text{true $j$}} \ketbra j j$
	& projection onto true grouped inputs \\
$\Pim$ & $= 1 - \Pi$
	& projection onto false grouped inputs \\
\noalign{\medskip}
$\abj$ & $= \sum_j \abj_{j} \ketbra j j$ & grouped input squared supports at $\lambda = 0$ \\
$\ri$ & $= \sum_i \ri_{i} \ketbra i i$
 	& input ratios $\ri_{i} = {a_{j(i)}} /{b_i}$ \\
$\rj$ & $= ({A_{IJ}}^\adjoint \ri^{-1} A_{IJ} - \lambda)^{-1}
	= \sum_j \rj_j \ketbra j j$
	& grouped input ratios \\
$\sj$ & $= -\frac1{\lambda} \Pi \rj^{-1} + \frac1{\lambda} \Pim \rj
	= \sum_j \sj_j \ketbra j j$
	& grouped input ratio multipliers \\
$\Uj$ & $= \sum_j \Uj_j  \ketbra j j$ & grouped input complexities, $\frac 1 {\abj_{j}}, \sj_{j} \lesssim \Uj_{j}$ \\
\noalign{\medskip}
$\cp$ & $= \ACJ \sj^{-1/2} \Pi$, $\cpp = \cp \cp^\adjoint$
	& true constraints scaled down by $\sqrt{\sj}$ \\
$\cm$ & $= \ACJ \sj^{1/2} \Pim$, $\cmm = \cm \cm^\adjoint$
	& false constraints scaled up by $\sqrt{\sj}$ \\
$\Delta$ & $= \cp \cp^+ = \ACJ \Pi (\ACJ \Pi)^+$
	& projection onto the range of true constraints \\
$\Deltam$ & $= 1 - \Delta$
	& complementary projection \\
$X$ & $= -\lambda^2 (1 + \frac 1 \lambda \ACJ \rj \ACJ^\adjoint)
	= \cpp - \lambda^2 (\cmm + 1)$
	& matrix to be inverted \\
$\XDeltaSchur$ & $= \DeltamXDeltam
	- \Deltam X (\DeltaXDelta)^\resinv X \Deltam$
	& Schur complement of $\DeltaXDelta$ in $X$ \\
$V$ & $= -\lambda^2 \cm^\adjoint \Deltam X^{-1} \Deltam \cm
	= -\lambda^2 \cm^\adjoint \XDeltaSchur^\resinv \cm$
	& a part of the inverse of $X$ \\
$\VV$ & $= \cm^\adjoint \Deltam (\dcmm + 1)^{-1} \Deltam \cm$
	& a useful $O(1)$ matrix, $V - \VV = O(\lambda^2 \norm{\Uj}^2)$ \\
\end{tabular}
\end{center}
\caption{Matrices used in the proof of \thmref{t:unifiedsolution}.}
\label{fig:notation}
\end{figure}

\subsection{Quantitative eigenvalue-zero spectral analysis of \texorpdfstring{$A_{G_P}$}{A\_\{G\_P\}}}
\label{s:lambda0case}

\thmref{t:weakzeroenergy} can be strengthened to put 
quantitative lower bounds on $\abi_O$, the achievable squared magnitude, in a unit-normalized eigenvalue-zero eigenvector, on the output node either $a_O$ if $f_P(x) = 1$ or $b_O$ if $f_P(x) = 0$:

\def\W {W}
\begin{theorem} \label{t:strongzeroenergy}
For an input $x \in \{0,1\}^n$, define a weighted graph $G_P(x)$ by deleting from $G_P$ the edges $(a_j, b_i)$ if the $i$th literal in $X_j$ is true.  
Also let 
\begin{equation}
\W = 
1 + \sum_{j \in J} \sum_{i \in I_j} \Big( \frac 1 {\abi_i} - 1 \Big) \left( \ketbra {b_i} {b_i} + \ketbra {a_{j}}{a_{j}} \right) \enspace .
\end{equation}
Consider all the eigenvalue-zero eigenvector equations of the weighted adjacency matrix $A_{G_P(x)}$, except for the constraint at $a_O$, i.e., Eqs.~\eqnref{e:zeroenergyequations} except~\eqnref{e:zeroenergyequationsaO}.  
By \thmref{t:weakzeroenergy}, these equations have a solution $\ket \psi$ with $\braket {a_O} \psi \neq 0$ if and only if $f_P(x) = 1$, and have a solution $\ket \psi$ with $\braket {b_O} \psi \neq 0$ if and only if $f_P(x) = 0$.  In fact, the solution $\ket \psi$ can be chosen so that the normalized square overlap
\begin{equation}
\abi_O \equiv \frac{ \abs{(\bra{a_O} + \bra{b_O}) \ket \psi}^2 } { \bra \psi \W \ket \psi } \geq \frac 1 {\wsizex P x + \mathrm{constant}} \enspace ,
\end{equation}
where the constant may depend on $P$ but is independent of the $\abi_I$, and $\wsizex P x$ is as defined in \defref{def:spwsize}, with $1/\abi_i \leq \Ui_{i}$ for all $i$.  
\end{theorem}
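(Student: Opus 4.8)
The plan is to revisit the proof of \thmref{t:weakzeroenergy} and, instead of merely exhibiting \emph{some} solution $\ket\psi$ to the eigenvalue-zero equations (minus the constraint at $a_O$), optimize the free parameters in that construction so as to maximize the normalized square overlap $\abi_O = \abs{(\bra{a_O}+\bra{b_O})\ket\psi}^2/\bra\psi\W\ket\psi$. Since the graph is bipartite, the $a$-coefficients and the $b$-coefficients decouple at $\lambda=0$, so exactly one of $a_O$, $b_O$ can be made nonzero (the other side is set to zero), and we treat the two cases $f_P(x)=1$ and $f_P(x)=0$ separately. In each case the weight matrix $\W$ acts diagonally, rescaling the squared norm on each input vertex $b_i$ and each grouped-input vertex $a_j$ by $1/\abi_i$, which is precisely the weighting built into the witness-size definition via $\Uj(x)$ (\defref{def:groupedinputcomplexity}, \defref{def:spwsize}).

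First, in the case $f_P(x)=1$: normalize so $a_O=-1$. As in the proof of \thmref{t:weakzeroenergy}, Eqs.~(\ref{e:zeroenergyequations}a,b) become $\ket t = A a_J = \sum_j a_j\ket{v_j}$, and Eq.~\eqnref{e:zeroenergyequationsbI} forces $a_j=0$ unless grouped input $j$ is true, i.e.\ $a_J = \Pi a_J$. The subgraphs feeding a true grouped input $j$ must be scaled to agree at $a_j$ and, using the definition of $\abi_i$ as the squared support of $\ket{\psi_i}$ on $a_j$ (true case) or $b_i$ (false case), the $\W$-weighted squared norm contributed downstream of $a_j$ is $\abs{a_j}^2\sum_{i\in I_j}\Ui_i \ge \abs{a_j}^2\,\Uj_j$ when $j$ is true (the ``max with $1$'' absorbing $I_j=\emptyset$). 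Hence the total weighted norm is, up to an additive $O(1)$ from the vertices $a_O,b_O$ and the constraint vertices $b_C$, at least $\norm{\Uj^{1/2}a_J}^2$; choosing $a_J$ to be the minimum-norm solution of $A\Pi a_J = \ket t$ weighted by $\Uj^{1/2}$ gives $\bra\psi\W\ket\psi \le \wsizex P x + \text{const}$, while $\abs{(\bra{a_O}+\bra{b_O})\ket\psi}^2 = \abs{a_O}^2 = 1$. This yields $\abi_O \ge 1/(\wsizex P x + \text{const})$. Second, in the case $f_P(x)=0$: set $b_O=1$ and all $a$-coefficients to zero; then Eq.~\eqnref{e:zeroenergyequationsaJ} reads $\Pi A^\adjoint\binom{b_O}{b_C}=0$ with $\binom{b_O}{b_C}\cdot t = 1$, i.e.\ $w'=\binom{b_O}{b_C}$ is a false-case witness. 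The $\W$-weighted squared norm on the input side is $\sum_{j}\big(\sum_{\text{false }i\in I_j}\Ui_i^{-1}\big)^{-1}\,\bigl|\,(A^\adjoint w')_j\,\bigr|^2$ after optimally distributing the required coefficient among the false inputs of each false grouped input $j$ (the computation underlying Eq.~\eqnref{e:Ujdef}), which is exactly $\norm{\Uj^{1/2}A^\adjoint w'}^2$; minimizing over admissible $w'$ gives $\wsizex P x$, plus an $O(1)$ from $b_O,b_C$ themselves. Again $\abi_O \ge 1/(\wsizex P x + \text{const})$.

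The main obstacle, and the place I would spend the most care, is the bookkeeping for \textbf{grouped inputs with $\abs{I_j}>1$} in the false case: when grouped input $j$ is false one is free to scale each $\ket{\psi_i}$, $i\in I_j$, independently, and one must check that the optimal choice of scale factors — the one achieving $\big(\sum_{\text{false }i}\Ui_i^{-1}\big)^{-1}$ in Eq.~\eqnref{e:Ujdef} — is simultaneously consistent with the single matching constraint at $a_j=0$ (trivially satisfied since all are zero) and produces exactly the $\Uj$-weighted norm, with no cross terms. A clean way to organize this is to absorb all the per-input scalings into the diagonal matrices $\abj$ and $\Uj$ of \figref{fig:notation} and \defref{def:matrixnotations}, reducing the whole optimization to the geometric minimum-norm problems of \defref{def:spwsize}, whose explicit solutions (and the $O(1)$ norm bounds on the optimal witnesses $\ket{w_S^*}$, $\ket{w_S'^*}$) are supplied by \lemref{t:spcinterpretation}. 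The additive ``constant'' in the theorem is then traced to the fixed contributions of the vertices $a_O$, $b_O$, and $b_C$, which are independent of the $\abi_I$ because their coefficients are determined (up to the overall normalization already fixed by $a_O=-1$ or $b_O=1$) by $P$ alone.
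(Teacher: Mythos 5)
Your proposal is correct and follows essentially the same route as the paper's proof: split by bipartiteness into the true ($a_O=-1$) and false ($b_O=1$) cases, absorb the per-input scalings into the diagonal grouped-input weights $\abj$ and $\Uj$ so that the optimization reduces exactly to the minimization problems of \defref{def:spwsize}, and control the additive constant via the $O(1)$ bounds on $\norm{\ket{w_S^*}}$ and $\norm{\ket{w_S'^*}}$ from \lemref{t:spcinterpretation}. The only blemish is a reversed inequality in the intermediate bookkeeping (the $W$-weighted squared norm downstream of a true $a_j$ is $\abs{a_j}^2\bigl(1+\sum_{i\in I_j}(1/\abi_i-1)\bigr)$, which is \emph{at most} $\abs{a_j}^2\,\Uj_j$ since $1/\abi_i\le\Ui_i$ --- and an upper bound on $\bra\psi W\ket\psi$ is what the argument requires); your final conclusions are stated in the correct direction, so this is a slip of wording rather than a gap.
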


\begin{remark}
Note that \thmref{t:strongzeroenergy} implies the $\lambda = 0$ portion of \thmref{t:unifiedsolution}.  The weights in $\W$ mean that, e.g., setting $b_i = 1$ adds $\bra{b_i} \W \ket{b_i} = 1/\abi_i$ to the squared normalization factor.  
\end{remark}

\begin{proof}[Proof of \thmref{t:strongzeroenergy}]
Recall \figref{f:gadget}.  The vertex $a_j$ is a shared output node of all the inputs $i \in I_j$.  As in the proof of \thmref{t:weakzeroenergy}, Eq.~\eqnref{e:zeroenergyequationsbI} implies that $a_j$ can be nonzero only if grouped input $j$ is true, i.e., if all $i \in I_j$ evaluate to true.  

For $j \in J$, define $\abj_j$ by 
\begin{equation}
\abj_j = \left\{
\begin{array}{cc} 
\left( 1 + \sum_{i \in I_j} \big( \frac1{\abi_i} - 1\big ) \right)^{-1} & \text{if $j$ is true} \\
\sum_{\text{false $i \in I_j$}} \abi_i & \text{if $j$ is false}
\end{array}\right. 
\end{equation}
From \defref{def:subformulacomplexity}, $1/\abj_j \leq \Uj_{j}$.  Roughly speaking, for each $j$, the vertices $b_i$ for $i \in I_j$ can be treated as just a single input vertex with associated weight $\abj_j$ in $\W$.  Precisely, if $j$ is true, then $\bra{a_j} \W \ket{a_j} = 1/ \abj_j$.  And if $j$ is false, then the $b_{I_j}$ coefficients appear in Eq.~\eqnref{e:zeroenergyequationsaJ} only in the quantity $\bra{j} {A_{IJ}(x)^\adjoint} \ket{b_{I_j}} = \sum_{\text{false $i \in I_j$}} b_i$.  In order to minimize the weighted squared norm $\bra{b_{I_j}} \W \ket{b_{I_j}} = \sum_{i \in I_j} \abs{b_i}^2 / \abi_i$ for any fixed value of $\bra{j} {A_{IJ}(x)^\adjoint} \ket{b_{I_j}}$, each $b_i$ for $i$ false should be set proportional to $\abi_i$ (by Cauchy-Schwarz), so
\begin{equation}
\min_{\ket{b_{I_j}} : \bra{j} {A_{IJ}(x)^\adjoint} \ket{b_{I_j}} = 1} \bra{b_{I_j}} \W \ket{b_{I_j}}
 = \frac 1 {\abj_j} \enspace .
\end{equation}

Let $\abj = \sum_j \abj_j \ketbra{j}{j}$.  

\begin{description}
\item[Case $f_P(x) = 1$:]
When $f_P(x) = 1$, set $a_j = 0$ for all false grouped inputs $j$.  Set the other $a_j$ so as to maximize the magnitude of $-a_O =  A_{OJ} \ket{a_J} = \bra{\o} \Pi \ket{a_J}$, such that $\ACJ \ket{a_J} = \ACJ \Pi \ket{a_J} = 0$ and $\bra{a_J} \W \ket{a_J} = \bra{a_J} \Pi \abj^{-1} \ket{a_J} = 1$ (Eqs.~\eqnref{e:zeroenergyequationsbC} and~\eqnref{e:zeroenergyequationsbO} at $\lambda = 0$).  
Now, changing variables to $\ket w = \abj^{-1/2} \ket{a_J}$, 
\begin{equation}
\begin{split}
\abs{a_O}^2 
&= \max_{\substack{\ket w : \bra w \Pi \ket w = 1 \\ \ACJ \Pi \abj^{1/2} \ket w = 0}} \abs{\bra\o \Pi \abj^{1/2} \ket w}^2 \\
&= 1/\wsizexS P x {\abj^{-1/2}} \\
&\geq 1/\wsizex P x
 \enspace ,
\end{split}
\end{equation}
using Eq.~\eqnref{e:spcinterpretationtrue}, $1/\abj_j \leq \Uj_{j}$ and the monotonicity of $\wsizex P x$ (\remref{t:wsizexmonotone}).  
Finally, dividing by $(1+\abs{a_O}^2)$ so that the total norm is one, gives
\begin{equation}
\abi_O = \frac{\abs{a_O}^2}{1+\abs{a_O}^2} \geq \frac1{\wsizex{P}{x} + 1} \enspace .
\end{equation}

\item[Case $f_P(x) = 0$:]
When $f_P(x) = 0$, for each true grouped input $j$ set $b_i = 0$ for $i \in I_j$.  For each false $j$, set $b_i = 0$ for true $i \in I_j$ and set $b_i = f_j \abi_i / \sum_{\text{false $i' \in I_j$}} \abi_{i'}$.  Choose $\ket f$ to maximize $\abs{b_O}^2$ such that $\bra f \Pim \abj^{-1} \ket f = 1$ and, by Eq.~\eqnref{e:zeroenergyequationsaJ} at $\lambda = 0$, $\ket \o b_O + \ACJ^\adjoint \ket{b_C} + \Pim \ket f = 0$.  Equivalently, writing $\ket{w'} = \left(\begin{smallmatrix} b_O \\ b_C \end{smallmatrix}\right)$ so $b_O = \braket{t}{w'}$, we are constrained that $\Pim \ket f = -A^\adjoint \ket{w'}$, i.e., 
\begin{equation}\begin{split}
\abs{b_O}^2 
&= \max_{\substack{ \ket{w'} : \norm{\abj^{-1/2} A^\adjoint \ket{w'}} = 1 \\ \Pi A^\adjoint \ket{w'} = 0}} \abs{ \braket{t}{w'} }^2 \\
&= 1/\wsizexS P x {\abj^{-1/2}} \\
&\geq 1/{\wsizex P x}
\end{split}\end{equation}
by Eq.~\eqnref{e:spcinterpretationfalse} and the monotonicity of $\wsizex P x$.  
The constructed state has weighted squared norm $\bra \psi \W \ket \psi
= 1 + \norm{\ket{w'^*}}^2$, where $\ket{w'^*} = \arg \max_{\substack{\ket{w'}: \norm{\abj^{-1/2} A^\adjoint \ket{w'}} = 1\\ \Pi A^\adjoint \ket{w'} = 0}} \abs{\braket t {w'}}^2$.  Normalizing,  
\begin{equation}\begin{split}
\frac1{\abi_O} 
&= \frac{1 + \norm{\ket{w'^*}}^2}{\abs{b_O}^2} \\
&\leq \wsizex{P}{x} + \frac{\norm{\ket{w'^*}}^2}{\abs{b_O}^2} \enspace .
\end{split}\end{equation}
It remains to show that $\norm{\ket{w'^*} / b_O} = O(1)$.  Indeed, $\frac1{\braket t {w'^*}} \ket{w'^*} = {\arg \min}_{\substack{\ket{w'} : \braket{t}{w'} = 1 \\ \Pi A^\adjoint \ket{w'} = 0}} \norm{\abj^{-1/2} A^\adjoint \ket{w'}}^2 = \ket{w_{\abj^{-1/2}}'^*}$ has $O(1)$ norm by \lemref{t:spcinterpretation}.  
\qedhere
\end{description}
\end{proof}

\subsection{Small-eigenvalue spectral analysis of \texorpdfstring{$A_{G_P}$}{A\_\{G\_P\}}}

\begin{theorem} \label{t:strongsmallenergy}
For a span program $P$ and input $x$, given $\si_{I}$ with $0 < \si_{i} \leq \Ui_{i}$ for all $i \in I$, let $\si = \sum_i \si_{i} \ketbra i i$ and $\ri = \sum_i \ri_{i} \ketbra i i = -\frac1\lambda \Pi \si^{-1} + \lambda \Pim \si$.  Assume that $0 < \lambda \leq \eps / \si_{i}$ for a small enough constant $\eps > 0$ to be determined and for all $i \in I$.  
Then the equations 
\begin{subequations} \label{e:gadget}
\begin{align}
b_I &= \ri^{-1} A_{IJ} a_J \label{e:gadgetbI} \\
\lambda b_O &= a_O + A_{OJ} a_J \label{e:gadgetbO} \\
\lambda b_C &= A_{CJ} a_J \label{e:gadgetbC} \\ 
\lambda a_J &= {A_{OJ}}^\adjoint b_O + {A_{CJ}}^\adjoint b_C + {A_{IJ}}(x)^\adjoint b_I \label{e:gadgetaJ}
\end{align}
\end{subequations}
have a solution with $a_O, b_O \neq 0$.  Moreover, if $\ri_{O} = a_O / b_O$ and $\si_{O}$ is defined as $-1/(\lambda \ri_{O})$ or $\ri_{O} / \lambda$ if $f_P(x)$ is true or false, respectively, then 
\begin{equation}
0 < \si_{O} \lesssim \wsizexS P x {\sqrt{\Uj}}
 \enspace ,
\end{equation}
where the grouped input complexities $\Uj_{J}$ are defined in terms of $\Ui_{I}$ in \defref{def:groupedinputcomplexity}.  
\end{theorem}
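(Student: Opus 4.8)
The plan is to solve the linear system \eqnref{e:gadget} in closed form for the output ratio $\ri_O = a_O/b_O$, expand it as a (Laurent) series in $\lambda$, and match the leading behaviour against the witness-size formulas of \lemref{t:spcinterpretation}.

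\textbf{Reduction to a scalar equation.} First I would use \eqnref{e:gadgetbI} to eliminate $b_I$ from \eqnref{e:gadgetaJ}. Since $A_{IJ}^\adjoint \ri^{-1} A_{IJ}$ is diagonal in $J$ with $j$th entry $\sum_{i \in I_j} \ri_i^{-1}$, this gives $a_J = -\rj\,(\ket{\o}\,b_O + {\ACJ}^\adjoint b_C)$, where $\ket{\o} = {A_{OJ}}^\adjoint$ and $\rj = (A_{IJ}^\adjoint \ri^{-1} A_{IJ} - \lambda)^{-1}$, which I would check is well-defined for $\lambda$ in the stated range. Substituting into \eqnref{e:gadgetbC} turns it into $(\lambda + \ACJ \rj {\ACJ}^\adjoint)\, b_C = -\ACJ \rj \ket{\o}\, b_O$, i.e.\ $-\tfrac1\lambda X\, b_C = -\ACJ \rj \ket{\o}\, b_O$ with $X$ as in \figref{fig:notation}; granting that $X$ is invertible on the constraint space (verified below), $b_C = \lambda\, X^{-1} \ACJ \rj \ket{\o}\, b_O$. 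Feeding this back into \eqnref{e:gadgetbO} and scaling $b_O = 1$ yields
\begin{equation*}
\ri_O = a_O = \lambda + \bra{\o}\rj\ket{\o} + \lambda\,\bra{\o}\rj\,{\ACJ}^\adjoint X^{-1} \ACJ \rj\ket{\o} \enspace ,
\end{equation*}
which already exhibits a solution of \eqnref{e:gadget} with $a_O,b_O\neq 0$ once $\ri_O\neq 0$ is established.

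\textbf{Book-keeping of $\rj$, and inverting $X$.} Solving $\sj$ from $\rj$ in \figref{fig:notation} gives $\rj = -\tfrac1\lambda \sj^{-1}\Pi + \lambda\,\sj\,\Pim$; a direct computation of each diagonal entry shows $\sj_j = 1 + \sum_{i\in I_j}\si_i$ for true $j$ and $\sj_j = \big(\sum_{\text{false }i\in I_j}\si_i^{-1}\big)^{-1}$ up to a $1+O((\lambda\max_i\si_i)^2)$ factor for false $j$, so by $\si_i\le\Ui_i$ and \defref{def:groupedinputcomplexity} we get $0<\sj_j\lesssim\Uj_j$; this is what will later let us swap $\sj$ for $\Uj$ in the weights via \lemref{t:wsizefudgefactors}. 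For $X$, the dominant term is $\cpp = \ACJ\sj^{-1}\Pi{\ACJ}^\adjoint$, which is invertible only on its range $\Delta = \cp\cp^+$. I would therefore block-decompose $X$ with respect to $\Delta$ and $\Deltam = 1-\Delta$: on $\Range(\Delta)$, $\DeltaXDelta = \cpp - \lambda^2\Delta(\cmm+1)\Delta$ is invertible for small $\lambda$ with $\norm{(\DeltaXDelta)^{\resinv}} = O(1)$; its Schur complement $\XDeltaSchur = -\lambda^2\Deltam(\cmm+1)\Deltam + O(\lambda^4)$ is invertible on $\Range(\Deltam)$ because $\cmm+1\ge 1$, with inverse $\lambda^{-2}$ times an $O(1)$ series. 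Recombining the two blocks, with $V - \VV = O(\lambda^2\norm{\Uj}^2)$ as in \figref{fig:notation}, expresses $X^{-1}$, hence $\ri_O$, as a series in $\lambda$, all pseudoinverse norms appearing being controlled by \claimref{t:hyperplanes}.

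\textbf{Matching the leading term.} Plugging the series into the expression for $\ri_O$, the $1/\lambda$ coefficient is $-\tfrac1\lambda\bra\o\sj^{-1}\Pi\ket\o$ plus a contribution from the $X^{-1}$ term; these cancel exactly when $f_P(x) = 0$ (equivalently $\Pi\ket\o\in\Range(\Pi{\ACJ}^\adjoint)$), so $\ri_O = O(\lambda)$ in that case and $\ri_O$ has a genuine $-1/\lambda$ pole otherwise. In either case, after replacing the $\sj$-weights by $\Uj$-weights (\lemref{t:wsizefudgefactors}), the surviving leading coefficient is identified — using the $S$-weighted expressions of \lemref{t:spcinterpretation} — with $\wsizexS P x {\sqrt{\Uj}}$ up to an additive $O(1)$: concretely $\si_O = \wsizexS P x {\sqrt{\sj}} + O(1)$ at leading order in both the true branch ($\si_O = -1/(\lambda\ri_O)$) and the false branch ($\si_O = \ri_O/\lambda$), and the leading value is (a reciprocal of) a squared norm, hence $\si_O > 0$. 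Bounding every higher-order term by $O(\lambda\,\norm{\Uj}\,\wsizexS P x {\sqrt{\Uj}})$ — the slack permitted by $\lesssim$ — and choosing $\eps$ small enough that all the geometric series converge and all inverses exist uniformly for $0<\lambda\le\eps/\si_i$ gives $0 < \si_O\lesssim\wsizexS P x {\sqrt{\Uj}}$.

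The hard part is the Schur-complement bookkeeping of the third paragraph: showing that every matrix we invert stays invertible \emph{uniformly} over $0<\lambda\le\eps/\si_i$ even as the $\si_i$ (hence the $\Uj_j$) grow, tracking the pseudoinverse norms through \claimref{t:hyperplanes}, and checking that the recombined lowest-order term is genuinely the witness-size expression of \lemref{t:spcinterpretation} rather than merely comparable to it.
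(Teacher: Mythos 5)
Your proposal follows essentially the same route as the paper's proof: eliminate $b_I$ and $b_C$ to obtain the closed form $\ri_{O} = \lambda + \bra{\o}\rj\ket{\o} + \lambda\bra{\o}\rj{\ACJ}^\adjoint X^{-1}\ACJ\rj\ket{\o}$, establish $\sj_j \lesssim \Uj_j$, invert $X$ by block-decomposing with respect to $\Delta = \cp\cp^+$ and its Schur complement, match the surviving leading Laurent coefficient against \lemref{t:spcinterpretation} (with the $1/\lambda$ term vanishing exactly when $f_P(x)=0$), and pass from $\sj$- to $\Uj$-weights via \lemref{t:wsizefudgefactors}. One bookkeeping slip worth fixing: $\norm{(\DeltaXDelta)^{\resinv}}$ is $O(\norm{\sj})$, not $O(1)$ --- it equals $\cpp^+\big(1+O(\lambda^2\norm{\sj}^2)\big)$ with $\norm{\cpp^+}=O(\norm{\sj})$ by \claimref{t:hyperplanes} --- which is precisely why the higher-order error terms carry powers of $\norm{\sj}$ and why the hypothesis $\lambda\si_{i}\leq\eps$ is needed to control them.
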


\begin{proof}

Similarly to the argument in \appref{s:lambda0case}, it will be useful to define a ``grouped input ratio" $\rj_{j}$ so that, roughly speaking, the vertices $b_i$ for $i \in I_j$ can be treated as just a single input vertex.

\begin{definition}[Grouped input ratios]For $j \in J$, let $\rj_{j} = (- \lambda + \sum_{i \in I_j} \ri_{i}^{-1} )^{-1}$, and let $\rj = \sum_j \rj_{j} \ketbra j j = ({A_{IJ}}^\adjoint \ri^{-1} A_{IJ} - \lambda)^{-1}$.  Like an input ratio $\ri_{i}$, $\rj_{j}$ is large and negative if $j$ is true, and small and positive if $j$ is false.  Therefore let $\sj_{j} = -1 / (\lambda \rj_{j})$ if $j$ is true, and $\sj_{j} = \rj_{j} / \lambda$ if $j$ is false.  Let $\sj = \sum_j \sj_{j} \ketbra j j = -\frac1{\lambda} \Pi \rj^{-1} + \frac1{\lambda} \Pim \rj$, so $\rj = -\frac1{\lambda} \Pi \sj^{-1} + \lambda \Pim \sj$.  
\end{definition}

Before proceeding, we need to establish that $\rj$ and $\sj$ are well defined.  

\begin{lemma} \label{t:groupedinputs}
Assume that $0 < \lambda \leq \eps / \Ui_{i}$ for a small enough constant $\eps > 0$ and for all $i \in I$.  Then $\rj = ({A_{IJ}}^\adjoint \ri^{-1} A_{IJ} - \lambda)^{-1}$ exists, so $\sj$ exists as well.  Moreover, for each grouped input $j \in J$, $\sj_{j} \lesssim \Uj_{j}$.  
\end{lemma}

\begin{proof}
By definition, 
\[
\rj_{j} 
= ( -\lambda +  \sum_{i \in I_j} \ri_{i}^{-1} )^{-1}
= \Big( -\lambda -  \lambda \sum_{\text{true $i \in I_j$}} \si_{i} + \frac1{\lambda} \sum_{\text{false $i \in I_j$}} \si_{i}^{-1} \Big)^{-1} \enspace .
\]

If all inputs in $I_j$ are true, then 
\[
\sj_{j}
= -1 / (\lambda \rj_{j}) 
= 1 + \sum_{i \in I_j} \si_{i} 
 \enspace ,
\]
so $1 \leq \sj_{j} \leq 1 + \sum_{i \in I_j} \Ui_{i} \leq 1 + \Uj_{j}$.

Now assume at least one input in $I_j$ is false.  The true terms can be upper-bounded by $\lambda \sum_{\text{true $i \in I_j$}} \si_{i} \leq \lambda \sum_{i \in I_j} \Ui_{i} \leq \abs{I_j} \eps$.  On the other hand, if $i$ is false then $(\lambda \si_{i})^{-1} \geq (\lambda \Ui_{i})^{-1} \geq 1 / \eps$.  
Therefore, $\sj_{j} > 0$, and we also get $\sj_{j} \leq \Uj_{j} ( 1 + \eps' \lambda \max_{\text{false $i \in I_j$}} \Ui_{i} )$ for a constant $\eps'$.  
\end{proof}

Now we will solve for the output ratio $\ri_{O}$ using Eqs.~(\ref{e:gadget}b-d).  
Letting $\si_{O} = \ri_{O} / \lambda$ in case $f_P(x) = 0$, or $\si_{O} = - 1 / (\lambda \ri_{O})$ in case $f_P(x) = 1$, we aim to show that $0 < \si_{O} \lesssim \wsizexS P x {\sqrt{\sj}}$.  This will prove \thmref{t:strongsmallenergy} since, by \lemref{t:wsizefudgefactors}, $\wsizexS P x {\sqrt{\sj}} \lesssim \wsizexS P x {\sqrt{\Uj}} = \wsizex P x$.  Our proof will follow the sketch below \thmref{t:unifiedsolution} in \secref{sec:spc}.  We start by deriving an exact expression for $\ri_O$:

\begin{lemma}
The solution to Eq.~\eqnref{e:gadget} has $a_O = 0$ if $b_O = 0$, and otherwise,
\begin{equation} \label{e:gadgetgeneralratio}
\ri_{O} 
= \lambda + \bra\o \left( \rj - \frac1{\lambda} \rj \ACJ^\adjoint (1 + \tfrac1{\lambda} \ACJ \rj \ACJ^\adjoint)^{-1} \ACJ \rj \right) \ket\o
 \enspace ,
\end{equation}
provided that $\rj$ and $(1 + \tfrac1{\lambda} \ACJ \rj \ACJ^\adjoint)^{-1}$ exist.
\end{lemma}

\begin{proof}
Recall from \defref{def:additionalmatrixnotations} that $\ket\o = {A_{OJ}}^\adjoint$ and $A_{CJ}$ is $\ACJ$ with range restricted.  Substituting Eqs.~\eqnref{e:gadgetbI} and~\eqnref{e:gadgetbC} into~\eqnref{e:gadgetaJ}, and rearranging terms gives 
\[
\left( \lambda - {A_{IJ}}^\adjoint \ri^{-1} A_{IJ}  - \frac 1 {\lambda} \ACJ^\adjoint \ACJ \right) \ket{a_J} = \ket\o b_O \enspace .
\]
From Eq.~\eqnref{e:gadgetbO}, if $b_O \neq 0$, then $a_O / b_O 
= \lambda - \braket \o {a_J} / b_O$,
so 
\begin{align}
\ri_{O} 
&=
\lambda + \bra\o ( \rj^{-1} + \frac1 {\lambda} \ACJ^\adjoint \ACJ)^{-1} \ket\o \label{e:egadgetgeneralratioproof} \\
&= \lambda + \bra\o \left( \rj - \frac1{\lambda} \rj \ACJ^\adjoint (1 + \tfrac1{\lambda} \ACJ \rj \ACJ^\adjoint)^{-1} \ACJ \rj \right) \ket\o \nonumber
 \enspace ,
\end{align}
by the Woodbury matrix identity~\cite{GolubVanloan96matrixcomputations}, 
provided that $\rj$ and $(1 + \tfrac1{\lambda} \ACJ \rj \ACJ^\adjoint)^{-1}$ exist.
\end{proof}

\begin{remark}[Form of Eq.~\eqnref{e:gadgetgeneralratio}]
Note from Eq.~\eqnref{e:gadgetgeneralratio} that $\ri_{O}$ is a real number provided that all the input ratios $\ri_{I}$ are themselves reals.  
Also, note that $\ri_{O}$ depends on $\ACJ$ only through $\ACJ^\adjoint \ACJ$ (see too Eq.~\eqnref{e:egadgetgeneralratioproof} in the proof); in particular, left-multiplying $\ACJ$ by $U$ where $U$ is any linear isometry (i.e., satisfying $U^\adjoint U = 1$) has no effect.  Since the grouped input vectors $v_J$ can be arbitrary in \defref{t:spanprogramdef}, $\ACJ^\adjoint \ACJ$ is in general an arbitrary $\abs{J} \times \abs{J}$ positive semidefinite matrix.  
\end{remark}

Now the main step in simplifying Eq.~\eqnref{e:gadgetgeneralratio} is dividing the matrix we want to invert into a $2 \times 2$ block matrix and applying the following well-known claim:

\begin{claim} \label{t:twobytwoblockinverse}
Let $X$ be an operator, and let $\Delta$ and $\Deltam = 1-\Delta$ be a projection and its complement.  Assume that $\DeltaXDelta$ is invertible.  Let the ``Schur complement" of $X$ be $\XDeltaSchur \equiv \DeltamXDeltam - \Deltam X (\DeltaXDelta)^\resinv X \Deltam$.  If the Schur complement of $X$ is invertible on $\Deltam$, then $X$ is invertible, and $X^{-1}$ is given by:
\begin{align}
\Delta X^{-1} \Delta 
&= (\DeltaXDelta)^\resinv + (\DeltaXDelta)^\resinv X \XDeltaSchur^\resinv X (\DeltaXDelta)^\resinv &
\Delta X^{-1} \Deltam 
&= - (\DeltaXDelta)^\resinv X \XDeltaSchur^\resinv \nonumber \\
\Deltam X^{-1} \Delta 
&= - \XDeltaSchur^\resinv X (\DeltaXDelta)^\resinv &
\Deltam X^{-1} \Deltam 
&= \XDeltaSchur^\resinv 
\end{align}
\end{claim}
\begin{proof}
Multiply out the matrices.
\end{proof}

\begin{lemma} \label{t:Rtilde}
The inverse $(1 + \frac1{\lambda} \ACJ \rj \ACJ^\adjoint)^{-1}$ exists, provided $0 < \lambda \leq \eps / \Ui_{i}$ for a small enough positive constant $\eps$ and for all $i \in I$. 
\end{lemma}

\begin{proof}
Let $\cp = \ACJ \sj^{-1/2} \Pi$, $\cm = \ACJ \sj^{1/2} \Pim$, $\cpp = \cp \cp^\adjoint$ and $\cmm = \cm \cm^\adjoint$.  We aim to show that $X^{-1}$ exists, where  
\begin{equation}\begin{split}
X
&= -\lambda^2 \big(1 + \frac1{\lambda} \ACJ \rj \ACJ^\adjoint \big) \\
&= \cpp - \lambda^2 (\cmm + 1)
 \enspace .
\end{split}\end{equation}

Let $\Delta = \cp \cp^+$ be the projection onto the range of $\cp$, and let $\Deltam = 1- \Delta$.  Then
\begin{align}
\DeltaXDelta &= \cpp - \lambda^2 \Delta (\cmm + 1) \Delta 
& \Delta X \Deltam &= -\lambda^2 \Delta \cmm \Deltam \nonumber \\
\Deltam X \Delta &= -\lambda^2 \Deltam \cmm \Delta 
& \DeltamXDeltam &= -\lambda^2 \Deltam (\cmm + 1) \Deltam 
\end{align}
Now since $\cpp$ is invertible on $\Delta$ (i.e., $\Delta = \cpp \cpp^+$), so is $X$.  By the Neumann series, 
\begin{equation}
(\DeltaXDelta)^\resinv = \cpp^+ (1 + O(\lambda^2 \norm{\sj}^2)) \enspace ,
\end{equation}
where we have used that $\norm{\cmm} = O(\norm{\sj})$ and $\norm{\cpp^+} = O(\norm{\sj})$ (\claimref{t:hyperplanes}), and where we write $O(\lambda^2 \norm{\sj}^2)$ to mean some matrix with norm so-bounded.  In particular, $(\DeltaXDelta)^\resinv$ is positive definite on $\Delta$.  

Let $\XDeltaSchur$ be the Schur complement of $\DeltaXDelta$ in $X$, 
\begin{align}
\XDeltaSchur
&\equiv \DeltamXDeltam - \Deltam X (\DeltaXDelta)^\resinv X \Deltam \nonumber \\
&= - \lambda^2 \Deltam (\cmm + 1) \Deltam - \lambda^4 \Deltam \cmm (\DeltaXDelta)^\resinv \cmm \Deltam \enspace .
\end{align}
As $-\XDeltaSchur$ on $\Deltam$ is the sum of the positive definite matrix $\lambda^2 \Deltam$ and positive semidefinite matrices, $\XDeltaSchur$ is negative definite on $\Deltam$ and in particular is invertible on $\Deltam$.  

Since $\DeltaXDelta$ and $\XDeltaSchur$ are each invertible, on $\Delta$ and on $\Deltam$, respectively, $X^{-1}$ exists by \claimref{t:twobytwoblockinverse}, as claimed.  
\end{proof}

The following discussion will use the notation from the proof of \lemref{t:Rtilde}.  It will also be convenient to let $S = \sqrt{\sj}$.  We have from Eq.~\eqnref{e:gadgetgeneralratio} 
\begin{align} \label{e:rOintermsofX}
\ri_{O} &= \lambda + \bra{\o} \rj \ket{\o} + \lambda \bra{\o} \rj \ACJ^\adjoint X^{-1} \ACJ \rj \ket{\o} \\
&=
\lambda - \frac1{\lambda} \bra\o S^{-2} \Pi \ket\o + \lambda \bra\o S^2 \Pim \ket\o + \left(\begin{array}{l l}
\dfrac1{\lambda} \bra\o S^{-1} \cp^\adjoint X^{-1} \cp S^{-1} \ket\o &+ \lambda^3 \bra\o S \cm^\adjoint X^{-1} \cm S \ket\o \\
- \lambda \big( \bra\o S^{-1} \cp^\adjoint X^{-1} \cm S \ket\o &+ \bra\o S \cm^\adjoint X^{-1} \cp S^{-1} \ket\o \big) 
\end{array}\right) \enspace . \nonumber
\end{align}
Our goal now is to expand the above expression as a series in $\lambda$, evaluating the coefficients of $1/\lambda$ and of $\lambda$, and bounding higher-order terms.  In order to expand $X^{-1}$ as a series, we use the block decomposition of $X$ and \claimref{t:twobytwoblockinverse}.  

Let us start by evaluating two expressions, $(\DeltaXDelta)^\resinv$ and $V = -\lambda^2 \cm^\adjoint \XDeltaSchur^\resinv \cm$, that will reappear frequently in the following analysis.

\begin{claim} \label{t:Xpclaim}
$(\DeltaXDelta)^\resinv$ satisfies 
\begin{align}
(\DeltaXDelta)^\resinv
&= \cpp^+ + \lambda^2 \cpp^+ (\cmm + 1) \cpp^+ + \cpp^+ O(\lambda^4 \norm{\sj}^3) \cpp^+ \\
&= \cpp^+ ( 1 + O(\lambda^2 \norm{\sj}^2) ) = O(\norm{\sj}) \enspace . \nonumber
\end{align}
\end{claim}
\begin{proof}
We know that $\norm{(\DeltaXDelta)^\resinv} = O(\norm{\sj})$.  Note that for matrices $A$ and $B$, $(A + B)^{-1} = (1 - (A + B)^{-1} B) A^{-1} = A^{-1} (1 - B (A + B)^{-1}) = A^{-1} - A^{-1} B A^{-1} + A^{-1} B (A + B)^{-1} B A^{-1}$ provided $A$ and $A+B$ are invertible.  Applying this with $A = \cpp$ and $B = -\lambda^2 \Delta ( \cmm + 1) \Delta$ gives 
\begin{align*}
(\DeltaXDelta)^\resinv 
&=
\cpp^+ + \lambda^2 \cpp^+ (\cmm + 1) \cpp^+ + \lambda^4 \cpp^+ (\cmm + 1) (\DeltaXDelta)^\resinv (\cmm + 1) \cpp^+ \nonumber\\
&=
\cpp^+ + \lambda^2 \cpp^+ (\cmm + 1) \cpp^+ + \cpp^+ O(\lambda^4 \norm{\sj}^3) \cpp^+
 \enspace . \qedhere
\end{align*}
\end{proof}

\begin{claim} \label{t:Xmclaim}
Let $V = -\lambda^2 \cm^\adjoint \Deltam X^{-1} \Deltam \cm = -\lambda^2 \cm^\adjoint \XDeltaSchur^\resinv \cm$.  Then 
\begin{equation} \label{e:schurcomplement}
V = \VV + O(\lambda^2 \norm{\sj}^2) \qquad \text{where} \qquad
\VV = \cm^\adjoint \Deltam (\dcmm + 1)^{-1} \Deltam \cm \enspace .
\end{equation}
In particular, $\norm{\VV} < 1$ and $\norm{V} < 1 + O(\lambda^2 \norm{\sj}^2) = O(1)$.  
\end{claim}
\begin{proof}
We compute
\begin{align}
V &= 
-\lambda^2 \cm^\adjoint (\DeltamXDeltam - \Deltam X (\DeltaXDelta)^\resinv X \Deltam)^\resinv \cm
\nonumber\\
&= \cm^\adjoint \left[ \Deltam (\cmm + 1) \Deltam + \lambda^2 \Deltam \cmm (\DeltaXDelta)^\resinv \cmm \Deltam\right]^\resinv \cm \nonumber\\
&= M^\adjoint \left[ M \big( 1 + \lambda^2 \cm^\adjoint (\DeltaXDelta)^\resinv \cm \big) M^\adjoint + 1 \right]^\resinv M \enspace ,
\intertext{%
where $M = \Deltam \cm$.  Now $\lambda^2 \cm^\adjoint (\DeltaXDelta)^\resinv \cm = O(\lambda^2 \norm{\sj}^2)$ since $\cm = O(\norm{\sqrt{\sj}})$.  Therefore, $M \big( 1 + \lambda^2 \cm^\adjoint (\DeltaXDelta)^\resinv \cm \big) M^\adjoint + 1$ is the sum of positive-definite and positive-semidefinite matrices, hence is invertible.  
Again use $(A+B)^{-1} = (1 - (A+B)^{-1} B) A^{-1}$, now with $A = M M^\adjoint + 1 = \dcmm+1$ and $B = \lambda^2 M \cm^\adjoint (\DeltaXDelta)^\resinv \cm M^\adjoint$, to get}
V 
&= \VV - V (\lambda^2 \cm^\adjoint (\DeltaXDelta)^\resinv \cm) \VV \nonumber \\
&= \VV \cdot \left[ 1 + (\lambda^2 \cm^\adjoint (\DeltaXDelta)^\resinv \cm) \VV \right]^{-1} \enspace ,
\end{align}
provided the inverse right-multiplying $\VV$ exists.  Indeed, $M^\adjoint (M M^\adjoint + 1)^{-1} M < 1$ for an arbitrary matrix $M$ and in particular for $M = \Deltam \cm$ (if the singular-value decomposition of $M$ is $M = \sum_i m_i \ketbra{i}{i'}$, then $M^\adjoint (M M^\adjoint + 1)^{-1} M = \sum_i \frac{m_i^2}{m_i^2 + 1} \ketbra{i'}{i'}$).  Therefore the inverse does exist, and we obtain Eq.~\eqnref{e:schurcomplement}.
\end{proof}

Now, using \claimref{t:Xpclaim} and \claimref{t:Xmclaim}, we find 
\begin{align} \label{e:Xinversep}
\Delta X^{-1} \Delta 
&=
(\DeltaXDelta)^\resinv + (\DeltaXDelta)^\resinv X \XDeltaSchur^\resinv X (\DeltaXDelta)^\resinv \nonumber\\
&=
(\DeltaXDelta)^\resinv - \lambda^2 (\DeltaXDelta)^\resinv \cm V \cm^\adjoint (\DeltaXDelta)^\resinv \nonumber\\
&= 
\cpp^+ + \lambda^2 \cpp^+ (\cmm + 1) \cpp^+ + \cpp^+ O(\lambda^4 \norm{\sj}^3) \cpp^+ \nonumber\\
&\quad - \lambda^2 \cpp^+ (1 + O(\lambda^2 \norm{\sj}^2)) \cm \big( \VV + O(\lambda^2 \norm{\sj}^2) \big) \cm^\adjoint (1 + O(\lambda^2 \norm{\sj}^2)) \cpp^+ \nonumber\\
&=
\cpp^+ + \lambda^2 \cpp^+ \left(- \cm \VV \cm^\adjoint + \cmm + 1 + O(\lambda^2 \norm{\sj}^3) \right) \cpp^+
 \enspace ,
\intertext{and}
\label{e:Xinversepm}
\Delta X^{-1} \Deltam \cm
&=
\lambda^2 (\DeltaXDelta)^\resinv \cmm \XDeltaSchur^\resinv \cm \nonumber\\
&= 
- (\DeltaXDelta)^\resinv \cm V \nonumber\\
&= 
-\cpp^+ (1 + O(\lambda^2 \norm{\sj}^2)) \cm (\VV + O(\lambda^2 \norm{\sj}^2)) \nonumber\\
&=
-\cpp^+ \cm \VV + \cpp^+ \cdot O(\lambda^2 \norm{\sj}^{5/2})
 \enspace .
\end{align}
In particular, $\norm{\Delta X^{-1} \Delta} = O(\norm{\sj})$ and $\norm{\Delta X^{-1} \Deltam \cm} = O(\norm{\sj}^{3/2})$.  

Let us now substitute the expressions we have derived into Eq.~\eqnref{e:rOintermsofX} for $\ri_{O}$.  Consider each of the terms involving $X^{-1}$ separately.  First of all, 
\begin{align}
\frac1{\lambda} \bra\o S^{-1} \cp^\adjoint X^{-1} \cp S^{-1} \ket\o
&=
\frac1{\lambda} \bra\o S^{-1} \cp^\adjoint (\Delta X^{-1} \Delta) \cp S^{-1} \ket\o \nonumber \\
&=
\frac1{\lambda} \bra\o \Pi S^{-1} \cp^+ \cp S^{-1} \Pi \ket\o \label{e:rOXterm1}
\\ &\quad + \lambda \bra\o \Pi S^{-1} \cp^+ \big( 1 + \cmm - \cm \VV \cm^\adjoint \big) (\cp^\adjoint)^+ S^{-1} \Pi \ket\o \nonumber \\ &\quad 
+ O(\lambda^3 \norm{\sj}^3) 
\enspace , \nonumber
\intertext{where we have substituted Eq.~\eqnref{e:Xinversep} and applied $\norm{S^{-1} \cp^\adjoint \cpp^+} = \norm{S^{-1} \cp^+} = O(1)$ (\claimref{t:hyperplanes}).
\hfil \break \indent
Also, by \claimref{t:Xmclaim} and Eqs.~\eqnref{e:Xinversep}, \eqnref{e:Xinversepm}, 
}
\lambda^3 \bra\o S \cm^\adjoint X^{-1} \cm S \ket\o
&= 
\lambda^3 \bra\o S \cm^\adjoint (\Delta + \Deltam) X^{-1} (\Delta + \Deltam) \cm S \ket\o \nonumber \\
&= 
-\lambda \bra\o S V S \ket\o + O(\lambda^3 \norm{\sj}^3) \label{e:rOXterm2} \\
&=
-\lambda \bra\o \Pim S \VV S \Pim \ket\o + O(\lambda^3 \norm{\sj}^3) \enspace . \nonumber
\intertext{Lastly, by Eqs.~\eqnref{e:Xinversep}, \eqnref{e:Xinversepm},}
\lambda \bra\o S^{-1} \cp^\adjoint X^{-1} \cm S \ket\o
&=
\lambda \bra\o S^{-1} \cp^\adjoint \Delta X^{-1} (\Delta + \Deltam) \cm S \ket\o \nonumber\\
&=
\lambda \bra\o S^{-1} \cp^\adjoint \cpp^+ (1 + O(\lambda^2 \norm{\sj}^2)) \cm S \ket\o
\nonumber\\&\quad- \lambda \bra\o S^{-1} \cp^\adjoint \cpp^+ \left[ \cm \VV + O(\lambda^2 \norm{\sj}^{5/2}) \right] S \ket\o \nonumber\\
&= 
\lambda \bra\o \Pi S^{-1} \cp^+ ( \cm - \cm \VV ) S \Pim \ket\o + O(\lambda^3 \norm{\sj}^3) \enspace . \label{e:rOXterm3}
\end{align}

Substituting Eqs.~\eqnref{e:rOXterm1}, \eqnref{e:rOXterm2}, \eqnref{e:rOXterm3} into the expression for $\ri_{O}$ gives
\begin{align}
\ri_{O} 
&=
-\frac1{\lambda} \norm{ (1 - \cp^+ \cp)S^{-1} \Pi \ket\o}^2 + \lambda \bra{v} \big( 1 - \VV \big) \ket{v} + O(\lambda + \lambda^3 \norm{\sj}^3)
 \enspace ,
\intertext{where $\ket{v} = (S \Pim - \cm^\adjoint (\cp^\adjoint)^+ S^{-1})\ket\o$.  
From the singular-value decomposition, one infers that $(1 - M^\adjoint (M M^\adjoint + 1)^{-1} M) - (1 - M^+ M) = M^+ (1 - (M M^\adjoint + 1)^{-1}) (M^\adjoint)^+$ for any matrix $M$, and in particular for $M = \Deltam \cm$.  Moreover, $\bra{v} \dcm^+ (1 - (\dcmm + 1)^{-1}) (\dcm^\adjoint)^+ \ket{v} = O(1)$, since $\norm{S \dcm^+} = O(1)$ and $\norm{\bra{v} S^{-1}} = O(1)$.  Therefore the above equation simplifies to 
}
\ri_{O} \label{e:finalrOingeneral}
&=
-\frac1{\lambda} \norm{ (1 - \cp^+ \cp)S^{-1} \Pi \ket\o}^2 + \lambda \big(\norm{(1 - \dcm^+ \dcm) \ket{v}}^2 + O(1) \big) + O(\lambda^3 \norm{\sj}^3)
 \enspace .
\end{align}

This is as far as we can simplify $\ri_{O}$ in general.  When $f_P(x) = 1$, the first term is $-1 / (\lambda \, \wsizexS P x S)$, as desired, using the last expression of Eq.~\eqnref{e:spcinterpretationtrue} for $\wsizexS P x S$. 
Assume then that $f_P(x) = 0$, i.e., $\Pi \ket\o \in \Range(\Pi \ACJ^\adjoint)$.  In this case, the first term in Eq.~\eqnref{e:finalrOingeneral} is zero, and the second term can be simplified slightly further.  Using $(\cp^\adjoint)^+ S^{-1} \ket\o = (\Pi S^{-1} \ACJ^\adjoint)^+ S^{-1} \Pi \ket\o = (\Pi \ACJ^\adjoint)^+ \ket\o$ and $\Pi \ACJ^\adjoint (\Pi \ACJ^\adjoint)^+ \ket\o = \Pi \ket\o$,
\begin{equation}\begin{split}
\ket{v} 
&= (S \Pim - \cm^\adjoint (\Pi \ACJ^\adjoint)^+) \ket\o \\
&= (S \Pim - \cm^\adjoint (\Pi \ACJ^\adjoint)^+) \ket\o + S \Pi (1 - \ACJ^\adjoint (\Pi \ACJ^\adjoint)^+) \ket\o \\
&= S (1 - \ACJ^\adjoint (\Pi \ACJ^\adjoint)^+) \ket\o
 \enspace .
\end{split}\end{equation}
Moreover, since $\Deltam \ACJ S \Pi = 0$, $\Deltam \cm = \Deltam \ACJ S \Pim = \Deltam \ACJ S$.  Therefore, $\norm{(1 - \dcm^+ \dcm) \ket{v}}^2 = \wsizexS P x S$, as desired, using the last expression of Eq.~\eqnref{e:spcinterpretationfalse} for $\wsizexS P x S$.  
This concludes the proof of \thmref{t:strongsmallenergy}.
\end{proof}

\thmref{t:strongsmallenergy} completes the $\lambda \neq 0$ portion of \thmref{t:unifiedsolution}, finishing its proof.
\hspace{\stretch{1}}  \qed

\section{Quantum algorithm} \label{s:algorithm}

The approach outlined in \secref{s:qalg} is slightly indirect.  To motivate it, we begin by briefly considering in \secref{s:continuoustime} a more direct algorithm $\algorithm'$, that runs phase estimation directly on $\exp(i A_{\tilde G_P(x)})$.  $\algorithm'$ is analogous to the algorithm described by Cleve et al.~\cite{ccjy:and-or} soon after the original NAND formula evaluation paper~\cite{fgg:and-or}.
Algorithm $\algorithm'$ is nearly optimal, but not quite.  
The operator $\exp(i A_{\tilde G_P(x)})$ is a continuous-time quantum walk, and 
the overhead can be thought of as coming from simulating continuous-time quantum dynamics with a discrete computational model, in particular with discrete oracle queries.
To avoid this overhead, the proof of \thmref{t:result} in \secref{s:discretetime} works with a discrete-time quantum walk.

The approach in \secref{s:continuoustime} is optional motivation, and the reader may choose to skip directly to \secref{s:discretetime}.

\subsection{Intuition: Continuous-time quantum walk algorithm} \label{s:continuoustime}

\thmref{t:formulaspectrum} immediately suggests the basic form of a quantum algorithm for evaluating $\varphi(x)$:
\begin{center}
\fbox{
\begin{minipage}[l]{5.5in}
\noindent {\bf Algorithm $\algorithm'$:} 
Input $x \in \{0,1\}^N$, Output true/false.
\begin{enumerate}
\item Prepare an initial state on the output node, $\ket{a_O}$.  
\item
Run phase estimation, with precision $\delta_p \leq \frac1{\norm{A_{\tilde G_P(0^N)}}} \frac\eps {\ADV(\varphi)}$ and small enough constant error rate $\delta_e$, on the unitary $V = \exp ( i A_{\tilde G_P(x)} / \norm{A_{\tilde G_P(0^N)}} )$.  
\item Output true if and only if the phase estimation output is $\lambda = 0$.
\end{enumerate} 
\end{minipage} }
\end{center}
The idea of the second step is to ``measure the Hamiltonian $A_{\tilde G_P(x)}$."
In this step, we have normalized $A_{\tilde G_P(x)}$ by $\norm{A_{\tilde G_P(0^N)}} \geq \norm{A_{\tilde G_P(x)}}$ instead of by $\norm{A_{\tilde G_P(x)}}$, in order to minimize dependence on the input $x$.  
This norm is $O(1)$ since the graph $\tilde G_P(0^N)$ has vertex degrees and edge weights all $O(1)$.

Algorithm $\algorithm'$ evaluates $\varphi(x)$ correctly, with a constant gap between its completeness and soundness:
\begin{itemize}
\item
\thmref{t:formulaspectrum} implies that if the formula evaluates to true, then $A_{\tilde G_P(x)}$ has an eigenvalue-zero eigenstate with squared support $\abs{a_O}^2 = \Omega(1)$ on $a_O$.  Therefore, the phase estimation outcome is $\lambda = 0$ with probability at least $\abs{a_O}^2 - \delta_e = \Omega(1)$ (the completeness parameter).  
\item
On the other hand, if the formula evaluates to false, then \thmref{t:formulaspectrum} implies that $A_{\tilde G_P(x)} / \norm{A_{\tilde G_P(0^N)}}$ has no eigenvalue-$\lambda$ eigenstates supported on $a_O$ with $\abs{\lambda} \leq \delta_p$.  Therefore, the measured outcome will be $\lambda = 0$ only if there is an error in the phase estimation.  By choosing $\delta_e$ a small enough constant, the soundness error $\delta_e$ will be bounded away from the completeness parameter.
\end{itemize}

The efficiency of $\algorithm'$ also seems promising.  Phase estimation of $V$ with precision $\delta_p$ and error rate $\delta_e$ requires $O(1/(\delta_p \delta_e))$ calls to $V$~\cite{cemm:qalg}.  Therefore, the second step requires only $O(\ADV(\varphi))$ calls to $V$.  
However, we still need to explain how to implement $V$.  This is important because $A_{\tilde G_P(x)}$ depends on the input $x$.  Therefore, implementing $V$ requires querying the $x$.  If each call to $V$ requires many queries to the input oracle $O_x$ of Eq.~\eqnref{e:ox}, then the overall query efficiency of $\algorithm'$ will be poor.  

Note now that only the input edges of $G_P(x)$ depend on the input $x$.  Therefore, $A_{\tilde G_P(x)}$ can be split up into two terms: $(\text{input edges}) + (\text{all other edges})$.  The first term can be exponentiated with only two queries to the input oracle $O_x$, while exponentiating the second term requires no input queries.  The two terms do not commute, but the exponential of their sum can still be computed to sufficient precision by using a Lie product decomposition.  These are more quantitative versions of identities like $e^{A+B} = \lim_{n \rightarrow \infty} (e^{A/n} e^{B/n})^n$.  For more details, see~\cite{ccjy:and-or}.  

Unfortunately, implementing the exponential of $A_{\tilde G_P(x)}$ will require $\omega(1)$ input queries.  By using higher-order Lie product formulas, the overhead can be reduced to $\exp(O(\sqrt{\log \abs{x}}))$, which is $N^{o(1)}$.  However, this is still a super-constant overhead, so it appears that this approach cannot yield an optimal formula evaluation algorithm---the best we can hope for is $O(\ADV(\varphi)) \cdot N^{o(1)}$ queries.

\subsection{Proof of \texorpdfstring{\thmref{t:result}}{Theorem \ref*{t:result}}: Discrete-time quantum walk algorithm} \label{s:discretetime}

Therefore, we turn to the approach used in the NAND formula evaluation algorithm of~\cite{ChildsReichardtSpalekZhang07andor}.
Instead of running phase estimation on the exponential of $A_{\tilde G_P(x)}$, we construct a discrete-time, or ``coined," quantum walk $U_x = \tilde{O}_x U_{0^N}$, where $\tilde{O}_x$ is the adjusted oracle of Eq. \eqnref{e:phasefliporacledef}, that has spectrum and eigenvectors corresponding in a precise way to those of $A_{\tilde G_P(x)}$.  Then we run phase estimation on $U_x$.  Each call to $U_x$ requires exactly one oracle query, so there is no query overhead.

\subsubsection{Construction of the coined quantum walk \texorpdfstring{$U_x$}{U\_x}}

The first step in constructing $U_x$ is to decompose $\tilde A_{G_P(0^N)}$ into $(\mathrm{constant}) \cdot \Delta^\adjoint \circ \Delta$, where $\Delta$ is a square matrix with row norms one, and $\circ$ denotes the entrywise matrix product.
We follow~\cite{ChildsReichardtSpalekZhang07andor}.  
One minor technical difference, though, is that for us, $A_{\tilde G_P(x)}$ is a Hermitian matrix with possibly complex entries.  In~\cite{ChildsReichardtSpalekZhang07andor}, the analogous weighted adjacency matrix, for the NAND formula $\varphi$, is a real symmetric matrix.  Therefore, we need to slightly modify the construction of $\Delta$ to obtain the correct phases for the entries of $\tilde A_{G_P(0^N)}$.

\begin{definition}
For notational convenience, let $\H = A_{\tilde G_P(0^N)}$ be the weighted adjacency matrix for $\tilde G_P(0^N)$.  (Recall from \thmref{t:formulaspectrum} that $\tilde G_P(0^N)$ is the same as $G_P$ except with the edge weight on the output edge $(a_O, b_O)$ reduced.)  $\H = \sum_{v,w} \H_{v,w} \ketbra{v}{w}$ is a Hermitian matrix.  

$\tilde G_P(0^N)$ is a bipartite graph, so we may color each vertex red or black, such that every edge is between one red vertex and one black vertex.
\end{definition}

\begin{claim} \label{t:Hamiltoniancoindecomposition}
Let $\H' = \sum_{v,w} \abs{\H_{v,w}} \ketbra{v}{w}$ be the entrywise absolute value of $\H$.  $\H'$ is a real symmetric matrix.  Let $\norm{\H'}$ be the largest-magnitude eigenvalue of $\H'$.  
Let $\ket{\delta}$ be the principle eigenvector of $\H'$, with $\braket{v}{\delta} = \delta_v > 0$ for every $v$, and let 
\begin{equation} \label{e:Hamiltoniancoindecomposition}
\Delta = \frac{1}{\sqrt{\norm{\H'}}} \sum_{\substack{\text{black $v$}\\\text{red $w$}}} \left( \big(\sqrt{\H_{v,w}}\big)^* \sqrt{\frac {\delta_w}{\delta_v}} \ketbra{v}{w} + \sqrt{\H_{v,w}} \sqrt{\frac {\delta_v}{\delta_w}} \ketbra{w}{v} \right)
 \enspace .
\end{equation}
Then $\Delta$ has all row norms one, and $\H = \norm{\H'} \cdot \Delta^\adjoint \circ \Delta$. 
\end{claim}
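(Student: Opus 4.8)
The plan is to verify the two claimed properties directly: that $\Delta$ has unit row norms, and that $\H = \norm{\H'} \cdot \Delta^\adjoint \circ \Delta$. Both are elementary once one is careful about which entries of $\Delta$ contribute to a given entry of $\Delta^\adjoint \circ \Delta$ and about the square-root branch conventions.

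First I would compute the row norms. A row of $\Delta$ indexed by a black vertex $v$ has entries $\frac{1}{\sqrt{\norm{\H'}}} (\sqrt{\H_{v,w}})^* \sqrt{\delta_w/\delta_v}$ over red neighbors $w$; a row indexed by a red vertex $w$ has entries $\frac{1}{\sqrt{\norm{\H'}}} \sqrt{\H_{v,w}} \sqrt{\delta_v/\delta_w}$ over black neighbors $v$. In either case the squared row norm is $\frac{1}{\norm{\H'}} \cdot \frac{1}{\delta_v}\sum_w \abs{\H_{v,w}} \delta_w$ (or the analogous expression with the roles of red/black swapped). Since $\abs{\H_{v,w}} = \H'_{v,w}$ and $\ket{\delta}$ is the principal eigenvector of $\H'$ with eigenvalue $\norm{\H'}$, we have $\sum_w \H'_{v,w}\delta_w = \norm{\H'}\,\delta_v$, so the squared row norm is $\frac{1}{\norm{\H'}}\cdot\frac{\norm{\H'}\delta_v}{\delta_v} = 1$. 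Here I use that $\ket\delta$ has strictly positive entries (Perron--Frobenius, using connectedness of $\tilde G_P(0^N)$, or at worst one works component-by-component), which makes all the ratios $\delta_w/\delta_v$ well-defined and positive.

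Next I would check $\H = \norm{\H'}\cdot \Delta^\adjoint \circ \Delta$. Fix an edge between black $v$ and red $w$. The $(v,w)$ entry of $\Delta^\adjoint$ is the conjugate of the $(w,v)$ entry of $\Delta$, namely $\frac{1}{\sqrt{\norm{\H'}}}(\sqrt{\H_{v,w}})^*\sqrt{\delta_v/\delta_w}$; the $(v,w)$ entry of $\Delta$ is $\frac{1}{\sqrt{\norm{\H'}}}(\sqrt{\H_{v,w}})^*\sqrt{\delta_w/\delta_v}$. Their entrywise product is $\frac{1}{\norm{\H'}}\big((\sqrt{\H_{v,w}})^*\big)^2 = \frac{1}{\norm{\H'}}(\H_{v,w})^*$. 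Wait — this gives $(\H_{v,w})^*$, not $\H_{v,w}$, so I need to double check the convention in \eqnref{e:Hamiltoniancoindecomposition}: the entrywise product of $\Delta^\adjoint$ and $\Delta$ at position $(v,w)$ pairs $(\Delta^\adjoint)_{v,w} = \overline{\Delta_{w,v}}$ with $\Delta_{v,w}$, and with the assignments in the displayed formula, $\Delta_{v,w}$ (black row $v$, red column $w$) $= \frac{1}{\sqrt{\norm{\H'}}}(\sqrt{\H_{v,w}})^*\sqrt{\delta_w/\delta_v}$ while $\Delta_{w,v}$ (red row $w$, black column $v$) $= \frac{1}{\sqrt{\norm{\H'}}}\sqrt{\H_{v,w}}\sqrt{\delta_v/\delta_w}$, so $\overline{\Delta_{w,v}} = \frac{1}{\sqrt{\norm{\H'}}}(\sqrt{\H_{v,w}})^*\sqrt{\delta_v/\delta_w}$, and the product is $\frac{1}{\norm{\H'}}\big((\sqrt{\H_{v,w}})^*\big)^2\cdot\sqrt{\tfrac{\delta_w}{\delta_v}}\sqrt{\tfrac{\delta_v}{\delta_w}} = \frac{1}{\norm{\H'}}(\H_{v,w})^*$. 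Multiplying by $\norm{\H'}$ gives $(\H_{v,w})^*$; and by symmetry the $(w,v)$ entry comes out to $\H_{v,w}$. Since $\H$ is Hermitian, $\H_{w,v} = (\H_{v,w})^*$, so the two recovered entries $(\H_{v,w})^*$ at $(v,w)$ and $\H_{v,w}$ at $(w,v)$ are precisely $\H_{w,v}$ and $\H_{v,w}$ — i.e. we do recover $\H$, once one matches up the row/column labels correctly. (If the ordering of factors in the entrywise product were the other way, one recovers $\H$ at $(v,w)$ directly; either way the identity holds because $\H$ is Hermitian.) All off-diagonal non-edge entries and all diagonal entries of both sides are zero, since $\tilde G_P(0^N)$ has no self-loops and $\Delta$ is supported only on red-black pairs.

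The only genuine subtlety — and the step I expect to require the most care — is the bookkeeping of complex phases: each $\sqrt{\H_{v,w}}$ denotes a fixed but arbitrary choice of square-root branch, and one must confirm that $\big(\sqrt{\H_{v,w}}\big)^2 = \H_{v,w}$ and $\big((\sqrt{\H_{v,w}})^*\big)^2 = (\H_{v,w})^*$ hold regardless of the branch chosen (they do, since squaring kills the branch ambiguity), and that the Hermiticity of $\H$ exactly compensates for the conjugate that appears on the black-to-red versus red-to-black entries. Everything else is the Perron--Frobenius eigenvector computation for the row norms, which is routine given that $\H'$ is a nonnegative symmetric matrix with the connected bipartite graph $\tilde G_P(0^N)$ as its support.
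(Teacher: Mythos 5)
Your proposal is correct and follows essentially the same route as the paper: the unit row norms come from the eigenvector equation $\H'\ket\delta = \norm{\H'}\ket\delta$ together with Perron--Frobenius positivity of $\ket\delta$ on the connected graph, and the decomposition is verified by the entrywise computation $\Delta_{v,w}^{*}\Delta_{w,v} = \H_{v,w}/\norm{\H'}$. The conjugation worry you flag is only an ambiguity in how ``$\Delta^\adjoint\circ\Delta$'' pairs its factors; the quantity that actually matters is $M_{v,w}=\delta_{vw}^{*}\delta_{wv}$ from \thmref{t:szegedization}, which is exactly $\Delta_{v,w}^{*}\Delta_{w,v}=\H_{v,w}/\norm{\H'}$ as you (and the paper) compute, so no appeal to $\H=\H^{T}$ is needed.
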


\begin{proof}
Since $\H'$ has nonnegative entries, the principal eigenvector $\ket \delta$ is also nonnegative.  Since $\tilde G_P(0^N)$ is a connected graph, $\delta_v > 0$ for every $v$.  Hence $\Delta$ is well defined up to choice of sign of the square root, which doesn't matter.  

By construction, for all $v$ and $w$, $\Delta_{v,w}^* \Delta_{w,v} = \H_{v,w} / \norm{\H'}$, i.e., $\norm{\H'} \cdot \Delta^\adjoint \circ \Delta = \H$.  Furthermore, the squared norm of the $v$-th row of $\Delta$ is $\sum_w \abs{\Delta_{v,w}}^2 = \frac1{\norm{\H'}} \frac 1 {\delta_v} \sum_w \H'_{v,w} \delta_w = \frac1{\norm{\H'}} \frac {(\H' \delta)_v} {\delta_v} = 1$.  
\end{proof}

\begin{remark}
In defining $\Delta$, we have evenly divided the complex phases of entries of $\H$ between red-black and black-red terms.  However, any division of the phases would have worked.  For example, \claimref{t:Hamiltoniancoindecomposition} would also hold with Eq.~\eqnref{e:Hamiltoniancoindecomposition} replaced by 
$\Delta = \frac{1}{\sqrt{\norm{\H'}}} \sum_{\substack{\text{black $v$}\\\text{red $w$}}} \Big( \frac{\H_{v,w}^*}{\sqrt{\abs{\H_{v,w}}}} \sqrt{\frac {\delta_w}{\delta_v}} \ketbra{v}{w} + \sqrt{\abs{\H_{v,w}} \frac {\delta_v}{\delta_w}} \ketbra{w}{v} \Big)$.  
\end{remark}

We can now apply Szegedy's correspondence theorem~\cite{Szegedy04walkfocs} to relate the spectrum of $\tilde A_{G_P}$ to that of a discrete-time coined quantum walk unitary.

\def \tensor {\otimes}
\def \cH {{\mathcal{H}}}

\begin{theorem}[\cite{Szegedy04walkfocs}]
	\label{t:szegedization}
	Let $\{ \ket v : v \in V \}$ be an orthonormal basis for $\cH_V$.  For each $v \in V$, let $\ket{\tilde{v}} = \ket{v} \tensor \sum_{w \in V} \delta_{vw} \ket{w} \in \cH_V \tensor \cH_V$, where $\braket{\tilde v}{\tilde v} = \sum_w \abs{\delta_{vw}}^2 = 1$.  Let $T = \sum_v \ketbra{\tilde v}{v}$ and $\Pi = T T^\adjoint = \sum_v \ketbra{\tilde v}{\tilde v}$ be the projection onto the span of the $\ket{\tilde v}$s.  Let $S = \sum_{v,w} \ketbra{v,w}{w,v}$, a swap. 
	Let $U = (2\Pi -1) S$, a swap followed by reflection about the span of the $\ket{\tilde v}s$.
	Let $M = T^\adjoint S T  
= \sum_{v,w} \delta_{vw}^* \delta_{wv} \ketbra{v}{w}$.   
	
	Then the spectral decomposition of $U$ corresponds to that of $M$ as follows: 
	Take $\{\ket{\lambda_\alpha}\}$ a complete set of orthonormal eigenvectors of the Hermitian matrix $M$ with respective eigenvalues $\lambda_\alpha$.  
	Let $R_\alpha = \Span\{T \ket{\lambda_\alpha}, ST\ket{\lambda_\alpha}\}$.  Then $R_a \perp R_{\alpha'}$ for $\alpha \neq \alpha'$; let $R = \oplus_\alpha R_\alpha$.  
	$U$ fixes the spaces $R_a$ and is $-S$ on $R^\perp$.  The eigenvalues and eigenvectors of $U$ within $R_a$ are given by $\beta_{\alpha,\pm} = \lambda_\alpha \pm i \sqrt{1 - \lambda_\alpha^2}$ and $(1 + \beta_{\alpha,\pm}S)T \ket{\lambda_\alpha}$, respectively.
\end{theorem}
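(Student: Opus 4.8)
The plan is to base everything on the two structural identities $T^\adjoint T = 1$ and $T^\adjoint S T = M$. The first says $T$ is an isometry, so $\Pi = T T^\adjoint$ is the orthogonal projection onto $\Range(T) = \Span\{\ket{\tilde v} : v \in V\}$; the second lets one commute $S$ through $T$ at the cost of replacing $S$ by $M$. Both are routine from the definitions: the $\ket{\tilde v}$ are orthonormal because the first tensor register of $\ket{\tilde v}$ already distinguishes different $v$'s (and $\braket{\tilde v}{\tilde v} = 1$ is assumed), and $\bra{\tilde v} S \ket{\tilde v'} = \overline{\delta_{v v'}}\, \delta_{v' v} = \bra{v} M \ket{v'}$ by a one-line expansion.

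First I would fix an eigenvector $\ket{\lambda_\alpha}$ of the Hermitian matrix $M$ and compute the action of $U = (2\Pi - 1)S$ on $T\ket{\lambda_\alpha}$ and $S T\ket{\lambda_\alpha}$. Using $S^2 = 1$, $\Pi T = T T^\adjoint T = T$, and $\Pi S T\ket{\lambda_\alpha} = T (T^\adjoint S T)\ket{\lambda_\alpha} = \lambda_\alpha T\ket{\lambda_\alpha}$, one gets
\[
U \, T\ket{\lambda_\alpha} = 2 \lambda_\alpha \, T\ket{\lambda_\alpha} - S T\ket{\lambda_\alpha} , \qquad U \, S T\ket{\lambda_\alpha} = T\ket{\lambda_\alpha} .
\]
So $R_\alpha = \Span\{T\ket{\lambda_\alpha}, S T\ket{\lambda_\alpha}\}$ is invariant under $U$ (in fact under $S$ and under $\Pi$ individually). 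Distinct $R_\alpha$'s are mutually orthogonal: each of the four inner products between a generator of $R_\alpha$ and one of $R_{\alpha'}$ reduces, using $T^\adjoint T = 1$ and $T^\adjoint S T = M$, to a scalar multiple of $\braket{\lambda_\alpha}{\lambda_{\alpha'}} = 0$.

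Next I would diagonalize $U$ on each $R_\alpha$ and identify its action on $R^\perp$, where $R = \bigoplus_\alpha R_\alpha = \Range(T) + S\,\Range(T)$. When $\abs{\lambda_\alpha} < 1$ the generators are linearly independent, and the displayed formulas say that in this basis $U|_{R_\alpha}$ is the matrix $\left(\begin{smallmatrix} 2\lambda_\alpha & 1 \\ -1 & 0 \end{smallmatrix}\right)$, with characteristic polynomial $\beta^2 - 2\lambda_\alpha\beta + 1$; its roots are $\beta_{\alpha,\pm} = \lambda_\alpha \pm i\sqrt{1 - \lambda_\alpha^2}$, and the eigenvectors of the form displayed in the statement follow by solving the $2\times 2$ eigenvector equation. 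When $\lambda_\alpha = \pm 1$, the relation $\Pi S T\ket{\lambda_\alpha} = \lambda_\alpha T\ket{\lambda_\alpha}$ together with $\norm{S T\ket{\lambda_\alpha}} = \norm{T\ket{\lambda_\alpha}} = 1$ forces $S T\ket{\lambda_\alpha} = \lambda_\alpha T\ket{\lambda_\alpha}$, so $R_\alpha$ is one-dimensional, the two branches $\beta_{\alpha,\pm}$ coincide, and $U$ acts on $R_\alpha$ by the scalar $\lambda_\alpha$. Finally, if $v \in R^\perp$ then $v \perp \Range(T)$ gives $\Pi v = 0$, and $v \perp S\,\Range(T)$ together with $S$ being a self-adjoint unitary gives $\Pi S v = 0$; hence $U v = (2\Pi - 1) S v = -S v$.

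There is no deep difficulty here; the points requiring care are (i) the degenerate eigenvalues $\lambda_\alpha = \pm 1$, where $R_\alpha$ collapses to one dimension and the $\pm$ branches merge, and (ii) verifying that $\bigoplus_\alpha R_\alpha$ and $R^\perp$ together exhaust the ambient space and that $U$ is block-diagonal with respect to this splitting --- which follows from the mutual orthogonality of the $R_\alpha$ and the fact that $U$, being unitary and preserving $R$, also preserves $R^\perp$. The conceptual backdrop is Jordan's lemma applied to the product of the two reflections $2\Pi - 1$ and $S$, with $M = T^\adjoint S T$ playing the role of the overlap operator between the two reflection subspaces; the computation above simply makes that general decomposition explicit for this pair.
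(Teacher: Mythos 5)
Your argument is the standard direct verification, and it is worth noting that the paper itself gives no proof of this theorem --- it defers entirely to the cited reference --- so there is no internal argument to compare against. Almost everything you do checks out: $T^\adjoint T = 1$ and $\bra{\tilde v}S\ket{\tilde w} = \delta_{vw}^*\delta_{wv}$ are the right two identities to build on; the formulas $U\,T\ket{\lambda_\alpha} = 2\lambda_\alpha T\ket{\lambda_\alpha} - ST\ket{\lambda_\alpha}$ and $U\,ST\ket{\lambda_\alpha} = T\ket{\lambda_\alpha}$ are correct; the four cross inner products between generators of $R_\alpha$ and $R_{\alpha'}$ do all reduce to multiples of $\braket{\lambda_\alpha}{\lambda_{\alpha'}}$; your handling of the degenerate case $\lambda_\alpha = \pm 1$ (where $\norm{\Pi ST\ket{\lambda_\alpha}} = \norm{ST\ket{\lambda_\alpha}}$ forces $ST\ket{\lambda_\alpha} = \lambda_\alpha T\ket{\lambda_\alpha}$) is exactly the care that case needs; and the identification of $U$ with $-S$ on $R^\perp$ is correct.

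The one step that does not go through as you state it is the last one: the eigenvectors ``of the form displayed in the statement'' do \emph{not} follow from solving your $2\times 2$ system. With $U$ represented by $\left(\begin{smallmatrix}2\lambda_\alpha & 1\\ -1 & 0\end{smallmatrix}\right)$ in the ordered basis $(T\ket{\lambda_\alpha}, ST\ket{\lambda_\alpha})$, the eigenvector equation for $\beta_{\alpha,\pm}$ reads $a = -\beta_{\alpha,\pm}\, b$, so the eigenvector is proportional to $(S - \beta_{\alpha,\pm})T\ket{\lambda_\alpha}$, equivalently to $(1 - \beta_{\alpha,\mp}S)T\ket{\lambda_\alpha}$ since $\beta_{\alpha,+}\beta_{\alpha,-} = 1$. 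This coincides with the stated $(1+\beta_{\alpha,\pm}S)T\ket{\lambda_\alpha}$ only when $\lambda_\alpha = 0$, where $-\beta_{\alpha,\mp} = \beta_{\alpha,\pm} = \pm i$; for $\lambda_\alpha \neq 0$ one checks directly that $(1+\beta_{\alpha,\pm}S)T\ket{\lambda_\alpha}$ is not an eigenvector of $(2\Pi-1)S$. So the displayed eigenvector formula carries a sign error in its $\lambda_\alpha$-dependence, and by asserting rather than performing the final solve you inherited it. The slip is harmless for the paper, which invokes the eigenvector formula only at $\lambda_\alpha = 0$ in \lemref{t:szegedizationapplied} (the vectors $(1\pm iS)T\ket{\lambda_\alpha = 0}$) and otherwise uses only that every eigenvector supported on $T\ket{a_O}$ lies in some $R_\alpha$, which your argument does establish. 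A complete proof should state the corrected eigenvector $(1-\beta_{\alpha,\mp}S)T\ket{\lambda_\alpha}$ (or note explicitly that the stated form is used only at $\lambda_\alpha=0$).
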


\noindent A proof of \thmref{t:szegedization} in the above form is given in~\cite{ChildsReichardtSpalekZhang07andor}, and see~\cite{MagniezNayakRolandSantha07search}.

\begin{remark}[Coined quantum walks]
The operator $U = (2 \Pi-1)S$ in \thmref{t:szegedization} is known as a ``coined quantum walk."  $S$ is known as the ``step operator," and the reflection $(2\Pi-1)$ is the ``coin-flip operator."  On the space $R$, $(2\Pi-1)$ decomposes as $\sum_{v} \ketbra{v}{v} \tensor (\text{reflection about $\sum_{w} \delta_{vw} \ket{w}$})$.

In a classical random walk on a graph, a coin is flipped between each step to decide which adjacent vertex to step to next.  In a coined quantum walk, on the other hand, the coin is maintained as part of the coherent quantum state, and is reflected between steps (also known as ``diffusion").  
\end{remark}

\begin{remark}
\thmref{t:szegedization} can be viewed as giving a correspondence between coined quantum walks and classical random walks; in the special case that each $\delta_{vw} = \delta_{wv} \geq 0$, $M$ is a classical random walk transition matrix.  For general $\delta_{vw}$, \thmref{t:szegedization} can be viewed a correspondence between coined quantum walks and continuous-time quantum walks.  We use the theorem in the latter sense.
\end{remark}

\begin{lemma} \label{t:szegedizationapplied}
For $\Delta$ defined by Eq.~\eqnref{e:Hamiltoniancoindecomposition} and with $\delta_{vw} = \Delta_{v,w}$, let $U_{0^N}$ be the coined quantum walk operator $U_{0^N} = i U = i (2\Pi - 1)S$ in the notation of \thmref{t:szegedization}.  $U_{0^N}$ acts on $\cH_V \otimes \cH_V$, where $V$ is the vertex set of $G_P$.  
For $x \in \{0,1\}^N$, let $U_x = \tilde{O}_x U_{0^N}$, where $\tilde{O}_x$ applies a phase $(-1)^{x_i}$ to input vertex $b_i$ and otherwise does nothing (see Eq. \eqnref{e:phasefliporacledef}).  
Then,
\begin{itemize}
\item
If $\varphi(x) = 1$, there exist eigenvalue 1 and eigenvalue $-1$ normalized eigenstates of $U_x$ each with $\Omega(1)$ support on $\ket{a_O, b_O}$.  
\item
If $\varphi(x) = 0$, then $U_x$ does not have any eigenstates supported on $\ket{a_O, b_O}$ with eigenvalues $\pm e^{i \lambda}$ for $\abs{\lambda} \leq \arcsin\!\big(\frac1{\norm{\H'} } \, \frac \epsilon {\ADV(\varphi)} \big) = \Omega(1/\ADV(\varphi))$, where $\epsilon > 0$ is the constant of \thmref{t:formulaspectrum}.
\end{itemize}
\end{lemma}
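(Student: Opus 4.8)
The plan is to read off \lemref{t:szegedizationapplied} from the continuous-time spectral statement \thmref{t:formulaspectrum} by passing through Szegedy's correspondence \thmref{t:szegedization}; the whole content is the dictionary between the weighted adjacency matrix $A_{\tilde G_P(x)}$ and the coined walk $U_x$. \textbf{Identifying the walk matrix.} First I would compute the Hermitian matrix $M = T^\adjoint S T$ that \thmref{t:szegedization} attaches to the coin $\Delta$ of \claimref{t:Hamiltoniancoindecomposition}. By the defining property $\Delta_{v,w}^* \Delta_{w,v} = \H_{v,w}/\norm{\H'}$ one gets $M_{v,w} = \Delta_{v,w}^*\Delta_{w,v}$, so
\[
M = \frac1{\norm{\H'}}\,\H = \frac1{\norm{\H'}}\, A_{\tilde G_P(0^N)}\enspace .
\]
Since $\norm{\H} \leq \norm{\H'} = O(1)$ (bounded vertex degrees and edge weights in $\tilde G_P(0^N)$), all eigenvalues $\lambda_\alpha$ of $M$ lie in $[-1,1]$, and \thmref{t:szegedization} lifts each $\lambda_\alpha$-eigenvector $\ket{\lambda_\alpha}$ of $M$ to eigenvectors of $U = (2\Pi-1)S$ with eigenvalues $\beta_{\alpha,\pm} = \lambda_\alpha \pm i\sqrt{1-\lambda_\alpha^2} = e^{\pm i\arccos\lambda_\alpha}$, hence to eigenvectors of $U_{0^N} = iU$ with eigenvalues $i\,e^{\pm i\arccos\lambda_\alpha}$. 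Using $\arccos\lambda_\alpha = \frac\pi2 - \arcsin\lambda_\alpha$ this pair is $\{-e^{-i\arcsin\lambda_\alpha},\,e^{+i\arcsin\lambda_\alpha}\}$, i.e.\ of the form $\pm e^{i\lambda}$ with $\abs{\lambda} = \arcsin\abs{\lambda_\alpha}$; at $\lambda_\alpha = 0$ it is exactly $\{+1,-1\}$.

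\textbf{Inserting the oracle.} Because $\tilde O_x\ket{\tilde v} = (-1)^{x_{i(v)}}\ket{\tilde v}$, which is a trivial phase except at leaves, $\tilde O_x$ preserves $\Range(\Pi)$ and so commutes with $2\Pi - 1$; thus $U_x = \tilde O_x U_{0^N}$ is again a coined walk of exactly the same form. Since each leaf $b_i$ has degree one in $G_P$, the sign flip at $b_i$ does not merely re-weight but, on the walk subspace $\mathcal R = \Range(\Pi) + S\,\Range(\Pi)$, effectively removes the input edge $(a_j,b_i)$ whenever the literal on it is true. Running the argument of \cite{ChildsReichardtSpalekZhang07andor} — the one new point, the complex edge phases, being already absorbed into \claimref{t:Hamiltoniancoindecomposition} — then puts the spectrum of $U_x$ on $\mathcal R$ in bijection with that of $A_{\tilde G_P(x)}$: an eigenvalue-$\mu$ eigenvector $\ket\psi$ of $A_{\tilde G_P(x)}$ lifts to eigenvectors of $U_x$ with eigenvalues $\pm e^{\pm i\arcsin(\mu/\norm{\H'})}$ and overlap with $\ket{a_O,b_O}$ proportional to $\braket{a_O}{\psi}\,\Delta_{a_O,b_O} + \beta\braket{b_O}{\psi}\,\Delta_{b_O,a_O}$. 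Here $a_O$ has degree one, so its row of $\Delta$ is the single unit-modulus entry $\Delta_{a_O,b_O}$, while $\abs{\Delta_{b_O,a_O}} = w/\norm{\H'} = O(1/\sqrt{\ADV(\varphi)})$ is tiny; hence this overlap equals a fixed multiple of $\braket{a_O}{\psi}$ up to an $O(w)$ term, so it is nonzero iff $\ket\psi$ is supported on $a_O$ or $b_O$, and is $\Omega(1)$ whenever $\abs{\braket{a_O}{\psi}} = \Omega(1)$. (Also $\ket{a_O,b_O} \propto \ket{\widetilde{a_O}} \in \mathcal R$, so any $U_x$-eigenstate with support on $\ket{a_O,b_O}$ has a nonzero $\mathcal R$-component, itself an eigenstate with the same eigenvalue; so it is enough to work inside $\mathcal R$.)

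\textbf{Concluding.} If $\varphi(x) = 1$, \thmref{t:formulaspectrum} gives a normalized $\mu = 0$ eigenvector $\ket\psi$ of $A_{\tilde G_P(x)}$ with $\abs{\braket{a_O}{\psi}}^2 = \Omega(1)$ (and $\braket{b_O}{\psi} = 0$, since $\ket\psi$ satisfies the eigenvector equation $\mu\, a_O = w\, b_O$ at $\mu = 0$); its two lifts, from $\beta = +i$ and $\beta = -i$, are eigenstates of $U_x$ with eigenvalues $-1$ and $+1$ respectively, each with $\Omega(1)$ support on $\ket{a_O,b_O}$ after normalization. If $\varphi(x) = 0$, suppose for contradiction that $U_x$ had an eigenstate supported on $\ket{a_O,b_O}$ with eigenvalue $\pm e^{i\lambda}$ and $\abs{\lambda} \leq \arcsin\!\big(\frac1{\norm{\H'}}\frac{\epsilon}{\ADV(\varphi)}\big)$; pulling it back through the bijection of the previous step yields an eigenvalue-$\mu$ eigenvector of $A_{\tilde G_P(x)}$ supported on $a_O$ or $b_O$ with $\abs{\mu} = \norm{\H'}\,\abs{\sin\lambda} \leq \epsilon/\ADV(\varphi)$, contradicting the second bullet of \thmref{t:formulaspectrum}. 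Finally $\arcsin\!\big(\frac1{\norm{\H'}}\frac{\epsilon}{\ADV(\varphi)}\big) = \Omega(1/\ADV(\varphi))$ because $\norm{\H'} = O(1)$.

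The main obstacle is the middle step: verifying that $U_x = \tilde O_x U_{0^N}$ genuinely realizes the Szegedy walk of the \emph{edge-deleted} graph $\tilde G_P(x)$ rather than some re-weighted variant, and controlling the normalization between the discrete-walk eigenvector and the continuous eigenvector. This is precisely where the gadget design is used — degree-one leaves, the NOT gadgets of \defref{t:notgatedef} that keep the $b_I$ at odd distance from $a_O$, and the reduced output-edge weight $w$ — and it goes through exactly as in \cite{ChildsReichardtSpalekZhang07andor}. Everything else is elementary trigonometry relating $\arccos$, $\arcsin$, and the overall factor $i$ in $U_{0^N} = iU$.
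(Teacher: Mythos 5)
Your proposal is correct and follows essentially the same route as the paper: apply Szegedy's correspondence (\thmref{t:szegedization}) to identify the walk matrix $M$ with $A_{\tilde G_P(x)}/\norm{\H'}$ (the oracle phase at the degree-one leaves realizing the edge deletion, via what the paper phrases as making each true-literal leaf a ``probability sink'' with $\ket{\tilde b_i}=\ket{b_i,b_i}$), then lift the zero eigenvector through $(1\pm iS)T$ for completeness and pull back through $T$ and $ST$ for soundness. Your overlap computation via $\Delta_{a_O,b_O}$ and $\Delta_{b_O,a_O}$ is in fact slightly more explicit than the paper's, which simply uses $T\ket{a_O}=\ket{a_O,b_O}$ and the norm bound $\norm{(1\pm iS)T\ket{\lambda_\alpha}}\le 2$.
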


\begin{proof}
Note that for an input vertex $b_i$ on a span program input edge $(a_i, b_i)$, the $b_i$th row of $\Delta$ is $\bra{b_i} \Delta = \bra{a_i}$. 
Define $\Delta(x)$ as follows:  If $x_i = 1$, then let $\bra{b_i} \Delta(x) = \bra{b_i}$ (i.e., in the classical walk formulation, make $b_i$ a probability sink), and let the other rows of $\Delta(x)$ be the same as those of $\Delta = \Delta(0^N)$.  

In the notation of \thmref{t:szegedization} with each $\delta_{vw}$ set to the $(v,w)$ entry of $\Delta(x)$, the vectors $\ket{\tilde v}$ do not depend on $x$ if $v \notin \{b_i\}_{i \in I}$, whereas 
\[
\ket{\tilde b_i} = \left\{ \begin{array}{cc} \ket{b_i, a_i} & \text{if $x_i = 0$} \\ \ket{b_i, b_i} & \text{if $x_i = 1$} \end{array} \right.
\]
Therefore, in $M = \Delta(x)^\adjoint \circ \Delta(x)$, entries $(a_i, b_i)$ and $(b_i, a_i)$ are zeroed out when $x_i = 1$, while other entries are unchanged: so $M = \frac{1}{\norm{\H'}} A_{\tilde G_P(x)}$.  Also, on $R$, $i U = i (2\Pi -1)S$ is the same as $U_x$.  So \thmref{t:szegedization} implies that the spectrum of $U_x = \tilde{O}_x U_{0^N}$ corresponds exactly to that of $A_{\tilde G_P(x)} / \norm{\H'}$.  If the eigenvalues of $A_{\tilde G_P(x)} / \norm{\H'}$ are $\{ \lambda_\alpha \}$, then the eigenvalues of $U_x$ are given by $i \beta_{\alpha,\pm} = \pm \sqrt{1 - \lambda_\alpha^2} + i \lambda_\alpha$, i.e., $e^{i \arcsin \lambda_\alpha}$ and $-e^{- i \arcsin \lambda_\alpha}$.  

In case $\varphi(x) = 1$, \thmref{t:formulaspectrum} promises that $A_{\tilde G_P(x)}$ has an eigenvalue-zero eigenstate with $\Omega(1)$ support on $a_O$.  Denote this eigenstate by $\ket{\lambda_\alpha = 0}$.  By \thmref{t:szegedization}, $(1 \pm i S) T \ket{\lambda_\alpha = 0}$ are eigenstates of $U_x$ with eigenvalues $\pm 1$.  
Since $T \ket{a_O} = \ket{a_O, b_O}$, the eigenvectors $(1 \pm i S) T \ket{\lambda_\alpha = 0}$ each have $\Omega(1)$ support on $\ket{a_O, b_O}$.   
Moreover, this remains true even after renormalizing: $T$ is an isometry, while the swap $S$ is unitary, so $\norm{ (1 \pm i S) T \ket{\lambda_\alpha = 0} } \leq 2$. 

The claim also follows for the case $\varphi(x) = 0$ by Theorems~\ref{t:formulaspectrum} and~\ref{t:szegedization}.  
Every eigenstate of $i U$ with support on $\ket{a_O, b_O} = T \ket{a_O}$ must be of the form $(1 + \beta_{\alpha,\pm}S)T \ket{\lambda_\alpha} = (1 + (\lambda_\alpha \pm i \sqrt{1-\lambda_\alpha^2}) S)T \ket{\lambda_\alpha}$.  The terms which can overlap $T \ket{a_O}$ are either $\braket{a_O}{\lambda_\alpha}$ (via $T$) or $\braket{b_O}{\lambda_\alpha}$ (via $S T$).  But by \thmref{t:formulaspectrum}, both coefficients must be zero.  
Note that $\norm{\H'} = O(1)$ since the graph $\tilde G_P(0^N)$ has vertex degrees and edge weights all $O(1)$.  
Therefore, the spectral gap from zero of $A_{\tilde G_P(x)} / \norm{\H'}$ is only a constant factor worse than that of $A_{\tilde G_P(x)}$.
\end{proof}

\subsubsection{Algorithm \texorpdfstring{$\algorithm$}{A}, correctness, and query and time complexity}

\begin{center}
\fbox{
\begin{minipage}[l]{5.5in}
\noindent {\bf Algorithm $\algorithm$:} 
Input $x \in \{0,1\}^N$, Output true/false.
\begin{enumerate}
\item
Prepare an initial state on the output edge $\ket{a_O, b_O}$.
\item
Run phase estimation on $U_x = \tilde{O}_x U_{0^N}$, with precision $\delta_p \leq \arcsin\!\big(\frac1{\norm{\H'} } \, \frac \epsilon {\ADV(\varphi)} \big)$ and small enough constant error rate $\delta_e$.  
\item
Output true if the measured phase is $0$ or $\pi$.  Otherwise output false.
\end{enumerate} 
\end{minipage} }
\end{center}

\noindent {\bf{Correctness:}}
\lemref{t:szegedizationapplied} implies that $\algorithm$ is both complete and sound:
\begin{itemize}
\item
If $\varphi(x) = 1$, then $U(x)$ has eigenvalue-($\pm 1$) eigenstates each with $\Omega(1)$ squared support on $\ket{a_O,b_O}$.  The completeness parameter is at least this squared support minus the phase estimation error rate $\delta_e$.  For small enough constant $\delta_e$, the completeness is $\Omega(1)$.   
\item
If $\varphi(x) = 0$, then since the precision parameter $\delta_p$ is smaller than the promised gap away from $\pm 1$ in \lemref{t:szegedizationapplied}, phase estimation will output $0$ or $\pi$ only if there is an error.  By choosing the error rate $\delta_e$ a small enough constant, the soundness error $\delta_e$ will be bounded away from the completeness parameter.  
\end{itemize}
Therefore, algorithm $\algorithm$ is correct.  The constant gap between its completeness and soundness parameters can be amplified as usual.  

\smallskip\noindent {\bf{Query and time complexity:}} 
Phase estimation of $U_x$ with precision $\delta_p$ and error rate $\delta_e$ requires $O(1/(\delta_p \delta_e))$ calls to $U_x = \tilde{O}_x U_{0^N}$~\cite{cemm:qalg}.  Therefore, $\algorithm$ makes $O(\ADV(\varphi))$ queries to the input oracle $O_x$.

The time-efficiency claim of \thmref{t:result} is slightly more complicated.  Here, we need to allow a preprocessing phase in which the algorithm can compute $A_{\tilde G_P(0^N)}$ and in particular (approximations to) the coin diffusion operators in $U_{0^N}$.  This preprocessing depending on $\varphi$, but not $x$, takes $\poly(N)$ time.
The algorithm then needs coherent access to the precomputed information in order to apply efficiently the coin diffusion operators.  For further details, see~\cite{ChildsReichardtSpalekZhang07andor}.  

\smallskip
This completes the proof of \thmref{t:result}.  \hspace{\stretch{1}}$\square$

\section{Table of functions on up to four bits}

\renewcommand{\H}[1]{\hyperref{}{#1}{}{\ensuremath{\# #1}}}

Which four-bit gates can be added to the gate set $\gateset$ without affecting the correctness of \thmref{t:wsizegates} or \thmref{t:result}?  As summarized in \secref{s:fourbitgates}, we have made partial progress toward answering this question.  In this appendix, we present a table that lists all the four-bit gates, up to equivalences, and says what we know for each gate.  

\begin{figure}
\begin{center}
\def\cs{\hspace{6pt}}
\begin{tabular}{c | *{15}{c@{\cs}} c}
\hline \hline
x & 0000 & 0001 & 0010 & 0011 & 0100 & 0101 & 0110 & 0111 & 1000 & 1001 & 1010 & 1011 & 1100 & 1101 & 1110 & 1111 \\
$x_1 \wedge \overline{x_3}$ & 0 & 0 & 0 & 0 & 0 & 0 & 0 & 0 & 1 & 1 & 0 & 0 & 1 & 1 & 0 & 0 \\
$x_1 \wedge x_2$ & 0 & 0 & 0 & 0 & 0 & 0 & 0 & 0 & 0 & 0 & 0 & 0 & 1 & 1 & 1 & 1 \\
\hline \hline
\end{tabular}
\end{center}
\caption{Function numbering scheme.  We number each four-bit function by considering its sixteen-bit truth table as the binary representation of an integer.  For example, $f(x_1, x_2, x_3, x_4) = x_1 \wedge \overline{x_3}$ is numbered $204 = 4+8+64+128$, while $x_1 \wedge x_2$ is numbered $15 = 1+2+4+8$.  These two functions are equivalent, so in the table below we only list number $15$, the smaller of the two.} \label{f:functionnumberingscheme}
\end{figure}

\begin{definition}
Two $k$-bit boolean functions $f_1, f_2 : \{0, 1\}^k \rightarrow \{0, 1\}$ are equivalent if there exists a string $y \in \{0, 1\}^k$ and a $k$-element permutation $\sigma \in S_k$ such that either for all $x \in \{0, 1\}^k$ $f_1(x) = f_2(\sigma(x) \oplus y)$, or for all $x \in \{0, 1\}^k$ $f_1(x) = \neg f_2(\sigma(x) \oplus y)$.  Here for a bit string $x = x_1 x_2 \ldots x_k \in \{0, 1\}^k$, $\sigma(x) \in \{0, 1\}^k$ is defined as $x_{\sigma^{-1}(1)} x_{\sigma^{-1}(2)} \ldots x_{\sigma^{-1}(k)}$.
\end{definition}

That is, two functions are
equivalent if they differ by permuting the inputs, and complementing a subset of the input bits and output bit.  There are 222 inequivalent four-bit functions.  We number each function for reference by considering its sixteen-bit truth table as the binary representation of an integer as in \figref{f:functionnumberingscheme}.  

The table begins with the fourteen inequivalent functions of \figref{f:threebitgates} that depend on at most three input bits.  The 208 functions that depend on all four input bits are first sorted according to their polynomial degree~\cite{Ambainis06polynomial} and then by their non-negative-weight and general adversary bounds, $\ADV$ and $\ADV^\pm$.  If these two bounds are equal, then we state only the first of the two.  The numerical adversary bounds were taken from \cite{HoyerLeeSpalek07negativeadvurl}.  

For each function, we give its $\{ \text{AND, OR, NOT} \}$ formula size.  Note that by \lemref{t:parityorwsize}, a function of formula size $k$ has a trivial span program with witness size $\sqrt{k}$.  The table includes comments on some of the functions.  These comments might include, for example, the smallest-size $\{ \text{AND, OR, NOT} \}$ formula for an equivalent function.  
For some of the functions, e.g., $\H{7128}$, we have listed in the Status column the witness size for the best span program we have found that computes the function.  
Those functions for which we know an optimal span program, with witness size matching the adversary bound, are marked with a checkmark ($\checkmark$), together with a brief justification.  For example, a reference to \lemref{t:parityorwsize} means that the optimal span program follows from composing two simpler span programs based on an OR or PARITY gate.  The lower bound for such functions follows from a simple fact on composition of adversary lower bounds: 

\begin{lemma} \label{t:parityoradv}
For two functions $f_1, f_2$ on disjoint inputs, $\ADV(f_1 \oplus f_2) = \ADV(f_1) + \ADV(f_2)$, and $\ADV(f_1 \vee f_2) = \sqrt{\ADV(f_1)^2 + \ADV(f_2)^2}$.  
\end{lemma}

\noindent
For a function $f$ marked ``opt.~NAND," the optimal span program comes from optimizing the edge weights of the tree corresponding to a minimal-size NAND formula computing $f$.  

Listed below the table are the optimal adversary matrices (\defref{t:adversarydef}) and optimal span programs for all the four-bit functions for which we know an optimal span program, except when one bound or the other follows from other stated results, e.g., from \lemref{t:parityoradv} or \lemref{t:parityorwsize}.  In a Mathematica file included as an electronic supplement to this article, we have included code to verify the adversary bound and span program witness size calculations.

For details on any of the span programs referenced with witness size that does not match the adversary lower bound, e.g., for function $\H{7128}$, please contact one of the authors.  

\def\C{$\checkmark$}
\setlength\LTleft{0pt} 
\setlength\LTright{0pt}
\renewcommand{\AA}[1]{#1&--}
\newcommand{\N}[1]{\raisebox{15pt}[0pt][0pt]{\hyperdef{#1}{}{}}#1}	

\vspace{.5in}	
Functions depending on up to three input bits:
\LTXtable{\columnwidth}{f_3bits.tex}

Functions depending on four input bits, with polynomial degree either two or three:
\LTXtable{\columnwidth}{f_4bits_degree3.tex}

Functions depending on four input bits, with polynomial degree four:
\LTXtable{\columnwidth}{f_4bits_degree4.tex}

\smallskip

We now list optimal adversary matrices and optimal span programs for the four-bit functions for which we know an optimal span program.  For the adversary matrices, we include only the rows and columns with nonzero entries.  

Several of the four-bit functions, for example $\H{287}$, $\MAJ(x_1, x_2, x_3 \wedge x_4)$, or $\H{1647}$, $\MAJ(x_1, x_2, x_3 \oplus x_4)$, are compositions of a three-bit function with a two-bit function.  Therefore, to consolidate cases and increase the generality of our results, it will sometimes be convenient to give span programs for three-bit functions with unbalanced input complexities.  \lemref{t:maj3unbalanced}, for example, implies optimal span programs for both functions $\H{287}$ and $\H{1647}$.  For the adversary lower bound, it is also convenient to define the general adversary bound with costs~\cite{HoyerLeeSpalek07negativeadv, HoyerLeeSpalek05compose}:

\begin{definition}[Adversary bound with costs] \label{t:adversaryboundwithcostsdef}
Let $f: \{0,1\}^k \rightarrow \{0,1\}$, and let $\alpha \in {\bf R}_+^n$ be a vector of positive reals.  Define 
\begin{equation} \label{e:adversaryboundwithcostsdef}
\ADV_\alpha^\pm(f) = \max_{\substack{\Gamma \neq 0}} \frac { \norm{\Gamma} }
  { \max_i \frac1{\alpha_i} \norm{ \Gamma \circ D_i } }
 \enspace ,
\end{equation}
where the matrices $D_i$ are defined as in \defref{t:adversarydef}.  
The maximum is over all $2^k \times 2^k$ nonzero, symmetric matrices $\Gamma$ satisfying $\bra{x} \Gamma \ket{y} = 0$ if $f(x) = f(y)$.  
\end{definition}

This weighted version of the adversary bound composes nicely, as shown by the following generalization of \thmref{t:weakadversarycomposition}, still a special case of \cite[Theorem~13]{HoyerLeeSpalek07negativeadv}: \comment{Check final version theorem no.}

\begin{theorem}[{\cite{HoyerLeeSpalek07negativeadv}}] \label{t:adversarycomposition}
Let $f = g \circ (h_1, \ldots, h_k)$ and let $\alpha = (\ADV^\pm(h_1), \ldots, \ADV^\pm(h_k))$.  Then $\ADV^\pm(f) \geq \ADV_\alpha^\pm(f)$.  
\end{theorem}

\noindent
\cite[Theorem~12]{HoyerLeeSpalek07negativeadv} shows that the weighted version of the \emph{nonnegative} adversary bound composes exactly, i.e., the $\geq$ sign can be replaced with equality, but equality is not known to hold for the general adversary bound.

\thmref{t:adversarycomposition} lets us compute the adversary bound for the four-bit function $\H{287}$, $\MAJ(x_1, x_2, x_3 \wedge x_4)$, for example, by considering $2^3 \times 2^3$ adversary matrices for the three-bit majority function with costs $\alpha = (1, 1, \sqrt{2})$.  

\def\operatorname#1{\mathop{\mathrm{#1}}\nolimits}  

\subsection{Function \texorpdfstring{\H{831}}{$\#831$}, {$\MAJ$}, with partly unbalanced inputs:}


If the first two inputs have equal costs $1$ and the third input has cost $\beta$, then the optimal adversary matrix for function $\H{831}$, $\MAJ$, comes from 
\[
\Gamma = \;
\setcounter{sprows}{3}
\setcounter{spcols}{3}
\setlength{\spheight}{6pt * \value{sprows} + 4pt}
\setlength{\spraise}{-16pt}
\begin{array}{r @{} r @{} *{\value{spcols}}{@{\ }c} @{} l}
	&\spleft[4pt]{}&011&101&110&\spright{}\\
100	&&0	&\beta&1	\\
010	&&\beta&0&1	\\
001	&&1	&1&0	
\end{array}
\]
The adversary bound is $\ADV_{(1, 1, \beta)}^\pm(\MAJ) = \tfrac12 \big(\sqrt{8 + \beta^2} + \beta \big)$.  
An optimal span program for this function with equal first two input complexities is given in \lemref{t:maj3unbalanced}.  Note that this also settles the complexities of functions $\H{287}$ and $\H{1647}$.  

\subsection{Function \texorpdfstring{\H{975}}{$\#975$} with partly unbalanced inputs:}

Function $\H{975}$ is $(x_3 \wedge x_2) \vee (\overline{x_3} \wedge x_1)$.  If the first two inputs have equal costs $1$ and the third input has cost $\beta$, then the optimal adversary matrix comes from 
\[
\Gamma = \;
\setcounter{sprows}{2}
\setcounter{spcols}{2}
\setlength{\spheight}{6pt * \value{sprows} + 4pt}
\setlength{\spraise}{-16pt}
\begin{array}{r @{} r @{} *{\value{spcols}}{@{\ }c} @{} l}
	&\spleft[4pt]{}&010&101&\spright{}\\
100	&&1	&\beta	\\
011	&&\beta&1	
\end{array}
\]
The adversary bound is $\ADV_{(1,1,\beta)}^\pm(\H{975}) = \beta + 1$.  A span program with matching witness size is 
\[
\setcounter{sprows}{3}
\setlength{\spheight}{6pt * \value{sprows} + 4pt}
\setlength{\spraise}{6pt}
\begin{array}{r@{}c@{}l r@{}*{3}{c@{\ }}c@{}l}
& & & X_J = ( & \{x_1\} & \{x_3\} & \{\overline{x_3}\} & \{x_2\} & ) \\
\noalign{\smallskip}
\spleft{t=} & 1 & \spright{,} & \spleft[2pt]{v_J=}
             & 0 & 1 & 1 & 0 & \spright{\enspace .} \\
& 0 & & & 1 & 1 & 0 & 0 & \\
& 0 & & & 0 & 0 & 1 & 1 & 
\end{array}
\]
This function can be evaluated as an ``if-then-else" statement, by evaluating the third input $x_3$ at cost $\beta$ and then one of $x_1$ or $x_2$ at cost 1.  A span program with witness size $\beta + 1$ is thus not surprising.  
Note that this also settles the complexities of functions $\H{495}$ and $\H{1695}$.  

\comment{For $\MAJ$, all three input bits were symmetrical, so $\ADV_{(1, 1, \beta)}^\pm(\MAJ) = \ADV_{(1, \beta, 1)}^\pm(\MAJ) = \ADV_{(\beta, 1, 1)}^\pm(\MAJ)$.  That will also hold for functions $\H{960}$ and $\H{828}$ below.  However, for this function, $\H{975}$, the third input bit is not symmetrical to the first two, and we do not have an expression for $\ADV_{(1, \beta, 1)}^\pm(\H{975})$.  It may be that the adversary and nonnegative adversary bounds differ in this case, for $\beta \neq 1$.  TODO: Using sedumi, compute adversary and nonnegative adversary bounds for this case.}

\subsection{Function \texorpdfstring{\H{960}}{$\#960$}, $\EQUAL_3$, with partly unbalanced inputs:}

For function $\H{960}$, $\EQUAL_3$, the adversary bound when the first two inputs have equal costs $1$ and the third input has cost $\beta$ is 
\[
\ADV_{(1,1,\beta)}^\pm(\EQUAL_3) = \left\{\begin{array}{c l} 
\beta + \sqrt{2-\beta^2} & \text{if $0 < \beta \leq \sqrt{2/5}$} \\
\sqrt{\frac{3}{2}(2+\beta^2)} & \text{if $\sqrt{2/5} \leq \beta \leq 2$} \\
\beta + 1 & \text{if $\beta \geq 2$}
\end{array}\right . \enspace .
\]
Indeed, optimal adversary matrices are
\[
\setcounter{sprows}{2}
\setcounter{spcols}{2}
\setlength{\spheight}{6pt * \value{sprows} + 4pt}
\setlength{\spraise}{-16pt}
\begin{array}{r @{} r @{} *{\value{spcols}}{@{\ }c} @{} l}
	&\spleft[4pt]{}&000&111&\spright{}\\
001	&&\beta	&1		&\\
110	&&1		&\beta	&
\end{array}
\enspace ,
\qquad\qquad
\setcounter{sprows}{6}
\setcounter{spcols}{2}
\setlength{\spheight}{6pt * \value{sprows} + 4pt}
\setlength{\spraise}{-16pt}
\begin{array}{r @{} r @{} *{\value{spcols}}{@{\ }c} @{} l}
	&\spleft[4pt]{}&000&111&\spright{}\\
100	&&2\alpha	&\alpha	&\\
010	&&2\alpha	&\alpha	&\\
001	&&2		&1		&\\
011	&&\alpha	&2\alpha	&\\
101	&&\alpha	&2\alpha	&\\
110	&&1		&2		&
\end{array}
\enspace
\qquad\text{and}\qquad
\setcounter{sprows}{4}
\setcounter{spcols}{2}
\setlength{\spheight}{6pt * \value{sprows} + 4pt}
\setlength{\spraise}{-16pt}
\begin{array}{r @{} r @{} *{\value{spcols}}{@{\ }c} @{} l}
	&\spleft[4pt]{}&000&111&\spright{}\\
100	&&\sqrt{2-\beta^2}	&\beta	&\\
010	&&\sqrt{2-\beta^2}	&\beta	&\\
011	&&\beta			&\sqrt{2-\beta^2}	&\\
101	&&\beta			&\sqrt{2-\beta^2}	&
\end{array}
\]
for the ranges $0 < \beta \leq \sqrt{2/5}$, $\sqrt{2/5} \leq \beta \leq 2$ and $\beta \geq 2$, respectively, where $\alpha = \sqrt{\frac{4 - \beta^2}{5\beta^2 - 2}}$.

The optimal span program is 
\[
\setcounter{sprows}{3}
\setlength{\spheight}{6pt * \value{sprows} + 4pt}
\setlength{\spraise}{6pt}
\begin{array}{r@{}c@{}l r@{}*{3}{c@{\ }}c@{}l}
& & & X_J = ( & \{x_3\} & \{x_1, x_2\} & \{\overline{x_1}, \overline{x_2}\} & \{\overline{x_3}\} & ) \\
\noalign{\smallskip}
\spleft{t=} & 1 & \spright{,} & \spleft[2pt]{v_J=}
             & 0 & 1 & 1 & 0 & \spright{\enspace ,} \\
& 0 & & & w & 1 & 0 & 0 & \\
& 0 & & & 0 & 0 & 1 & w & 
\end{array}
\]
where $w = \bigg( \frac{\beta + \sqrt{2-\beta^2}}{2(1-\beta^2)} \bigg)^{1/2}$ for $\beta \leq \sqrt{2/5}$, $w = 1/\sqrt{\beta}$ for $\sqrt{2/5} \leq \beta \leq 2$, and $w = 1/\sqrt{2}$ for $\beta \geq 2$.  

Note that this also settles the complexities of functions $\H{424}$, $\EQUAL_3(x_1, x_2, x_3 \wedge x_4)$, and $\H{1680}$, $\EQUAL_3(x_1, x_2, x_3 \oplus x_4)$.  

\subsection{Function \texorpdfstring{$\H{963}$}{$\#963$} with partly unbalanced inputs:}   

See \claimref{t:gwsize} for the balanced complexity case for function $\H{963}$, $(x_1 \wedge x_2 \wedge x_3) \vee (\overline{x_1} \wedge \overline{x_2})$.  If the first two, symmetrical inputs have equal costs $1$ and the third input has cost $\beta$, the the optimal adversary matrix comes from
\[
\Gamma = \;
\setcounter{sprows}{3}
\setcounter{spcols}{3}
\setlength{\spheight}{6pt * \value{sprows} + 4pt}
\setlength{\spraise}{-16pt}
\begin{array}{r @{} r @{} *{\value{spcols}}{@{\ }c} @{} l}
	&\spleft[4pt]{}&000&001&111&\spright{}\\
011	&&0	&1	&1	\\
101	&&0	&1	&1	\\
110	&&1	&0	&\beta	
\end{array}
\]
The adversary bound is $\ADV_{(1,1,\beta)}^\pm(\H{963}) = \frac1{\sqrt 2} \big( 5 + \beta^2 + \sqrt{9 + 2 \beta^2 + \beta^4} \big)^{1/2}$.  A span program with matching witness size is 
\[
\setcounter{sprows}{2}
\setcounter{spcols}{3}
\setlength{\spheight}{6pt * \value{sprows} + 4pt}
\setlength{\spraise}{6pt}
\begin{array}{r@{}c@{}l r@{}*{2}{c@{\ }}c@{}l}
& & & X_J = ( & \{x_1\} & \{x_2, x_3\} & \{\overline{x_2}, \overline{x_3}\} & ) \\
\noalign{\smallskip}
\spleft{t=} & 1 & \spright{,} & \spleft[2pt]{v_J=} & 0 & w_1 & 1 & \spright{\enspace ,} \\
& 0 & & & 1 & w_2 & 0 & 
\end{array}
\]
where $w_1 = \frac12 \big( 1 + \beta^2 + \sqrt{9 + 2\beta^2 + \beta^4} \big)^{1/2}$ and $w_2 = \frac1{\sqrt{2} \beta} \big( -3 + \beta^2 + \sqrt{9 + 2\beta^2 + \beta^4} \big)^{1/2}$.  Note that this also settles the complexities of functions $\H{2032}$, $\H{426}$ and $\H{1776}$. 

\subsection{Function \texorpdfstring{\H{828}}{$\#828$}, $\EXACT_\mathrm{2\, of\, 3}$, with partly unbalanced inputs:}

Function $\H{828}$ is $\text{EXACT}_\text{2 of 3}(x_1, x_2, x_3) = \MAJ(x_1, x_2, x_3) \wedge (\overline{x_1} \vee \overline{x_2} \vee \overline{x_3})$.  If the first two inputs have equal costs $1$ and the third input has cost $\beta$, then the optimal adversary matrix comes from 
\[
\Gamma = \;
\setcounter{sprows}{4}
\setcounter{spcols}{3}
\setlength{\spheight}{6pt * \value{sprows} + 4pt}
\setlength{\spraise}{-16pt}
\begin{array}{r @{} r @{} *{\value{spcols}}{@{\ }c} @{} l}
	&\spleft[4pt]{}&110&101&011&\spright{}\\
001	&&0		&1		&1		&\\
010	&&1		&0		&\beta	&\\
100	&&1		&\beta	&0		&\\
111	&&\beta	&1		&1		&
\end{array}
\]
The adversary bound is $\ADV_{(1, 1, \beta)}^\pm(\H{828}) = \big( 3 + \beta^2 + \sqrt{1+8 \beta^2} \big)^{1/2}$.  
A span program with matching witness size is
\[
\setcounter{sprows}{3}
\setcounter{spcols}{5}
\setlength{\spheight}{6pt * \value{sprows} + 4pt}
\setlength{\spraise}{6pt}
\begin{array}{r@{}c@{}l r@{}*{4}{c@{\ }}c@{}l}
& & & X_J = ( & \{\overline{x_1}, \overline{x_2}\} & \{\overline{x_3}\} & \{x_1\} & \{x_2\} & \{x_3\} & ) \\
\noalign{\smallskip}
\spleft{t=} & 1 & \spright{,} & \spleft[2pt]{v_J=} 
		&1		&0	&w_2	&w_2	&w_1/\sqrt{2}	& \spright{\enspace ,} \\
& 0 & & 	&0		&0	&i		&-i		&w_1		& \\
& 0 & & 	&w_1	&1	&0		&0		&0			&
\end{array}
\]
where $w_1^2 = \big( \sqrt{ 8 \beta^2 + 1 } - 1 \big) / (2 \beta)$ and $w_2^2 = 1 / \big( \sqrt{8 \beta^2 + 1} + 3 \big)$.  
Note that this also settles the complexities of functions $\H{1912}$, $\H{286}$ and $\H{1641}$.

\subsection{Function \texorpdfstring{\H{393}}{$\#393$}:}

The optimal adversary matrix and a matching span program for function $\H{393}$ are
\[
\setcounter{sprows}{4}
\setlength{\spheight}{6pt * \value{sprows} + 4pt}
\setlength{\spraise}{6pt}
\begin{array}{r @{} r @{} *{4}{@{\ }c} @{} l}
	&&0010&1101&0111&1011&\\
0001	&\spleft[4pt]{}&1&3	&0	&0	&\spright{}\\
1110	&&3	&1	&0	&0	&\\
0011	&&2\sqrt 2&0	&3	&3	&\\
1111	&&0	&2\sqrt 2	&3	&3	&
\end{array}
\qquad\qquad
\setcounter{sprows}{3}
\setlength{\spheight}{6pt * \value{sprows} + 4pt}
\setlength{\spraise}{6pt}
\begin{array}{r@{}c@{}l r@{}*{3}{c@{\ }}c@{}l}
& & & X_J = ( & \{x_1, x_2\} & \{x_3\} & \{\overline{x_1}, \overline{x_2}\} & \{x_4\} & ) \\
\noalign{\smallskip}
\spleft{t=} & 1 & \spright{,} & \spleft[2pt]{v_J=}
  & 1 & 0 & 1 & 0 & \spright{\enspace .} \\
& 0 & & & 1 & \sqrt{3/2} & 0 & 0 & \\
& 0 & & & 0 & 0 & 1 & \sqrt{3/2} & 
\end{array}
\]

\subsection{Function \texorpdfstring{\H{989}}{$\#989$}:}

The optimal adversary matrix and a matching span program for function $\H{989}$ are
\[
\setcounter{sprows}{3}
\setlength{\spheight}{6pt * \value{sprows} + 4pt}
\setlength{\spraise}{6pt}
\begin{array}{r @{} r @{} *{3}{@{\ }c} @{} l}
	&&0010&0111&1100&\\
0011	&\spleft[4pt]{}&1&1	&0	&\spright{}\\
0110	&&1	&1	&1	&\\
1101	&&0	&1	&1	&
\end{array}
\qquad\qquad
\setcounter{sprows}{3}
\setlength{\spheight}{6pt * \value{sprows} + 4pt}
\setlength{\spraise}{6pt}
\begin{array}{r@{}c@{}l r@{}*{5}{c@{\ }}c@{}l}
& & & X_J = ( & \{\overline{x_1}, x_2\} & \emptyset & \{\overline{x_2}\} & \{\overline{x_3}\} & \{ x_4 \} & \{ \overline{x_4} \} ) \\
\noalign{\smallskip}
\spleft{t=} & 1 & \spright{,} & \spleft[2pt]{v_J=} 
		&0		&1	&1		&1		&2^{1/4}	&1	& \spright{\enspace .} \\
& 0 & & 	&0 		&1	&2^{-1/4}	&2^{-1/4}	&1		&1	& \\
& 0 & & 	&2^{1/4}	&1	&0		&0		&0		&0	&
\end{array}
\]
\comment{Note that the entries in the first row should all be divided by $2^{1/4}-1$ in order to balance out the true and false span program complexities.  Unlike most span programs in this appendix, this one does not come from optimizing a NAND formula expansion.  Although this adversary matrix corresponds to the function computed by the span program, that function is only equivalent to $\H{989}$.}

\subsection{Function \texorpdfstring{\H{1968}}{$\#1968$}:}

The optimal adversary matrix and a matching span program for function $\H{1968}$ are
\[
\setcounter{sprows}{6}
\setlength{\spheight}{6pt * \value{sprows} + 4pt}
\setlength{\spraise}{-17pt}
\begin{array}{r @{} r @{} *{4}{@{\ }c} @{} l}
	&\spleft[4pt]{}&0000&1110&0011&1101&\spright{}\\
0010	&&3\sqrt 7&2\sqrt 7&3\sqrt 7	&0		\\
1100	&&2\sqrt 7&3\sqrt 7&0		&3\sqrt 7	\\
1001	&&5	&0	&2	&7	\\
0101&&5	&0	&2	&7	\\
0111&&0	&5	&7	&2	\\
1011&&0	&5	&7	&2	
\end{array}
\qquad
\setcounter{sprows}{3}
\setlength{\spheight}{22pt} 
\setlength{\spraise}{6pt}
\begin{array}{r@{}c@{}l r@{}*{5}{c@{\ }}c@{}l}
& & & X_J = ( & \{x_1, x_2\} & \{x_3\} & \{x_4\} & \{\overline{x_1}, \overline{x_2}\} & \{ \overline{x_3} \} & \{ x_4 \} ) \\
\noalign{\smallskip}
\spleft{t=} & 1 & \spright{,} & \spleft[2pt]{v_J=} & 1 & 0 & 0 & 1 & 0 & 0 & \spright{\enspace .} \\
& 0 & & & 1 & \sqrt{3}/2 & \sqrt{3}/2 & 0 & 0 & 0 & \\
& 0 & & & 0 & 0 & 0 & 1 & \sqrt{3}/2 & \sqrt{3}/2 &
\end{array}
\]

\subsection{Function \texorpdfstring{\H{1910}}{$\#1910$}:}

The optimal adversary matrix and a matching span program for function $\H{1910}$ 
are
\[
\setcounter{sprows}{4}
\setcounter{spcols}{7}
\setlength{\spheight}{6pt * \value{sprows} + 4pt}
\setlength{\spraise}{-16pt}
\begin{array}{r @{} r @{} *{\value{spcols}}{@{\ }c} @{} l}
	&\spleft[4pt]{}&0001&0010&0100&1000&0011&1100&1111&\spright{}\\
0111	&&0	&0	&1	&0	&1	&0	&1\\
1011	&&0	&0	&0	&1	&1	&0	&1\\
1101	&&1	&0	&0	&0	&0	&1	&1\\
1110&&0	&1	&0	&0	&0	&1	&1
\end{array}
\qquad
\setcounter{sprows}{1}
\setlength{\spheight}{6pt * \value{sprows} + 4pt}
\setlength{\spraise}{6pt}
\begin{array}{r@{}c@{}l r@{}*{2}{c@{\ }}c@{}l}
& & & X_J = ( & \{x_1, x_2, x_3, x_4\} & \{ \overline{x_1}, \overline{x_2}\} & \{\overline{x_3}, \overline{x_4}\} ) \\
\noalign{\smallskip}
\spleft{t=} & 1 & \spright{,} & \spleft[2pt]{v_J=} & \sqrt2 & 1 & 1 & \spright{\enspace .}
\end{array}
\]

\subsection{Function \texorpdfstring{\H{317}}{$\#317$}:}

The optimal adversary matrix for function $\H{317}$, $(x_1 \wedge x_2 \wedge x_3) \vee \MAJ(\overline{x_2}, \overline{x_3}, x_4)$, comes from 
\[
\Gamma = \;
\setcounter{sprows}{4}
\setcounter{spcols}{3}
\setlength{\spheight}{6pt * \value{sprows} + 4pt}
\setlength{\spraise}{-16pt}
\begin{array}{r @{} r @{} *{\value{spcols}}{@{\ }c} @{} l}
	&\spleft[4pt]{}&0111&1010&1100&\spright{}\\
1000	&&0	&1	&1\\
0011	&&1	&1	&0\\
0101	&&1	&0	&1\\
1110&&1	&1	&1
\end{array}
\]
A matching span program follows from \lemref{t:parityorwsize} applied to the span programs for AND$_3$ and $\MAJ$.  

\subsection{Function \texorpdfstring{\H{5790}}{$\#5790$}:}

The optimal adversary matrix and a matching span program for function $\H{5790}$ are 
\[
\setcounter{sprows}{4}
\setcounter{spcols}{4}
\setlength{\spheight}{6pt * \value{sprows} + 4pt}
\setlength{\spraise}{-16pt}
\begin{array}{r @{} r @{} *{\value{spcols}}{@{\ }c} @{} l}
	&\spleft[4pt]{}&0011&0100&1000&1111&\spright{}\\
0000	&&1	&1	&1	&0\\
0111	&&1	&1	&0	&1\\
1011	&&1	&0	&1	&1\\
1100&&0	&1	&1	&1
\end{array}
\qquad\quad
\setcounter{sprows}{3}
\setlength{\spheight}{6pt * \value{sprows} + 4pt}
\setlength{\spraise}{6pt}
\begin{array}{r@{}c@{}l r@{}*{5}{c@{\ }}c@{}l}
& & & X_J = ( & \{x_1\} & \{ x_2, \overline{x_3} \} & \{ \overline{x_2}, x_3 \} & \{ \overline{x_1} \} & \{ x_2, x_4 \} & \{ \overline{x_2}, \overline{x_3} \} & ) \\
\noalign{\smallskip}
\spleft{t=} & 1 & \spright{,} & \spleft[2pt]{v_J=} 		
		&1	&0	&0	&1	&0	&0	&\spright{\enspace .} \\
& 0 & & 	&1	&1	&1	&0	&0	&0	\\
& 0 & & 	&0	&0	&0	&1	&1	&1
\end{array}
\]

%

\subsection{Function \texorpdfstring{\H{385}}{$\#385$}:}

The optimal adversary matrix and a matching span program for function $\H{385}$ are
\[
\setcounter{sprows}{4}
\setcounter{spcols}{3}
\setlength{\spheight}{6pt * \value{sprows} + 4pt}
\setlength{\spraise}{-16pt}
\begin{array}{r @{} r @{} *{\value{spcols}}{@{\ }c} @{} l}
	&\spleft[4pt]{}&0111&1000&1111&\spright{}\\
0000	&&\sqrt2&\sqrt2&0	\\
1001	&&0	&\sqrt2&1	\\
1010	&&0	&\sqrt2&1	\\
1100&&0	&\sqrt2&1	
\end{array}
\qquad\qquad
\setcounter{sprows}{2}
\setlength{\spheight}{6pt * \value{sprows} + 4pt}
\setlength{\spraise}{6pt}
\begin{array}{r@{}c@{}l r@{}*{2}{c@{\ }}c@{}l}
& & & X_J = ( & \{x_2, x_3, x_4\} & \{x_1\} & \{\overline{x_2}, \overline{x_3}, \overline{x_4}\} & ) \\
\noalign{\smallskip}
\spleft{t=} & 1 & \spright{,} & \spleft[2pt]{v_J=} & 1 & 0 & \frac{(61+7\sqrt{73})^{1/4}}{2^{3/4}\sqrt{3}} & \spright{\enspace .} \\
& 0 & & & 0 & \frac{(49+5\sqrt{73})^{1/4}}{2^{3/4}\sqrt{3}} & 1
\end{array}
\]

\subsection{Function \texorpdfstring{\H{279}}{$\#279$}, $\text{Threshold}_\mathrm{2\,of\,4}$:}

The optimal adversary matrix for function $\H{279}$, $\text{Threshold}_\text{2 of 4}$, is 
\[
\Gamma = \;
\setcounter{sprows}{6}
\setcounter{spcols}{4}
\setlength{\spheight}{6pt * \value{sprows} + 4pt}
\setlength{\spraise}{-16pt}
\begin{array}{r @{} r @{} *{\value{spcols}}{@{\ }c} @{} l}
	&\spleft[4pt]{}&0111&1011&1101&1110&\spright{}\\
0011	&&1	&1	&0	&0\\
0101	&&1	&0	&1	&0\\
0110	&&1	&0	&0	&1\\
1100&&0	&0	&1	&1\\
1010&&0	&1	&0	&1\\
1001&&0	&1	&1	&0
\end{array}
\]
A matching span program was given in \exampleref{t:threshold2of4example}.  

\subsection{Function \texorpdfstring{\H{1639}}{$\#1639$}:}

The optimal adversary matrix and a matching span program for function $\H{1639}$, equivalent to the formula $(x_1 \wedge x_2) \vee \MAJ(\overline{x_1} \wedge \overline{x_2}, x_3, x_4)$, are

\[
\setcounter{sprows}{7}
\setcounter{spcols}{5}
\setlength{\spheight}{6pt * \value{sprows} + 4pt}
\setlength{\spraise}{-16pt}
\begin{array}{r @{} r @{} *{\value{spcols}}{@{\ }c} @{} l}
	&\spleft[4pt]{}&0000&0101&1001&0110&1010&\spright{}\\
0001	&&1	&1	&1	&0	&0	\\
0010	&&1	&0	&0	&1	&1	\\
0111	&&0	&1	&0	&1	&0	\\
1011&&0	&0	&1	&0	&1	\\
1100&&1	&0	&0	&0	&0	\\
1101&&0	&1	&1	&0	&0	\\
1110&&0	&0	&0	&1	&1	
\end{array}
\qquad\qquad
\setcounter{sprows}{2}
\setlength{\spheight}{6pt * \value{sprows} + 4pt}
\setlength{\spraise}{6pt}
\begin{array}{r@{}c@{}l r@{}*{3}{c@{\ }}c@{}l}
& & & X_J = ( & \{x_1, x_2\} & \{\overline{x_1}, \overline{x_2}\} & \{x_3\} & \{x_4\} & ) \\
\noalign{\smallskip}
\spleft{t=} & 1 & \spright{,} & \spleft[2pt]{v_J=} & 1 & \sqrt{5}/2 & 1/2 & 1/2 & \spright{\enspace .} \\
& 0 & & & 0 & 1 & i & -i
\end{array}
\]
This span program is based on the one in \lemref{t:maj3unbalanced}.  

\subsection{Function \texorpdfstring{\H{6014}}{$\#6014$}, $\EXACT_\mathrm{2\, or\, 3\, of \, 4}$:}

The optimal adversary matrix and a matching span program for function $\H{6014}$ are
\begin{gather*}
\Gamma = \;
\setcounter{sprows}{6}
\setcounter{spcols}{5}
\setlength{\spheight}{6pt * \value{sprows} + 4pt}
\setlength{\spraise}{-16pt}
\begin{array}{r @{} r @{} *{\value{spcols}}{@{\ }c} @{} l}
	&\spleft[4pt]{}&1000&0100&0010&0001&1111&\spright{}\\
0011	&&0	&0	&1	&1	&3^{-1/2}\\
0101	&&0	&1	&0	&1	&3^{-1/2}\\
0110	&&0	&1	&1	&0	&3^{-1/2}\\
1100&&1	&1	&0	&0	&3^{-1/2}\\
1010&&1	&0	&1	&0	&3^{-1/2}\\
1001&&1	&0	&0	&1	&3^{-1/2}
\end{array}
\\
\\
\setcounter{sprows}{4}
\setlength{\spheight}{6pt * \value{sprows} + 4pt}
\setlength{\spraise}{6pt}
\begin{array}{r@{}c@{}l r@{}*{11}{c@{\ }}c@{}l}
& & & X_J = ( & \{x_1\} & \{x_1\} & \{x_2\} & \{x_2\} & \{x_3\} & \{x_3\} & \{x_4\} & \{x_4\} & \{\overline{x_1}\} & \{\overline{x_2}\} & \{\overline{x_3}\} & \{\overline{x_4}\} & ) \\
\noalign{\smallskip}
\spleft{t=}	& 1 & \spright{,} & \spleft[2pt]{v_J=} 
				&0	&0	&0	&0	&0	&0	&0	&0	&2^{-1/4}	&2^{-1/4}	&2^{-1/4}	&2^{-1/4}	&\spright{\enspace .} \\
		& 0 & &	&1	&1	&1	&1	&1	&1	&1	&1	&\sqrt{3}	&\sqrt{3}	&\sqrt{3}	&\sqrt{3}	&\\
		& 0 & &	&1	&1	&1	&-1	&i	&-i	&i	&i	&0		&0		&0		&0		&\\
		& 0 & & 	&i	&-i	&i	&i	&1	&1	&1	&-1	&0		&0		&0		&0		&
\end{array}
\end{gather*}
This span program is based on the expansion $\EXACT_\text{2 or 3 of 4} = \text{Threshold}_\text{2 of 4}(x_1, x_2, x_3, x_4) \wedge (\overline{x_1} \vee \overline{x_2} \vee \overline{x_3} \vee \overline{x_4})$, and the span program for $\text{Threshold}_\text{2 of 4}$ $\H{279}$ from \exampleref{t:threshold2of4example}.  

\subsection{Function \texorpdfstring{\H{278}}{$\#278$}, $\EXACT_\mathrm{1\, of\, 4}$:}

The optimal adversary matrix for function $\H{278}$, $\EXACT_\text{1 of 4} = \text{Threshold}_\text{3 of 4}(\overline{x_1},\overline{x_2},\overline{x_3},\overline{x_4}) \wedge (x_1 \vee x_2 \vee x_3 \vee x_4)$, is 
\[
\Gamma = \;
\setcounter{sprows}{7}
\setcounter{spcols}{4}
\setlength{\spheight}{6pt * \value{sprows} + 4pt}
\setlength{\spraise}{-16pt}
\begin{array}{r @{} r @{} *{\value{spcols}}{@{\ }c} @{} l}
	&\spleft[4pt]{}&0111&1011&1101&1110&\spright{}\\
0011	&&1	&1	&0	&0	\\
0101	&&1	&0	&1	&0	\\
0110	&&1	&0	&0	&1	\\
1100&&0	&0	&1	&1	\\
1010&&0	&1	&0	&1	\\
1001&&0	&1	&1	&0	\\
1111&&1	&1	&1	&1	
\end{array}
\]
A span program with matching witness size $\sqrt{10}$ follows from \lemref{t:parityorwsize} applied to the span programs for OR$_4$ and $\text{Threshold}_\text{3 of 4}$ $\H{279}$.  

\subsection{Function \texorpdfstring{\H{5736}}{$\#5736$}, $\EXACT_\mathrm{2\, of\, 4}$:}

The optimal adversary matrix for function $\H{278}$, $\EXACT_\text{1 of 4} = \text{Thr.}_\text{2 of 4}(x_1,\ldots,x_4) \wedge \text{Thr.}_\text{2 of 4}(\overline{x_1},\ldots,\overline{x_4})$, is 
\[
\Gamma = \;
\setcounter{sprows}{6}
\setcounter{spcols}{8}
\setlength{\spheight}{6pt * \value{sprows} + 4pt}
\setlength{\spraise}{-16pt}
\begin{array}{r @{} r @{} *{\value{spcols}}{@{\ }c} @{} l}
	&\spleft[4pt]{}&1000&0100&0010&0001&0111&1011&1101&1110&\spright{}\\
0011	&&0	&0	&1	&1	&1	&1	&0	&0\\
0101	&&0	&1	&0	&1	&1	&0	&1	&0\\
0110	&&0	&1	&1	&0	&1	&0	&0	&1\\
1100&&1	&1	&0	&0	&0	&0	&1	&1\\
1010&&1	&0	&1	&0	&0	&1	&0	&1\\
1001&&1	&0	&0	&1	&0	&1	&1	&0
\end{array}
\]
A span program with matching witness size $\sqrt{12}$ follows from \lemref{t:parityorwsize} applied to the span program for $\text{Threshold}_\text{2 of 4}$ $\H{279}$.

\comments{
\item BUG: Hyperlinks are linking to the entry \emph{after} the correct entry, e.g., \H{15555}.  FIXED. -BR
\item Note that capitalization of Threshold$_\text{2 of 4}$ is inconsistent with EXACT$_\text{2 of 3}$.  It also is italicized and bolded differently (compare to $\EXACT$ or $\MAJ$ or $\EQUAL$).
\item For which functions should we include the minimum-size formula, e.g., $\H{5737}$?
\item Settle on format for $\ADV = \ADV^\pm$ entries in the table.
\item In the table for function $\H{963}$, we have permuted the bits to correspond to the stated formula.
\item I have not tried to balance out the true and false cases in these span programs.  Our code does it automatically.
}

\end{document}